\newcolumntype{C}{>{\centering\arraybackslash}X}
\newcommand{\lott}{\prob_\rms}
\newcommand{\usd}{\mathrel{\unrhd}}
\newcommand{\meu}{\textsc{meu}}
\newcommand{\Meu}{\textsc{Meu}}
\renewcommand{\wpr}{\succsim}
\title{Randomization and ambiguity perception}
\author{Yutaro Akita}
\address{Department of Economics, Pennsylvania State University}
\email{ytrakita@gmail.com}
\author{Kensei Nakamura}
\address{Graduate School of Economics, Hitotsubashi University}
\email{kensei.nakamura.econ@gmail.com}
\date{May 21, 2026}
\keywords{Ambiguity perception; randomization;
cognitive optimization; Machina's paradox}
\thanks{We are grateful to Kalyan Chatterjee and Yuhta Ishii
for their invaluable advice.
We also thank
\begin{itemize*}[nolabel]
  \item Nageeb Ali
  \item Mira Frick
  \item Ryota Iijima
  \item Shaowei Ke
  \item Philipp Sadowski
  \item Norio Takeoka
  \item Shohei Yanagita
\end{itemize*}
for their helpful comments.}
\begin{document}

\begin{abstract}
  Ambiguity-averse decision makers typically dislike
  not only the presence of ambiguous events but also their increase,
  contrary to what standard ambiguity models predict.
  We axiomatically study such a decision maker.
  She avoids ex ante randomization over prospects
  since it only increases the number of relevant ambiguous events
  without providing a hedge against uncertainty.
  Our axioms lead to a representation in which the decision maker behaves
  as if optimizing her ambiguity perception at a cost.
  We show the uniqueness of the representation, and
  conduct comparatives of attitudes toward ambiguity and its increase.
  This identification is not achieved
  without considering ex ante randomization.
\end{abstract}

\maketitle

\section{Introduction}\label{sec:intro}

Thought experiments introduced by \citet{Ellsberg1961} suggest that
ambiguity influences decisions.
The decision maker (\dm) usually dislikes
betting on an ambiguous event (i.e., one with an unknown probability).
Several models have been developed
to explain aversion to ambiguous prospects compared to unambiguous ones.
We examine behavioral implications of ``ambiguity aversion''
from another perspective,
based on the comparisons of
prospects with more ambiguous events that affect her final payoffs
to those with less such events.

The importance of this aspect is demonstrated by the ``reflection example''
of \citet{Machina2009}:
A box contains $100$ balls.
Out of the balls, $50$ are either red or blue, and
the other $50$ are either green or purple.
No information is provided on the exact number of balls of each color.
A ball is drawn at random from the box.
Consider bets $f_1$, \dots, $f_4$ on the color of the drawn ball,
as described in \cref{tab:ref}.
A natural preference pattern of an ``ambiguity-averse'' {\dm} is
to prefer $f_2$ over $f_1$ and $f_3$ over $f_4$,
which is also confirmed experimentally \citep{LHaridonPlacido2010}.
The intuition is as follows.
Ambiguity that is relevant to the payoff from $f_2$ (resp.~$f_3$)
is only about the relative likelihood of green and purple (resp.~red and blue).
In contrast, ambiguity about all the colors is relevant to $f_1$ and $f_4$.
Thus, $f_2$ and $f_3$ have less relevant ambiguous events than $f_1$ and $f_4$,
which lets the {\dm} avoid $f_1$ or $f_4$.
Nevertheless, most of the ambiguity-sensitive decision models preclude
this intuitive pattern \citep{Machina2009,BLP2011}.

\begin{table}[t]
  \centering
  \renewcommand{\arraystretch}{1.3}
  \begin{tabularx}{80mm}{CCCCC}\toprule
          & \multicolumn{2}{c}{$50$ balls} & \multicolumn{2}{c}{$50$ balls} \\
          \cmidrule(lr){2-3}\cmidrule(lr){4-5}
          & Red   & Blue  & Green  & Purple \\ \hline
    $f_1$ & $100$ & $200$ & $100$  & $0$    \\
    $f_2$ & $100$ & $100$ & $200$  & $0$    \\
    $f_3$ & $0$   & $200$ & $100$  & $100$  \\
    $f_4$ & $0$   & $100$ & $200$  & $100$  \\ \bottomrule
  \end{tabularx}
  \caption{The reflection example.}
  \label[table]{tab:ref}
\end{table}

This paper presents a decision model
that incorporates key insights from the reflection example:
the {\dm} dislikes not only the presence of ambiguity
but also the increase of relevant ambiguous events.
The literature has tied aversion to ambiguity
with a preference for (objective) randomization
\emph{after} the state realization,
which we refer to as \emph{ex post randomization}.
A popular manifestation is
\citets{Schmeidler1989} \textit{uncertainty aversion} axiom.
The idea is that ex post randomization smoothes out
payoff variability across states that the ambiguity-averse {\dm} dislikes.
In contrast, the attitude toward \emph{ex ante randomization}
(i.e., one \emph{before} the state realization)
plays a key role in our analysis.
We observe that ex ante randomization increases the relevant ambiguous events,
whereas it does not smooth payoffs.
This feature leads to aversion to ex ante randomization.

To illustrate, let $A$ be the event
that the temperature at Athens is greater than or equal to 20$^\circ$C, and
let $B$ be the corresponding event for the temperature at Beijing.
The {\dm} is unfamiliar with those places, and so $A$ and $B$ are ambiguous.
For each $E \in \{A, B\}$,
let $f_E$ be the bet on $E$ whose stakes are \$$100$ and \$$0$, and
let \$$c_E$ be the \dm's certainty equivalent of $f_E$.
Also, consider the certainty equivalent \$$c_P$
of the half-half lottery $P$ over $f_A$ and $f_B$.
Ambiguity in $f_A$ (resp.~$f_B$) that is relevant to the \dm's payoff is
only on the temperature at Athens (resp.~Beijing).
By contrast, in $P$, the temperatures at both Athens and Beijing are relevant.
Moreover, regardless of the realization of $P$,
the {\dm} will always face an ambiguous bet $f_A$ or $f_B$,
so $P$ does not provide a hedge against ambiguity.
Thus, $P$ only complicates the dependence of outcomes on ambiguous events,
which leads to $c_P \le \max \{c_A, c_B\}$.
Namely, ex ante randomization is not beneficial.

We formalize our intuition
by analyzing the \dm's preference on the domain of \citet{AnscombeAumann1963}:
the set of lotteries of acts---%
an act maps a state to a distribution of outcomes.
We provide an axiomatically foundation and an identification result
for a new decision model,
the \emph{costly ambiguity perception} model (\cref{def:cap}),
in which the {\dm} behaves as if optimizing her ambiguity perception at a cost.
The {\dm} has a von Neumann--Morgenstern (vNM) function $u$
over distributions of outcomes.
Ambiguity perceptions are represented by sets of priors and
are associated with costs.
Given an ambiguity perception $M$, the {\dm} evaluates each act $f$
according to the maxmin expected utility (\meu) model
of \citet{GilboaSchmeidler1989};
that is, the value of $f$ is $\min_{\mu \in M} \int u \cmpf f \diff \mu$.
The {\dm} chooses her ambiguity perception $M$ from a feasible set $\bM$
to maximize the associated expected {\meu} value minus the cost $c(M)$.
Her preference over lotteries of acts is represented by the function
\begin{equation}
  P \mapsto \max_{M \in \bM} \Bigl[
    \int \Bigl(\min_{\mu \in M} \int u \cmpf f \diff \mu\Bigr) \diff P(f)
    - c(M)\Bigr].
\end{equation}
Thus, the \dm's ambiguity perception is endogenously chosen at each lottery,
whereas it is fixed in the {\meu} model.
The maximization reflects the \dm's aversion
to an increase in relevant ambiguity:
her tuning of ambiguity perception is easier with less ambiguous events.
\cref{ex:ref} in \cref{subsec:rep} confirms this point
by showing that the costly ambiguity perception model explains
the typical pattern in the reflection example under a natural assumption.
We interpret the maximization as the \dm's mental process,
such as contemplation, belief distortion, or ambiguity-attitude adjustment.

The first main result (\cref{thm:rep}) is an axiomatic characterization.
Our key axioms control the attitude toward ex ante randomization.
\axmref*{axm:eaar} formalizes the observation in the temperature example above:
the {\dm} is averse to ex ante randomization
since it increases the number of relevant ambiguous events.
\axmref*{axm:ica} postulates that
her ranking is independent of (ex ante) mixing with constant acts
when the mixing weights are kept unchanged.
To elaborate, using the same notation as before,
for each $E \in \{A, B\}$,
let $Q^x_E$ be the lottery that yields $f_E$ with probability $\lambda$ and
\$$x$ with probability $1 - \lambda$.
The only relevant ambiguity in $Q^x_E$ is on event $E$,
which is independent of $x$.
Thus, as long as the weight $\lambda$ on $f_E$ is the same,
the particular value of $x$ should not affect the \dm's comparison
between $Q^x_A$ and $Q^x_B$.
This idea motivates \axmref{axm:ica},
which requires that if $Q^x_A$ is preferred to $Q^x_B$,
then $Q^y_A$ is preferred to $Q^y_B$ for any other monetary payoff \$$y$.
In addition to the two key axioms and other standard requirements,
we impose two axioms (\axmref{axm:aepr} and \axmref{axm:irtc})
that deal with the attitude toward the timing of randomization.

We explore in \cref{subsec:id} the extent to which
the parameters of the costly ambiguity perception model are identifiable.
Our focus is on the uniqueness of the costs and feasible sets
of ambiguity perceptions.
Our second main result (\cref{thm:id}) establishes that
the costs are essentially unique in the class of convex cost structures.
Also, we identify the minimum and maximum feasible sets
of ambiguity perceptions that represent the same preference.
Moreover, we can construct the convex costs
from certainty equivalent data, which are observable in principle.

To obtain uniqueness,
we utilize the richness of the domain stemming from ex ante randomization.
\cref{ex:id} illustrates that
the uniqueness in \cref{thm:id} does not hold without ex ante randomization.
We construct a costly ambiguity perception preference
that is behaviorally indistinguishable from an expected utility preference
unless observing choices over lotteries over acts, but
differs in attitudes toward ex ante randomization and randomization timing.
Such sharp identifiability of parameters
is another advantage of considering the richer domain,
in addition to capturing aversion to increases in ambiguous events.

Building on the identification result,
\cref{sec:cmp} conducts three comparatives.
Although the costs and choices of ambiguity perceptions are not observable,
we can compare them across {\dm}s through their preferences over lotteries.
We show that
\begin{enumerate*}
  \item
    a more ambiguity-averse {\dm} has higher costs;
  \item
    a {\dm} more averse to additional ambiguity chooses
    more refined ambiguity perceptions;
  \item
    a {\dm} more averse to ex ante randomization tends
    to choose different perceptions for different lotteries.
\end{enumerate*}
The second comparative is conceptually novel and
directly related to our motivation
to study the attitude toward increases in ambiguity.

Finally, \cref{sec:disc} discusses special cases,
the properties of parameters, and related models.
\cref{subsec:meu} compares our model
with \citets{GilboaSchmeidler1989} {\meu} model at the axiomatic level.
\cref{subsec:spec} pins down the additional axioms
to characterize three special cases,
which have been studied in a framework
without ex ante randomization \citep{PTX2026,Sinander2025,CFIL2022}.
\cref{subsec:int} studies the properties of intersections of feasible ambiguity perceptions.
\cref{subsec:dual} considers the ``dual'' version of our model,
in which the {\dm} chooses her ambiguity perception
to minimize (instead of maximize) the {\meu} value at a cost.
This version generalizes \citets{KeZhang2020}
double maxmin expected utility model by adding a cost term.
We characterize it by replacing \axmref{axm:eaar}
with \axmref{axm:eapr} in \cref{thm:rep}.
Among these related models,
the one provided by \citet{PTX2026} is conceptually the closest,
as it also considers the costly choice of ambiguity perception,
based on the Choquet expected utility model \citep{Schmeidler1989}.
We further provide a comparison with ours and
discuss other related models in \cref{subsec:lit}.

All the proofs are relegated to the \hyperref[sec:prelim]{Appendix}.

\section{Model}\label{sec:model}

\subsection{Primitives}\label{subsec:prim}

Given any set $Y$,
denote by $\lott(Y)$
the set of finitely supported probability measures on $Y$;
for each $y \in Y$, denote by $\Dirac[y]$ the Dirac measure at $y$;
denote by $\Dirac[Y]$ the set of Dirac measures on $Y$.

Let $\Omega$ be a finite set of \emph{states} with $\abs{\Omega} \ge 2$, and
let $X$ be a set of \emph{consequences}.
An \emph{act} is a function from $\Omega$ to $\lott(X)$.
Let $\cF$ be the set of acts.
With an abuse of notation,
identify each $p \in \lott(X)$ with the constant act of value $p$.
The \dm's preference is modeled as a binary relation $\wpr$ on $\lott(\cF)$.
Each member of $\lott(\cF)$ is called a \emph{lottery}.
Denote by $\ipr$ and $\spr$ the symmetric and asymmetric parts of $\wpr$,
respectively.

For each $(\lambda, (f, g)) \in [0, 1] \times \cF^2$,
define $\lambda f + (1 - \lambda)g \in \cF$ as
\begin{equation}
  [\lambda f + (1 - \lambda)g](\omega)(Z)
  = \lambda f(\omega)(Z) + (1 - \lambda)g(\omega)(Z)
  \quad \forall (\omega, Z) \in \Omega \times 2^X.
\end{equation}
That is, in $\lambda f + (1 - \lambda)g$,
a mixture happens \emph{after} a state is realized.
Refer to this type of mixture as \emph{ex post randomization}.

For each $(\lambda, (P, Q)) \in [0, 1] \times \lott(\cF)^2$,
define $\lambda P + (1 - \lambda)Q \in \lott(\cF)$ as
\begin{equation}
  [\lambda P + (1 - \lambda)Q](F) = \lambda P(F) + (1 - \lambda)Q(F)
  \quad \forall F \in 2^\cF.
\end{equation}
That is, in $\lambda P + (1 - \lambda)Q$,
a mixture happens \emph{before} a state is realized.
Refer to this type of mixture as \emph{ex ante randomization}.
In particular, $\lambda \Dirac[f] + (1 - \lambda)\Dirac[g]$
is the lottery that yields $f$ and $g$
with probability $\lambda$ and $1 - \lambda$, respectively
(see the left tree in \cref{fig:tree}).
It is distinguished from the degenerate lottery
$\Dirac[\lambda f + (1 - \lambda)g]$
at the ex post randomization of $f$ and $g$
(see the right tree in \cref{fig:tree}).

\begin{figure}[b]
  \centering
  \begin{tikzpicture}
    \tikzset{
      > = latex,
      x = 0.5pt,
      y = 0.5pt,
      level distance = 60,
      font = \small,
      node/.style = {
        draw, solid, rectangle, text centered,
        minimum height = 22, minimum width = 22,
      },
      act/.style = { node, fill = cyan!20 },
      lott/.style = { node, rounded corners, fill = yellow!30 },
      level 1/.style = { sibling distance = 100, dashed, -> },
      level 2/.style = { sibling distance = 50, solid, - },
      level 3/.style = { sibling distance = 20, dashed, -> },
    }
    % ear
    \node at (0, 0) [lott] {$\lambda \Dirac[f] + (1 - \lambda)\Dirac[g]$}
      child {
        node [act] {$f$}
        child {
          node [lott] {$f(\omega_1)$}
          child { node {} }
          child { node {} }
          edge from parent node [left] {$\omega_1$}
        }
        child {
          node [lott] {$f(\omega_2)$}
          child { node {} }
          child { node {} }
          edge from parent node [right] {$\omega_2$}
        }
        edge from parent node [left] {$\lambda$\enspace}
      }
      child {
        node [act] {$g$}
        child {
          node [lott] {$g(\omega_1)$}
          child { node {} }
          child { node {} }
          edge from parent node [left] {$\omega_1$}
        }
        child {
          node [lott] {$g(\omega_2)$}
          child { node {} }
          child { node {} }
          edge from parent node [right] {$\omega_2$}
        }
        edge from parent node [right] {\enspace$1 - \lambda$}
      }
    ;
    % epr
    \tikzset{ level 2/.append style = { sibling distance = 130 } }
    \node at (480, 0) [lott] {$\Dirac[\lambda f + (1 - \lambda)g]$}
      child {
        node [act] {$\lambda f + (1 - \lambda)g$}
        child {
          node [lott] {$\lambda f(\omega_1) + (1 - \lambda)g(\omega_1)$}
          child { node {} }
          child { node {} }
          child { node {} }
          edge from parent node [left] {$\omega_1$}
        }
        child {
          node [lott] {$\lambda f(\omega_2) + (1 - \lambda)g(\omega_2)$}
          child { node {} }
          child { node {} }
          child { node {} }
          edge from parent node [right] {$\omega_2$}
        }
        edge from parent node [left] {$1$}
      }
    ;
  \end{tikzpicture}
  \caption{Ex ante and ex post randomization.}
  \label{fig:tree}
\end{figure}

\subsection{Representation}\label{subsec:rep}

Let $\prob(\Omega)$ be the set of probability measures on $\Omega$,
endowed with the total variation distance.
Let $\bK$ be the set of nonempty compact convex subsets of $\prob(\Omega)$,
endowed with the Hausdorff metric.
Each member of $\bK$ represents an ambiguity perception:
the set of priors the {\dm} believes possible.
A \emph{vNM function}
is a nonconstant mixture linear real-valued function on $\lott(X)$.
With a vNM function $u$ and an ambiguity perception $M$,
the {\meu} model evaluates each act $f$
as $\min_{\mu \in M} \int u \cmpf f \diff \mu$.

Our representation captures the {\dm}
who optimally chooses her ambiguity perception at a cost.
The costs are represented
by a real-valued function over feasible ambiguity perceptions.
An (extended) real-valued function is \emph{grounded} if its infimum is zero.
A \emph{cost structure} is a pair $(\bM, c)$
of a nonempty compact subset of $\bK$ and
a lower semicontinuous grounded function on $\bM$ such that
$M \subseteq M'$ implies $c(M) \ge c(M')$ for each $(M, M') \in \bM^2$.

\begin{definition}\label{def:cap}
  A \emph{costly ambiguity perception representation of $\wpr$}
  is a pair $\tuple{u, (\bM, c)}$ of
  a surjective vNM function and a cost structure such that
  the real-valued function $U$ on $\lott(\cF)$ of the form
  \begin{equation}
    U(P)
    = \max_{M \in \bM} \Bigl[
      \int \Bigl(\min_{\mu \in M} \int u \cmpf f \diff \mu\Bigr) \diff P(f)
      - c(M)\Bigr]
  \end{equation}
  represents $\wpr$.
\end{definition}

A \emph{costly ambiguity perception preference} is a binary relation
on $\lott(\cF)$ that has a costly ambiguity perception representation.

In a costly ambiguity perception representation $\tuple{u, (\bM, c)}$,
the set $\bM$ represents the \dm's feasible set of ambiguity perceptions, and
$c$ does the cost to refine her ambiguity perception to each $M \in \bM$.
The {\dm} chooses her ambiguity perception at each lottery
to maximize the expectation of the {\meu} value minus the cost.
By the definition of a cost structure,
a smaller (more refined) ambiguity perception is more expensive.
While the {\meu} model evaluates each alternative
based on one ambiguity perception fixed as a parameter,
the costly ambiguity perception model endogenously chooses the \dm's perception
from a feasible set.

We offer several interpretations of the representation.
Under each of the interpretations,
the representation can be seen as a reduced form
of the \dm's process of subjective optimization,
which is not directly observable:
\begin{enumerate}
  \item
    \textit{Contemplation.}
    The {\dm} engages in contemplation
    to exclude unreasonable priors from her candidate ambiguity perception.
    She chooses the amount of contemplation
    to balance the benefit and cost of finer ambiguity perception.
    See also \citet{ErginSarver2010,ErginSarver2015},
    who study costly contemplation by a {\dm} uncertain about her tastes.
  \item
    \textit{Optimism.}
    The {\dm} optimistically distorts her ambiguous belief
    depending on the situation she faces.
    Her belief is chosen
    as a solution to the maximization appearing in the representation.
    The cost term captures the \dm's mental costs
    associated with each distortion.
    See, for example, \citet{BrunnermeierParker2005} or \citet{Kovach2020}
    for models of belief distortion by a Bayesian {\dm}.
  \item
    \textit{Ambiguity-attitude adjustment.}
    In the {\meu} model,
    each set of priors may be viewed as the \dm's ambiguity attitude:
    the bigger the set is, the more ambiguity averse she is.
    Depending on the context, the size of the set may represent
    the \dm's fear of ambiguity or her confidence about probabilistic laws.
    Facing an ambiguous prospect,
    the {\dm} attempts to choose the optimal degree of ambiguity aversion
    accordingly;
    however, this process entails cognitive costs.
    See also \citet{PTX2026} for related discussion and
    more specific parametrizations of the model.
\end{enumerate}

A costly ambiguity perception preference exhibits
aversion to increases in relevant ambiguous events
due to the perception choice stage.
With fewer such events, it is easier for the {\dm}
to tailor her choice of the ambiguity perception to the alternative.
Indeed, this model can accommodate
the natural pattern in the reflection example (\cref{tab:ref})
we discussed in the \nameref{sec:intro}.

\begin{example}\label{ex:ref}
  Let $B = G = [-1/4, 1/4]$.
  Identify each $(\mu_\rmB, \mu_\rmG) \in [0, 1/2]^2$ with the prior such that
  the probability of drawing blue is $\mu_\rmB$ and
  drawing green is $\mu_\rmG$.
  For each $(\beta, \gamma) \in [0, 1]^2$,
  let $M(\beta, \gamma) = \set{(1/4 + \beta b, 1/4 + \gamma g)
  \mvert (b, g) \in B \times G}$.
  Each of the parameters $\beta$ and $\gamma$ controls the size of
  ambiguity perception $M(\beta, \gamma)$:
  the smaller $\beta$ (resp.~$\gamma$) is,
  the more refined her ambiguity perception about blue (resp.~green) is.
  Let $\bM = \set{M(\beta, \gamma) \mvert (\beta, \gamma) \in [0, 1]^2}$,
  let $c \colon [0, 1]^2 \to \SR_+$ be
  a symmetric strictly decreasing lower semicontinuous grounded function, and
  let $\theta \in (0, \infty)$.
  Let $U$ be the utility function over acts
  corresponding to the cost structure $(\bM, \theta c)$.
  Then,
  \begin{align}
    U(f_1)
    &= \max_{(\beta, \gamma) \in [0, 1]^2} \Bigl[
      100 \min_{(b, g) \in B \times G} (1 + \beta b + \gamma g)
      - \theta c(\beta, \gamma)\Bigr] \\
    &= 100 - \min_{(\beta, \gamma) \in [0, 1]^2}
      [25(\beta + \gamma) + \theta c(\beta, \gamma)], \\
    U(f_2)
    &= \max_{(\beta, \gamma) \in [0, 1]^2} \Bigl[
      100 \min_{(b, g) \in B \times G} (1 + 2\gamma g)
      - \theta c(\beta, \gamma)\Bigr]
    = 100 - \min_{\gamma \in [0, 1]} (50\gamma + \theta c(1, \gamma)), \\
    U(f_3)
    &= \max_{(\beta, \gamma) \in [0, 1]^2} \Bigl[
      100 \min_{(b, g) \in B \times G} (1 + 2\beta b)
      - \theta c(\beta, \gamma)\Bigr]
    = 100 - \min_{\beta \in [0, 1]} (50\beta + \theta c(\beta, 1)), \\
    U(f_4)
    &= \max_{(\beta, \gamma) \in [0, 1]^2} \Bigl[
      100 \min_{(b, g) \in B \times G} (1 + \beta b + \gamma g)
      - \theta c(\beta, \gamma)\Bigr] \\
    &= 100 - \min_{(\beta, \gamma) \in [0, 1]^2}
      [25(\beta + \gamma) + \theta c(\beta, \gamma)].
  \end{align}
  By the monotonicity of $c$, for each $(\beta, \gamma) \in [0, 1]^2$,
  \begin{equation}\label{eq:ref_ineq}
    25(\beta + \gamma) + \theta c(\beta, \gamma)
    \ge 50\min \{\beta, \gamma\} + \theta c(1, \min \{\beta, \gamma\}).
  \end{equation}
  Thus, $U(f_1) \le U(f_2)$.
  By symmetry, $U(f_3) \ge U(f_4)$.
  In \eqref{eq:ref_ineq}, equality holds if and only if $\beta = \gamma = 1$,
  in which case $50\min \{\beta, \gamma\} + \theta c(1, \min \{\beta, \gamma\})
  = 50$.
  Hence, since $\theta \mapsto \min_{\gamma \in [0, 1]}
  (50\gamma + \theta c(1, \gamma))$ is concave and
  $\lim_{\theta \downarrow 0} \min_{\gamma \in [0, 1]}
  (50\gamma + \theta c(1, \gamma)) = 0$,
  there exists $\bar \theta > 0$ such that $\theta < \bar \theta$ implies
  $\min_{\gamma \in [0, 1]} (50\gamma + \theta c(1, \gamma)) < 50$,
  so $U(f_1) < U(f_2)$ and $U(f_3) > U(f_4)$.

  The symmetry of the cost function captures the informational symmetry.
  If $\theta$ is too high ($\theta \ge \bar \theta$),
  then the {\dm} chooses $\beta = \gamma = 1$ at each of $f_1$, \dots, $f_4$,
  so she is indifferent between them.
  Otherwise ($\theta < \bar \theta$), when the {\dm} faces $f_2$ or $f_3$,
  she can set one of the parameters $\beta$ or $\gamma$ equal to $1$ and
  can concentrate on optimizing the other parameter;
  on the other hand, at $f_1$ and $f_4$,
  since the {\dm} has to control both $\beta$ and $\gamma$,
  she has to pay higher costs than at $f_2$ or $f_3$
  to achieve the same {\meu} value.
  Finally, the {\dm} exhibits absolute ambiguity aversion
  since $\bM$ satisfies the condition of \cref{cor:absaa} in \cref{subsec:int}.
  In particular, the preference is consistent with the Ellsberg paradox
  at the same time.
\end{example}

An example of a costly ambiguity perception preference that generates
the typical pattern in the reflection example is provided by \citet{PTX2026},
who examine a special case of our model.
We discuss in \cref{sec:disc} the relationship between our and their models.
\cref{ex:ref} generalizes the example they provide,
which makes it easier understand the force leading to the typical pattern.

\section{Foundations}\label{sec:fnd}

\subsection{Characterization}\label{subsec:char}

We introduce several axioms on $\wpr$.

We begin with a standard requirement.
The relation $\wpr$ is \emph{nondegenerate}
if there exists $(P, Q) \in \lott(\cF)^2$ such that $P \spr Q$; and
\emph{mixture continuous} if for each $(P, Q, R) \in \lott(\cF)^3$,
the sets $\set{\lambda \in [0, 1]
\mvert \lambda Q + (1 - \lambda)R \wpr P}$ and
$\set{\lambda \in [0, 1] \mvert P \wpr \lambda Q + (1 - \lambda)R}$ are closed.

\begin{axiom}[Regularity]\label{axm:reg}
  The relation $\wpr$ is
  nondegenerate, complete, transitive, and mixture continuous.
\end{axiom}

The following axiom requires that the {\dm} respects statewise dominance.

\begin{axiom}[Monotonicity]\label{axm:mon}
  For each $(\lambda, P, (f, g)) \in [0, 1] \times \lott(\cF) \times \cF^2$,
  if $\Dirac[f(\omega)] \wpr \Dirac[g(\omega)]$ for each $\omega \in \Omega$,
  then $\lambda \Dirac[f] + (1 - \lambda)P
  \wpr \lambda \Dirac[g] + (1 - \lambda)P$.
\end{axiom}

We introduce two axioms on the attitude toward the timing of randomization,
which are also used by \citet{KeZhang2020}.
First, while ex post randomization can help the {\dm} hedge against uncertainty
by reducing payoff variability across states,
ex ante randomization does not:
the {\dm} will always end up with an ambiguous act after the randomization.
This difference in hedging power motivates
the {\dm} to prefer the former to the latter.

\begin{axiom}[Attraction to ex post randomization]\label{axm:aepr}
  For each $((\kappa, \lambda), P, (f, g))
  \in [0, 1]^2 \times \lott(\cF) \times \cF^2$,
  we have $\kappa \Dirac[\lambda f + (1 - \lambda)g] + (1 - \kappa)P
  \wpr \kappa[\lambda \Dirac[f] + (1 - \lambda)\Dirac[g]]
  + (1 - \kappa)P$.
\end{axiom}

Second, if an act is mixed with a constant act,
even ex post randomization of the two does not provide hedging:
it preserves the pattern of payoff variability in the act.
Thus, the {\dm} has no reason to prefer ex post randomization,
since \axmref{axm:aepr} is motivated by its hedging power.
The following axiom states that
the timing of such a mixture does not matter for the {\dm}.

\begin{axiom}[Indifference to randomization timing of constant acts]
  \label{axm:irtc}
  For each $((\kappa, \lambda), P, f, p)
  \in [0, 1]^2 \times \lott(\cF) \times \cF \times \lott(X)$,
  we have $\kappa \Dirac[\lambda f + (1 - \lambda)p] + (1 - \kappa)P
  \ipr \kappa[\lambda \Dirac[f] + (1 - \lambda)\Dirac[p]] + (1 - \kappa)P$.
\end{axiom}

Combined with \axmref{axm:aepr},
this axiom implies that
the {\dm} strictly prefers ex post randomization to ex ante randomization
only when two acts to be mixed both have ambiguity.

We then proceed to two axioms on the attitude toward ex ante randomization.
Not only useless to hedge against uncertainty,
ex ante randomization complicates the dependence of the outcomes
on ambiguous events by increasing the number of relevant events,
as illustrated in the \nameref{sec:intro}.
Consequently, ex ante randomization is unfavorable to the {\dm}.

\begin{axiom}[Ex ante aversion to randomization]\label{axm:eaar}
  For each $(\lambda, (P, Q)) \in [0, 1] \times \lott(\cF)^2$,
  if $P \wpr Q$, then $P \wpr \lambda P + (1 - \lambda)Q$.
\end{axiom}

The second axiom in this category postulates that
the ranking between two lotteries is independent
of their common unambiguous parts.

\begin{axiom}[Independence of constant acts]\label{axm:ica}
  For each $(\lambda, (P, Q), (p, q))
  \in [0, 1] \times \lott(\cF)^2 \times \lott(X)^2$,
  if $\lambda P + (1 - \lambda)\Dirac[p]
  \wpr \lambda Q + (1 - \lambda)\Dirac[p]$,
  then $\lambda P + (1 - \lambda)\Dirac[q]
  \wpr \lambda Q + (1 - \lambda)\Dirac[q]$.
\end{axiom}

In each lottery of the form $\lambda P + (1 - \lambda)\Dirac[p]$,
replacing the constant act $p$ with another constant act $q$ does not change
the relevant ambiguous events in the lottery.
Thus, the ranking between $\lambda P + (1 - \lambda)\Dirac[p]$ and
$\lambda Q + (1 - \lambda)\Dirac[p]$ should be preserved
even when $p$ is replaced with $q$.
That is, \axmref{axm:ica} requires that
changes in parts irrelevant to ambiguity do not affect the \dm's ranking.

The final axiom ensures that
the range of the vNM function is the entire real line.

\begin{axiom}[Unboundedness]\label{axm:ubd}
  For each $(p, q) \in \lott(X)^2$ with $\Dirac[p] \spr \Dirac[q]$,
  there exists $(r, s) \in \lott(X)^2$ such that
  $\half \Dirac[r] + \half \Dirac[q] \wpr \Dirac[p]$ and
  $\Dirac[q] \wpr \half \Dirac[p] + \half \Dirac[s]$.
\end{axiom}

The axioms introduced in this section are all the behavioral implications
of costly ambiguity perception preferences.

\begin{theorem}\label{thm:rep}
  The relation $\wpr$ satisfies
  \axmref{axm:reg}, \axmref{axm:mon}, \axmref{axm:aepr}, \axmref{axm:irtc},
  \axmref{axm:eaar}, \axmref{axm:ica}, and \axmref{axm:ubd} if and only if
  it has a costly ambiguity perception representation.
\end{theorem}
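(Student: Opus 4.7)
The plan is to dispatch necessity by direct verification and then construct the representation from the axioms in several phases, with the bulk of the work concentrated in a conjugate-duality step.

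\emph{Necessity} is routine. Given a representation $\tuple{u,(\bM,c)}$, the utility $U$ is the pointwise supremum of the affine-in-$P$ maps $P\mapsto\int V_M\,dP-c(M)$, where $V_M(f)=\min_{\mu\in M}\int u\cmpf f\,d\mu$. Statewise monotonicity of each $V_M$ yields \axmref{axm:fsd}; concavity of $V_M$ in $f$ together with its affinity on mixtures with constant acts yields \axmref{axm:aepr} and \axmref{axm:imtc}; convexity of $U$ in $P$ (as a supremum of affine functionals) yields \axmref{axm:eaar}; and the fact that mixing in $\Dirac[p]$ contributes the additive constant $(1-\lambda)u(p)$ inside the supremum, independently of $M$, yields \axmref{axm:ica}. \axmref{axm:reg} and \axmref{axm:ubd} follow from continuity and surjectivity of $u$.

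For \emph{sufficiency}, Phase~1 restricts $\wpr$ to $\set{\Dirac[p]\mvert p\in\lott(X)}$: \axmref{axm:imtc} identifies ex ante with ex post mixtures of constants and \axmref{axm:ica} then becomes vNM independence, so the mixture-space theorem together with \axmref{axm:ubd} yields a surjective vNM utility $u\colon\lott(X)\to\mathbb{R}$. Phase~2 uses \axmref{axm:fsd}, mixture continuity, and \axmref{axm:ubd} to produce certainty equivalents $p_P$ and set $U(P)=u(p_P)$. Phase~3 is the key \emph{constant-mixture linearity} lemma: combining \axmref{axm:ica}, \axmref{axm:imtc}, and the vNM structure on constants, one obtains $U(\lambda P+(1-\lambda)\Dirac[p])=\lambda U(P)+(1-\lambda)u(p)$. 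Phase~4 upgrades the quasi-convexity of $U$ implied directly by \axmref{axm:eaar} to genuine convexity in $P$, using the constant-mixture linearity of Phase~3 to straighten level sets along the constant direction.

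Phase~5 is the heart of the proof. Since $U$ is convex and continuous in the natural topology on $\lott(\cF)$, a Fenchel--Moreau-type argument produces a family $\Phi$ of continuous functionals $\phi\colon\cF\to\mathbb{R}$ and a conjugate cost $c(\phi)$ with
\[
U(P)=\sup_{\phi\in\Phi}\Bigl[\int\phi(f)\,dP(f)-c(\phi)\Bigr].
\]
One then transmits the axioms from the envelope $U$ to each supporting $\phi$: \axmref{axm:fsd} forces $\phi$ to be statewise monotone; \axmref{axm:imtc} forces the constant-additivity $\phi(\lambda f+(1-\lambda)p)=\lambda\phi(f)+(1-\lambda)u(p)$; \axmref{axm:aepr} forces $\phi$ to be concave in $f$; and normalization $\phi(p)=u(p)$ is inherited from Phase~1. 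The \citet{GilboaSchmeidler1989} representation then identifies each such $\phi$ with an {\meu} functional $V_{M_\phi}$ for a unique $M_\phi\in\bK$. Setting $\bM=\set{M_\phi\mvert\phi\in\Phi}$ and pushing $c$ forward to $\bM$ yields the cost structure; compactness of $\bM$ in the Hausdorff metric and lower semicontinuity and groundedness of $c$ follow from continuity of $U$ and standard conjugate-duality bookkeeping.

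The main obstacle is exactly Phase~5: descending each axiom from $U$ to the individual supporting $\phi$. The concavity inequality of \axmref{axm:aepr} and the equality of \axmref{axm:imtc} a priori concern $U$ alone and must be sharpened to pointwise statements about every $\phi$ attaining the supremum. The natural tool is an envelope-type argument---at any $P$ where $\phi$ is optimal, the first-order behavior of $U$ along perturbations of $P$ toward $\Dirac[\lambda f+(1-\lambda)g]$ versus $\lambda\Dirac[f]+(1-\lambda)\Dirac[g]$ is governed by $\phi$, so the $U$-level inequalities pass to $\phi$---but making this rigorous, while simultaneously keeping $\bM$ Hausdorff-compact and $c$ lower semicontinuous and grounded, is delicate. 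The authors' remark after \cref{thm:rep} that a novel technique is needed to extract the cost structure confirms that this is the nontrivial step.
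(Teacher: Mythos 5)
Your Phases 1--4 align with the paper's proof: identifying the vNM utility on constants, obtaining a certainty-equivalent functional, establishing constant-mixture linearity (translation equivariance), and upgrading the quasi-convexity delivered by \axmref{axm:eaar} to convexity using translation equivariance. The paper implements these steps via Lemmas \ref{lem:aux_vnm}, \ref{lem:aux_vnm_surj}, \ref{lem:aux_mon}, and \ref{lem:aux_rep_u}, working on the pushforward relation $\wpru$ on $\lott(\Phi)$ and then in a transformed space $\Xi \subseteq \Cb(\bU)$. You implicitly gloss a nontrivial ingredient here: $\lott(\cF)$ is not a vector space, so to ``run Fenchel--Moreau'' one must first embed; the paper constructs the mixture-linear map $m \mapsto m^\vee$ into $\Cb(\bU)$ for exactly this purpose, and the image $\Xi$ is a convex conical tube with possibly empty interior, which rules out off-the-shelf duality results.

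Phase 5, which you correctly flag as the heart, is where the proposal has a genuine gap. You propose to recover the MEU structure of the supporting functionals by an envelope-type argument: ``at any $P$ where $\phi$ is optimal, the first-order behavior of $U$ along perturbations is governed by $\phi$, so the $U$-level inequalities pass to $\phi$.'' This does not go through. A generic Fenchel--Moreau decomposition $U = \sup_\phi [\langle\cdot,\phi\rangle - c(\phi)]$ only pins down the convex envelope; the individual $\phi$'s attaining the supremum inherit none of the pointwise monotonicity, concavity-in-acts, or translation-equivariance properties automatically, and a first-order envelope argument at a point $P$ only controls $\phi$ along directions that actually perturb $P$, not globally. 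The paper's resolution is structurally different and is what it calls its main technical innovation (\cref{prop:nv_var_me}): it exploits that \axmref{axm:fsd}, \axmref{axm:aepr}, and \axmref{axm:imtc} are stated in a mixture-robust form (quantified over the outer weight $\kappa$ and the residual lottery $R$), which makes them properties of the \emph{subrelation} $\rel^*_W$ rather than of $U$ alone. \cref{prop:nv_var_me} then constructs, via the closed convex cone $K$ of $\rel^*_W$-improving directions and its polar $K^+$, a weak$*$ compact convex $\Pi \subseteq \dom W^*|_\Delta$ that is simultaneously a variational representation of $W$ \emph{and} a multi-expectation representation of $\rel^*_W$. It is only because $\Pi$ represents $\rel^*_W$---not merely supports $U$---that the $\Phi$-monotonicity and attraction-to-ex-post-randomization of $\rel^*_W$ descend to each $\pi \in \Pi$, making every $\pi^\wedge$ a superlinear monotone niveloid and hence a support function of some $M \in \bK$ (\cref{lem:cs_nv_rep}). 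Without something like this ``representing the subrelation, not the value function'' device, the passage from $U$ to pointwise MEU structure of the supports is exactly the step that fails.
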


The sufficiency part of the proof of \cref{thm:rep} exploits
results from convex analysis and the notion of niveloids
developed in \citet{CMMR2014} (see \cref{subsec:cvx,subsec:nv}).
To do so, we transform each lottery to a continuous real-valued function
on a topological space.
We begin by finding a vNM function $u$ that represents
the restriction of $\wpr$ to $\Dirac[\lott(X)]$.
For each $P \in \lott(\cF)$,
denote by $P_u \in \lott(\SR^\Omega)$
the pushforward of $P$ under $f \mapsto u \cmpf f$.
Then, we define a topological space $\bU$ and
a mixture linear mapping $m \mapsto m^\vee$ from $\lott(\SR^\Omega)$ to
the set of bounded continuous real-valued functions on $\bU$
(see \cref{subsec:cost} for the precise definitions).
Our construction suggests that $P_u^\vee = Q_u^\vee$ implies $P \ipr Q$.
Using \axmref{axm:eaar} and \axmref{axm:ica},
we can construct a convex niveloid $W$
on $\set{m^\vee \mvert m \in \lott(\SR^\Omega)}$ such that
$P \mapsto W(P_u^\vee)$ represents $\wpr$.
Thus, by the mixture linearity of $m \mapsto m^\vee$,
there exist a pair $(\cV, \gamma)$ of
a set of real-valued functions on $\SR^\Omega$ and
a lower semicontinuous real-valued function on it such that
$W$ has the form
\begin{equation}
  W(m^\vee) = \max_{V \in \cV} \Bigl(\int V \diff m - \gamma(V)\Bigr).
\end{equation}

An important step is to restrict $\cV$ to the class of {\meu} functions.
This step relies on \cref{prop:nv_var_me} in \cref{subsec:nv},
which is our main technical innovation.
It states that we can take $\cV$ as the set satisfying
\begin{align}
  &\int V \diff l \ge \int V \diff \tilde l \quad \forall V \in \cV \\
  &\iff W(\lambda l^\vee + (1 - \lambda)m^\vee)
  \ge W(\lambda \tilde l^\vee + (1 - \lambda)m^\vee)
  \quad \forall (\lambda, m) \in (0, 1] \times \lott(\SR^\Omega).
\end{align}
Combined with \axmref{axm:aepr} and \axmref{axm:irt},
we can show that each member of $\cV$ can be written as
the {\meu} function associated with a closed convex subset of $\prob(\Omega)$.
A related result to \cref{prop:nv_var_me} has been shown by \citet{CMMR2015},
but it is not applicable in our setting.
In comparison, \cref{prop:nv_var_me} extends their result to
a convex niveloid on a more general domain%
---a set with possibly empty interior.

Although \cref{thm:rep} is related
to \citets{KeZhang2020} representation theorem
of a preference on a similar domain,
our proof strategy is different from theirs.
They characterize the model
where the {\dm} chooses the worst ambiguity perception without costs and
evaluates each lottery with the {\meu} function
(for a formal definition, see \cref{subsec:dual}).
Their proof relies on the relationship between $\wpr$ and
its subrelation $\wpr^*$, which we define in \cref{subsec:id}.
They show that $\wpr^*$ has a ``multi-{\meu} representation'' and
the original relation $\wpr$ can be recovered
through a ``cautious completion'' of $\wpr^*$.
This leads to a minimization of {\meu} functions
over candidate ambiguity perceptions without costs.
In our case, the argument of cautious completion is not applicable, and
we use properties of niveloids to obtain a cost structure.

\subsection{Identification}\label{subsec:id}

The restrictions imposed on cost structures in \cref{def:cap} are not enough
to identify them.
Indeed, if a cost structure $(\bM, c)$ satisfies
$\lambda M + (1 - \lambda)M'$ and
$c(\lambda M + (1 - \lambda)M') > \lambda c(M) + (1 - \lambda)c(M')$
for some $(\lambda, (M, M')) \in (0, 1) \times \bM^2$,
then slightly decreasing the cost $c(\lambda M + (1 - \lambda)M')$
does not change the \dm's preference.
To exclude such a case, we focus on cost structures with convexity.
A cost structure $(\bM, c)$ is \emph{convex} if $\bM$ and $c$ are convex.
A costly ambiguity perception representation $\tuple{u, (\bM, c)}$
is \emph{convex} if $(\bM, c)$ is convex.

\providecommand{\pat}{\approx}

Convexity alone still does not pin down the \dm's set of feasible perceptions.
For instance, we can freely exclude an ambiguity perception
that is never strictly optimal from the feasible set.
To explore the smallest possible convex cost structure,
we introduce the following definitions.
For each binary relation $\rel$ on $\lott(\cF)$,
a \emph{multi-{\meu} representation of $\rel$} is a pair $(u, \bM)$ of
a surjective vNM function and a compact convex subset of $\bK$ such that
$P \rel Q$ if and only if
\begin{equation}
  \int \Bigl(\min_{\mu \in M} \int u \cmpf f \diff \mu\Bigr)\diff P(f)
  \ge \int \Bigl(\min_{\mu \in M} \int u \cmpf f \diff \mu\Bigr) \diff Q(f)
  \quad \forall M \in \bM.
\end{equation}
Define the relation $\wpr^*$ on $\lott(\cF)$ by
$P \wpr^* Q$ if $\lambda P + (1 - \lambda)R \wpr \lambda Q + (1 - \lambda)R$
for each $(\lambda, R) \in (0, 1] \times \lott(\cF)$.
For each pair $(u, v)$ of vNM functions, write $u \pat v$
if they coincide up to a positive affine transformation.

\begin{proposition}\label{prop:mmeu}
  \hfill
  \begin{enumerate}
    \item \label{item:mmeu_rep}
      If $\wpr$ is a costly ambiguity perception preference,
      then $\wpr^*$ has a multi-{\meu} representation.
    \item \label{item:mmeu_uniq}
      If $(u, \bM)$ and $(v, \bL)$ are
      multi-{\meu} representations of the same binary relation on $\lott(\cF)$,
      then $u \pat v$ and $\bM = \bL$.
  \end{enumerate}
\end{proposition}

We are now ready to state our identification result.
For each vNM function $u$ representing the restriction of
a costly ambiguity perception preference $\wpr$ to $\Dirac[\lott(X)]$,
define $c^\star_{\wpr, u} \colon \bK \to [0, \infty]$ by
\begin{equation}\label{eq:cn}
  c^\star_{\wpr, u}(M) = \sup_{P \in \lott (\cF)} \Bigl[
    \int \Bigl(\min_{\mu \in M} \int u \cmpf f \diff \mu\Bigr) \diff P(f)
    - u(\bar P)\Bigr],
\end{equation}
where for each $P \in \lott(\cF)$,
let $\bar P \in \lott(X)$ be such that $\Dirac[\bar P] \ipr P$.
A subset of $\bK$ is \emph{$\supseteq$-increasing}
if it contains all the supersets in $\bK$ of its members.
For each subset $\bM$ of $\bK$,
denote by $\bM^\uparrow$ the smallest $\supseteq$-increasing subset of $\bK$
that includes $\bM$.
The next theorem shows four important properties of convex cost structures:
\begin{enumerate*}
  \item
    every costly ambiguity perception preference has a convex representation;
  \item
    the smallest and largest feasible sets of perceptions exist;
  \item
    the cost function is unique given a vNM function and a feasible set;
  \item
    the unique cost function can be recovered from observable data.
\end{enumerate*}

\begin{theorem}\label{thm:id}
  Let $\wpr$ be a costly ambiguity perception preference, and
  let $(v, \bM^*)$ be a multi-{\meu} representation of $\wpr^*$.
  A pair $\tuple{u, (\bM, c)}$ of a vNM function and a convex cost structure
  is a costly ambiguity perception representation of $\wpr$ if and only if
  $u \pat v$, $\bM^* \subseteq \bM \subseteq (\bM^*)^\uparrow$, and
  $c = c^\star_{\wpr, u}|_\bM$.
\end{theorem}

According to \cref{thm:id},
if $(v, \bM^*)$ is a multi-{\meu} representation of $\wpr^*$,
then $\bM^*$ is the most parsimonious feasible set
of a costly ambiguity perception representation of $\wpr$.
Thus, any ambiguity perception outside $\bM^*$
can never be a strict optimal choice in the \dm's maximization.
The parsimonious representation is useful
to see the implications of additional axioms.
Indeed, when characterizing special cases in \cref{subsec:spec},
we restrict the structure of each feasible ambiguity perception
starting from the parsimonious representation.

The proof of \cref{thm:id} is based on an intermediate result,
\cref{prop:nv_var_uniq},
which generalizes Theorem 2 of \citet{DDMO2017} to a more abstract setting.
The uniqueness of the feasible set
does not have a counterpart in \citet{DDMO2017},
since they deal with extended real-valued costs and
take as the domain the largest possible set
while ours are real-valued.

Note that the richness of the domain due to ex ante randomization
is essential for the uniqueness in \cref{thm:id}.
The following example shows that
two preferences with different parameters
may exhibit different attitudes
toward ex ante randomization and randomization timing
while coincide on the set of acts.

\begin{example}\label{ex:id}
  Let $\Omega = \{0, 1\}$ and let $X = \SR$.
  Identify each $\mu \in [0, 1]$
  with the prior such that the probability of state $1$ is $\mu$.
  Define $u \colon \lott(X) \to \SR$ by $u(p) = \int x \diff p(x)$.
  Consider the two costly ambiguity representations
  $\tuple{u, (\bM_1, c_1)}$ and $\tuple{u, (\bM_2, c_2)}$,
  where $\bM_1 = \{\{\thalf\}\}$, $\bM_2 = \{[0, \thalf], [\thalf, 1]\}$, and
  $c_1$ and $c_2$ are the zero functions.
  For each $i \in \{1, 2\}$,
  let $\wpr_i$ be the preference represented by $\tuple{u, (\bM_i, c_i)}$, and
  let $U_i$ be the associated utility function.
  The preference $\wpr_1$
  is neutral to ex ante randomization and the timing of randomization
  since it is an expected utility preference,
  whereas $\wpr_2$ is not.
  Indeed, defining acts $f$ and $g$ by
  $f(\omega) = \Dirac[\omega]$ and $g(\omega) = \Dirac[1 - \omega]$ gives
  \begin{equation}
    U_2(\Dirac[f]) = U_2(\Dirac[g]) = \half, \qquad
    U_2\Bigl(\half \Dirac[f] + \half \Dirac[g]\Bigr)
    = \max_{M \in \bM_2} \Bigl[
      \half \min_{\mu \in M} (1 - \mu) + \half \min_{\mu \in M} \mu\Bigr]
    = \frac{1}{4}.
  \end{equation}
  Nevertheless, $\wpr_1$ and $\wpr_2$ coincide on the set of acts:
  for each $h \in \cF$,
  \begin{align}
    U_2(\Dirac[h])
    &= \max \Bigl\{\half u(h(0)) + \half \min \{u(h(0)), h(1)\},
      \half \min \{u(h(0)), h(1)\} + \half h(1)\Bigr\} \\
    &= \half \min \{u(h(0)), h(1)\} + \half \max \{u(h(0)), h(1)\} \\
    &= \half u(h(0)) + \half h(1) = U_1(\Dirac[h]).
  \end{align}
  Therefore, we cannot distinguish $\wpr_1$ and $\wpr_2$
  by observing choices only over acts.
  Note that the indistinguishability can also be
  confirmed from Proposition 5 of \citet{CFIL2022}:
  $\bM_1$ and $\bM_2$ have the same half-space closure.
\end{example}

\section{Comparatives}\label{sec:cmp}

Consider two {\dm}s $1$ and $2$.
Each {\dm} $i$ has a preference $\wpr_i$ on $\lott(\cF)$
with a costly ambiguity perception representation
$\tuple{u_i, (\bM_i, c_i)}$.

\subsection{Ambiguity aversion}\label{subsec:aa}

The following definition extends the notion of comparative ambiguity aversion
by \citet{GhirardatoMarinacci2002} to include ex ante randomization.

\begin{definition}\label{def:cmp_amb}
  The relation $\wpr_1$ is \emph{more ambiguity-averse than $\wpr_2$}
  if for each $(P, p) \in \lott(\cF) \times \lott(X)$,
  \begin{equation}
    \Dirac[p] \spr_2 P \implies \Dirac[p] \spr_1 P.
  \end{equation}
\end{definition}

A {\dm} prefers a constant act to an ambiguous lottery
if the net benefit from her choice of ambiguity perceptions
does not exceed the value of the constant act.
Thus, if {\dm} $1$ prefers a constant act to an ambiguous lottery
more frequently than {\dm} $2$,
then {\dm} $1$ does not obtain a higher net benefit
from tuning her ambiguity perceptions than {\dm} $2$,
which in turn means that
{\dm} $1$ faces a more stringent perception-choice problem.
This difference can be captured through the pointwise ordering
of their cost functions.

\begin{proposition}\label{prop:cmp_amb}
  The relation $\wpr_1$ is more ambiguity-averse than $\wpr_2$ if and only if
  $u_1 \pat u_2$ and $c^\star_{\wpr_1, u_1} \ge c^\star_{\wpr_2, u_1}$.
\end{proposition}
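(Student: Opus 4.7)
The plan is to leverage the canonical cost formula from \cref{thm:id}: for each {\dm} $i$, by taking the maximal domain $\bM_i = \dom c^\star_{\wpr_i, u_i}$, the utility becomes
\begin{equation*}
  U_i(P) = \sup_{M \in \bK}\Bigl[
    \int\Bigl(\min_{\mu \in M}\int u_i \cmpf f \diff\mu\Bigr) \diff P(f)
    - c^\star_{\wpr_i, u_i}(M)\Bigr],
\end{equation*}
with the convention that the bracket is $-\infty$ outside $\dom c^\star_{\wpr_i, u_i}$. Since $U_i(\Dirac[p]) = u_i(p)$, the ambiguity-tolerance condition can be rephrased as $U_1(P) \ge U_2(P)$ for every $P$ once the vNM indices are matched. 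The proof thus splits into: (i) show the vNM indices must coincide up to positive affine transformation; (ii) with them calibrated, translate $U_1 \ge U_2$ into the pointwise inequality of $c^\star$.

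For sufficiency, assume $u_1 \pat u_2$ and $c^\star_{\wpr_1, u_1} \le c^\star_{\wpr_2, u_1}$. By invariance of the representation under positive affine transformations of the vNM index, I may replace $u_2$ by $u_1$, so both representations use the common index $u := u_1$. Feeding the pointwise cost inequality into the display above gives $U_1(P) \ge U_2(P)$ for every $P$, and since $U_i(\Dirac[p]) = u(p)$, $P \wpr_2 \Dirac[p]$ then yields $U_1(P) \ge U_2(P) \ge u(p)$, i.e., $P \wpr_1 \Dirac[p]$.

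For necessity, I first derive $u_1 \pat u_2$. Specializing ambiguity tolerance to $P = \Dirac[p']$ yields $u_2(p') \ge u_2(p) \implies u_1(p') \ge u_1(p)$; applying this in both directions when $u_2(p) = u_2(p')$ shows that $u_1$ is constant on the level sets of $u_2$. Hence $u_1 = \phi \cmpf u_2$ for some nondecreasing $\phi$ on the range of $u_2$. Mixture linearity of both $u_1$ and $u_2$, together with surjectivity of $u_2$, forces $\phi$ to satisfy Jensen's equation on all of $\SR$; combined with monotonicity this yields $\phi$ affine, and nonconstancy of $u_1$ upgrades it to strictly increasing affine, giving $u_1 \pat u_2$. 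Normalize $u_1 = u_2 =: u$. Applying ambiguity tolerance to $(P, \bar P_2)$ gives $U_1(P) \ge u(\bar P_2) = U_2(P)$, i.e., $u(\bar P_1) \ge u(\bar P_2)$. Substituting this into the definition \eqref{eq:cn} immediately yields $c^\star_{\wpr_1, u}(M) \le c^\star_{\wpr_2, u}(M)$ for every $M \in \bK$.

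The only non-routine step is the affineness of $\phi$ in the necessity direction, which follows from the classical fact that any monotone solution of Jensen's equation on an interval is affine. Everything else is bookkeeping with the explicit canonical cost formula delivered by \cref{thm:id}.
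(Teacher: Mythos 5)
Your proof is correct and follows the same route the paper takes, only with the terse equivalences spelled out: you recover $u_1 \pat u_2$ from the restriction of the tolerance condition to degenerate lotteries, translate tolerance into the certainty-equivalent inequality $u(\bar P_1) \ge u(\bar P_2)$, and then read the pointwise cost comparison directly off the formula \eqref{eq:cn}, using the canonical representation from \cref{thm:id} to run the implications in both directions. The paper's one-line proof states exactly these equivalences; your version is a faithful elaboration.
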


If we assume that $\tuple{u_i, (\bM_i, c_i)}$ is
$\supseteq$-increasing convex representation for each $i \in \{1, 2\}$,
then the following monotonicity about feasible sets can be obtained
from \cref{thm:id} and \cref{prop:cmp_amb}:
if $\wpr_1$ is more ambiguity-averse than $\wpr_2$,
then $\bM_1 \subseteq \bM_2$.

\subsection{Aversion to additional ambiguity}\label{subsec:aaa}

The following notion strengthens comparative ambiguity aversion.

\begin{definition}\label{def:cmp_inc}
  The relation $\wpr_1$ is
  \emph{more averse to additional ambiguity than $\wpr_2$}
  if for each $(\lambda, (P, Q), p)
  \in [0, 1] \times \lott(\cF)^2 \times \lott(X)$,
  \begin{equation}\label{eq:cmp_inc}
    \lambda \Dirac[p] + (1 - \lambda)Q \spr_2 \lambda P + (1 - \lambda)Q
    \implies
    \lambda \Dirac[p] + (1 - \lambda)Q \spr_1 \lambda P + (1 - \lambda)Q.
  \end{equation}
\end{definition}

This comparison captures the key idea of increases in ambiguity.
Compared to $\lambda \Dirac[p] + (1 - \lambda)Q$,
the lottery $\lambda P + (1 - \lambda)Q$ has additional ambiguity
(e.g., the latter has more relevant ambiguous events
due to the difference between $\Dirac[p]$ and $P$).
For evaluations of constant acts,
the choice of ambiguity perceptions is irrelevant.
Thus, in the lottery $\lambda \Dirac[p] + (1 - \lambda)Q$,
replacing $\Dirac[p]$ with any other lottery $P$
makes the {\dm}s' perception choice problem more complicated.
If $\wpr_1$ is more averse to additional ambiguity than $\wpr_2$,
then {\dm} $1$ obtains less benefit
from the choice of her ambiguity perception,
which corresponds to evaluating lotteries with a larger ambiguity perception.

To formalize this intuition, for each $i \in \{1, 2\}$,
define the correspondence $\cC_i \colon \lott(\cF) \coto \bM_i$ by
\begin{equation}
  \cC_i(P)
  = \argmax_{M \in \bM_i} \Bigl[
    \int \Bigl(\min_{\mu \in M} \int u_i \cmpf f \diff \mu\Bigr)\diff P(f)
    - c_i(M)\Bigr].
\end{equation}
That is, $\cC_i (P)$ is
the set of {\dm} $i$'s optimal ambiguity perceptions at $P$.
Define the relation $\usd$ on $2^\bK$ by $\bM \usd \bM'$ if
for each $M' \in \bM'$, there exists $M \in \bM$ such that $M \subseteq M'$.
The next proposition formally establishes the above relationship.

\begin{proposition}\label{prop:cmp_inc}
  Suppose that $\tuple{u_i, (\bM_i, c_i)}$ is convex for each $i \in \{1, 2\}$.
  The relation $\wpr_1$ is more averse to additional ambiguity than $\wpr_2$
  if and only if
  $u_1 \pat u_2$ and $\cC_2(P) \usd \cC_1(P)$ for each $P \in \lott(\cF)$.
\end{proposition}

By definition,
if $\wpr_1$ is more averse to additional ambiguity than $\wpr_2$,
then $\wpr_1$ is more ambiguity-averse than $\wpr_2$.
The converse is not true as the following example shows.

\begin{example}\label{ex:cmp}
  Let $\Omega = \{0, 1\}$ and let $X = \SR$.
  Identify each $\mu \in [0, 1]$
  with the prior such that the probability of state $1$ is $\mu$.
  Define $u \colon \lott(X) \to \SR$ by $u(p) = \int x \diff p(x)$.
  Consider the two costly ambiguity representations
  $\tuple{u, (\bM_1, c_1)}$ and $\tuple{u, (\bM_2, c_2)}$,
  where $\bM_1 = \bM_2 = \set{[\lambda, 1] \mvert \lambda \in [0, 1]}$ and
  for each $\lambda \in [0, 1]$,
  \begin{equation}
    c_1([\lambda, 1]) = 2\lambda, \qquad
    c_2([\lambda, 1]) = \max \{\lambda, 3\lambda - 1\}.
  \end{equation}
  For each $i \in \{1, 2\}$,
  let $\wpr_i$ be the preference represented by $\tuple{u, (\bM_i, c_i)}$.
  Since $\bM_1$ and $\bM_2$ are $\supseteq$-increasing and since $c_1 \ge c_2$,
  it follows from \cref{prop:cmp_amb} that
  $\wpr_1$ is more ambiguity-averse than $\wpr_2$.
  Since the act $f$ of the form $f(\omega) = \Dirac[2\omega]$ satisfies
  \begin{equation}
    \cC_1(\Dirac[f]) = \bM_1, \qquad
    \cC_2(\Dirac[f]) = \Bigl\{\Bigl[\half, 1\Bigr]\Bigr\},
  \end{equation}
  we have $\cC_1(\Dirac[f]) \not\mathrel{\usd} \cC_2(\Dirac[f])$,
  which shows by \cref{prop:cmp_inc} that
  $\wpr_1$ is not more averse to additional ambiguity than $\wpr_2$.
\end{example}

\subsection{Aversion to ex ante randomization}\label{subsec:aear}

For each $(i, P) \in \{1, 2\} \times \lott(\cF)$,
let $\bar P_i \in \lott(X)$ be such that $\Dirac[\bar P_i] \ipr_i P$.

\begin{definition}\label{def:cmp_eaar}
  The relation $\wpr_1$ is
  \emph{more averse to ex ante randomization than $\wpr_2$}
  if for each $(\lambda, (P, Q)) \in [0, 1] \times \lott (\cF)^2$,
  \begin{equation}
    \lambda \Dirac[\bar P_2] + (1 - \lambda)\Dirac[\bar Q_2]
    \spr_2 \lambda P + (1 - \lambda)Q \implies
    \lambda \Dirac[\bar P_1] + (1 - \lambda)\Dirac[\bar Q_1]
    \spr_1 \lambda P + (1 - \lambda)Q.
  \end{equation}
\end{definition}

Under \axmref{axm:reg}, \axmref{axm:ica} and \axmref{axm:ubd},
the following condition is equivalent to \axmref{axm:eaar}
(see \cref{subsec:aux} for the proof):
for each $(\lambda, (P, Q), (p, q))
\in [0, 1] \times \lott(\cF)^2 \times \lott(X)^2$,
if $P \ipr \Dirac[p]$ and $Q \ipr \Dirac[q]$,
then $\lambda \Dirac[p] + (1 - \lambda)\Dirac[q]
\wpr \lambda P + (1 - \lambda)Q$.
If the {\dm} is averse to the ex ante randomization between $P$ and $Q$,
then she cannot deal with it as randomization between unambiguous alternatives,
that is, she exhibits $\lambda \Dirac[p] + (1 - \lambda)\Dirac[q]
\spr \lambda P + (1 - \lambda)Q$.
Hence, the above condition compares
how often {\dm}s avoid ex ante randomization.

Recall that ex ante randomization is not beneficial for {\dm}s in our model
because the optimal ambiguity perception
can change due to that mixture operation.
This negative effect arises
only when the optimal ambiguity perceptions
associated with the two original lotteries differ.
Thus, if $\wpr_1$ is more averse to ex ante randomization than $\wpr_2$,
then {\dm} $1$ employs different ambiguity perceptions more often
than {\dm} $2$.

The following proposition characterizes
comparative aversion to ex ante randomization.

\begin{proposition}\label{prop:cmp_eaar}
  Suppose that $\tuple{u_i, (\bM_i, c_i)}$ is convex for each $i \in \{1, 2\}$.
  The relation $\wpr_1$ is more averse to ex ante randomization than $\wpr_2$
  if and only if for each $(P, Q) \in \lott(\cF)^2$,
  \begin{equation}
    \cC_2(P) \cap \cC_2(Q) \ne \emptyset \implies
    \cC_1(P) \cap \cC_1(Q) \ne \emptyset.
  \end{equation}
\end{proposition}

An empty intersection of $\cC_i(P)$ and $\cC_i(Q)$ indicates
that {\dm} $i$ does not use a common ambiguity perception
when evaluating $P$ and $Q$.

\section{Discussion}\label{sec:disc}

\subsection{Maxmin expected utility}\label{subsec:meu}

In a costly ambiguity perception representation,
if the feasible set is a singleton,
the preference boils down to a natural extension
of \citets{GilboaSchmeidler1989} {\meu} preference to the current framework.
In this case, the {\dm} uses a fixed one ambiguity perception
to evaluate all the lotteries.
Formally, an \emph{{\meu} representation of $\wpr$} is a pair $(u, M)$ of
a surjective vNM function and a member of $\bK$ such that
\begin{equation}
  P \mapsto
     \int \Bigl(\min_{\mu \in M} \int u \cmpf f \diff \mu\Bigr) \diff P(f)
\end{equation}
represents $\wpr$.
As discussed in \citet{BLP2011},
the {\meu} model cannot accommodate the typical pattern
in the reflection example (\cref{tab:ref}).

At the axiomatic level,
the {\meu} model differs from the costly ambiguity perception model
in the attitude toward ex ante randomization.
It can be characterized by imposing the standard independence axiom
with respect to ex ante randomization.

\begin{axiom}[Independence]\label{axm:ind}
  For each $(\lambda, (P, Q, R)) \in [0, 1] \times \lott(\cF)^3$,
  we have $P \wpr Q$ if and only if
  $\lambda P + (1 - \lambda)R \wpr \lambda Q + (1 - \lambda)R$.
\end{axiom}

\axmref*{axm:ind} requires the {\dm} to be neutral to ex ante randomization:
for each $(\lambda, (P, Q)) \in [0, 1] \times \lott(\cF)^2$,
if $P \ipr Q$, then $P \ipr \lambda P + (1 - \lambda)Q$.
Thus, it excludes aversion to increases in relevant ambiguous events, so
is incompatible with the pattern described in the temperature example
in the \nameref{sec:intro}.

\begin{corollary}\label{cor:rep_meu}
  A costly ambiguity perception preference $\wpr$ satisfies \axmref{axm:ind}
  if and only if it has an {\meu} representation.
\end{corollary}

\subsection{Optimization of ambiguity perceptions}\label{subsec:spec}

We discuss three special cases of the costly ambiguity perception model,
in each of which the {\dm} optimally chooses her ambiguity perception.
They have been studied in the literature
under the domain without ex ante randomization (see also \cref{subsec:lit}).

The first two cases restrict the \dm's feasible set of ambiguity perceptions.

A \emph{capacity} is a real-valued function $\nu$ on $2^\Omega$
that satisfies $\nu(\emptyset) = 0$, $\nu(\Omega) = 1$,  and
$\nu(E) \le \nu(F)$ for each $(E, F) \in 2^\Omega \times 2^\Omega$
with $E \subseteq F$.
For each capacity $\nu$,
the \emph{core of $\nu$} is the set $\set{\mu \in \prob(\Omega)
\mvert \mu(E) \ge \nu(E) \tforeach E \in 2^\Omega}$, and
is denoted by $\core(\nu)$.
A capacity $\nu$ is \emph{convex}
if $\nu(E) + \nu(F) \le \nu(E \cap F) + \nu(E \cup F)$
for each $(E, F) \in 2^\Omega \times 2^\Omega$.
Every convex capacity has a nonempty core.
An \emph{optimal ambiguity attitude representation of $\wpr$}
is a costly ambiguity perception representation $\tuple{u, (\bM, c)}$
where each member of $\bM$ is the core of some convex capacity \citep{PTX2026}.
For each $\phi \in \SR^\Omega$, its Choquet integral $\int \phi \diff \nu$
with respect to a convex capacity $\nu$,
defined as $\int \phi \diff \nu
= \int_0^\infty \nu(\phi^{-1}([t, \infty))) \diff t
+ \int_{-\infty}^0 (\nu(\phi^{-1}([t, \infty))) - 1) \diff t$,
equals $\min_{\mu \in \core(\nu)} \int \phi \diff \mu$.
Thus, if $\wpr$ has an optimal ambiguity attitude representation
$\tuple{u, (\bM, c)}$,
then there exists a pair $(N, C)$ of a set of convex capacities and
a grounded real-valued function on $N$ such that
\begin{equation}
  P \mapsto \max_{\nu \in N} \Bigl[
    \int \Bigl( \int u \cmpf f \diff \nu\Bigr) \diff P(f) - C(\nu)
    \Bigr]
\end{equation}
represents $\wpr$.

The second case restricts feasible ambiguity perceptions to singletons.
A \emph{moral-hazard representation of $\wpr$}
is a pair $\tuple{u, (M, C)}$ of a surjective vNM function and
a pair of a nonempty compact subset of $\prob(\Omega)$ and
a lower semicontinuous grounded real-valued function on $M$ such that
\begin{equation}
  P \mapsto \max_{\mu \in M} \Bigl[
    \int \Bigl( \int u \cmpf f \diff \mu\Bigr) \diff P(f) - C(\mu)
    \Bigr]
\end{equation}
represents $\wpr$ \citep{Sinander2025}.

In the third special case,
the {\dm} chooses her ambiguity perception without costs.
A \emph{dual-self expected utility representation of $\wpr$}
is a pair $(u, \bM)$ of
a surjective vNM function and a nonempty compact subset of $\bK$
such that
\begin{equation}
  P \mapsto
    \max_{M \in \bM} \int \Bigl(\min_{\mu \in M} \int u \cmpf f \diff \mu\Bigr)
    \diff P(f)
\end{equation}
represents $\wpr$ \citep{CFIL2022}.

The restrictions on the feasible sets of ambiguity perceptions
in the first two cases
represent different attitudes toward randomization timing.
The following two axioms further restrict
when the {\dm} believes ex post randomization provides a hedge
than \axmref{axm:irtc} does.
A pair $(f, g) \in \cF^2$ is \emph{comonotonic}
if $\Dirac[f(\omega)] \wpr \Dirac[f(\omega')]$ is equivalent to
$\Dirac[g(\omega)] \wpr \Dirac[g(\omega')]$
for each $(\omega, \omega') \in \Omega^2$.

\begin{axiom}[Indifference to randomization timing of comonotonic acts]
  \label{axm:irtcm}
  For each $((\kappa, \lambda), P, (f, g))
  \in [0, 1]^2 \times \lott(\cF) \times \cF^2$,
  if $(f, g)$ is comonotonic,
  then $\kappa \Dirac[\lambda f + (1 - \lambda)g] + (1 - \kappa)P
  \ipr \kappa[\lambda \Dirac[f] + (1 - \lambda)\Dirac[g]] + (1 - \kappa)P$.
\end{axiom}

\begin{axiom}[Indifference to randomization timing]\label{axm:irt}
  For each $((\kappa, \lambda), P, (f, g))
  \in [0, 1]^2 \times \lott(\cF) \times \cF^2$,
  we have $\kappa \Dirac[\lambda f + (1 - \lambda)g] + (1 - \kappa)P
  \ipr \kappa[\lambda \Dirac[f] + (1 - \lambda)\Dirac[g]] + (1 - \kappa)P$.
\end{axiom}

\axmref*{axm:irtcm} requires that
the {\dm} believes that mixing two comonotonic acts does not provide a hedge.
If each of her ambiguity perceptions is the core of a convex capacity,
then the associated {\meu} function preserves the mixture of comonotonic acts,
so the ex post randomization of comonotonic acts is not beneficial to her.
Under \axmref{axm:irt}, randomization timing never matters.
If the {\dm} perceives no ambiguity,
then she has no reason to prefer ex post randomization,
as \axmref{axm:aepr} is motivated by a preference
for hedging against ambiguity.

Strengthening \axmref{axm:ica} to the following
characterizes costless choices of ambiguity perceptions.

\begin{axiom}[Strong independence of constant acts]\label{axm:sica}
  For each $(\lambda, (P, Q), p)
  \in [0, 1] \times \lott(\cF)^2 \times \lott(X)$,
  we have $P \wpr Q$ if and only if
  $\lambda P + (1 - \lambda)\Dirac[p] \wpr \lambda Q + (1 - \lambda)\Dirac[p]$.
\end{axiom}

If the \dm's costs are always zero,
then her optimal ambiguity perception at each lottery remains unchanged
when the lottery is ex ante randomized with a constant act.
Thus, ex ante randomization with the same constant act
does not change the ranking between two lotteries.

The following corollary collects the characterizations of the three models.

\begin{corollary}\label{cor:rep_spec}
  Suppose that $\wpr$ is a costly ambiguity perception preference.
  \begin{enumerate}
    \item \label{item:rep_oap}
      The relation $\wpr$ satisfies \axmref{axm:irtcm} if and only if
      it has an optimal ambiguity attitude representation.
    \item \label{item:rep_mh}
      The relation $\wpr$ satisfies \axmref{axm:irt} if and only if
      it has a moral-hazard representation.
    \item \label{item:rep_dseu}
      The relation $\wpr$ satisfies \axmref{axm:sica} if and only if
      it has a dual-self expected utility representation.
  \end{enumerate}
\end{corollary}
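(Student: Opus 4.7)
By \cref{thm:id}(\ref{item:id_cn}), pick the canonical costly ambiguity perception representation $\tuple{u, (\bM^*, c)}$ of $\wpr$, where $\bM^*$ is the compact convex multi-{\meu} representation of $\wpr^*$ furnished by \cref{prop:mmeu} and $c = c^\star_{\wpr, u}|_{\bM^*}$. Write $V_M(\phi) = \min_{\mu \in M} \int \phi \diff \mu$ for $\phi \in \SR^\Omega$ and $M \in \bK$. The three parts share a common idea: each additional axiom is translated into a structural property of $V_M$ for each $M \in \bM^*$ via the subrelation $\wpr^*$, after which classical results identify the form of $M$ (and of $c$).

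For parts~(\ref{item:rep_oap}) and~(\ref{item:rep_mh}), fix $(f, g) \in \cF^2$ and $\lambda \in [0, 1]$, and set $P = \Dirac[\lambda f + (1 - \lambda) g]$ and $Q = \lambda \Dirac[f] + (1 - \lambda) \Dirac[g]$. \axmref{axm:imtcm} (respectively \axmref{axm:imt}) is exactly the conjunction $P \wpr^* Q$ and $Q \wpr^* P$ for all comonotonic (respectively all) $(f, g)$. The multi-{\meu} representation of $\wpr^*$ then yields
\begin{equation*}
  V_M(\lambda\, u \cmpf f + (1 - \lambda)\, u \cmpf g)
  = \lambda V_M(u \cmpf f) + (1 - \lambda) V_M(u \cmpf g)
  \quad \text{for every } M \in \bM^*.
\end{equation*}
Since $u$ is surjective, varying $(f, g, \lambda)$ sweeps out all comonotonic (respectively all) pairs in $\SR^\Omega$ with all convex weights, so each $V_M$ is comonotonic-additive (respectively additive) on $\SR^\Omega$. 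Schmeidler's characterization then identifies $V_M$ with the Choquet integral of a convex capacity $\nu_M$ with $M = \core(\nu_M)$ in the first case, and forces $V_M$ to be a linear expectation---so $M = \{\mu_M\}$---in the second. Substituting back gives an optimal ambiguity perception representation with capacities $\nu_M$ (respectively a moral-hazard representation with $C(\mu_M) = c(\{\mu_M\})$). The converses are direct: comonotonic (respectively unrestricted) additivity of $V_M$ makes the maxima defining the two CAP utilities equal termwise in $M$, even in the presence of the background lottery $(1 - \kappa) R$.

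For part~(\ref{item:rep_dseu}), the ``if'' direction is immediate, since a dual-self utility satisfies $U(\lambda P + (1 - \lambda) \Dirac[p]) = \lambda U(P) + (1 - \lambda) u(p)$, which yields \axmref{axm:sica}. For the ``only if'' direction, \axmref{axm:ubd} supplies a certainty equivalent for every lottery, so combining \axmref{axm:sica} with the affinity of $U$ on lotteries of constant acts (which holds because $c$ is grounded) gives $U(\lambda P + (1 - \lambda) \Dirac[p]) = \lambda U(P) + (1 - \lambda) u(p)$ for every $(\lambda, P, p)$. Substituting the CAP formula and rearranging yields
\begin{equation*}
  \max_{M \in \bM^*} \Bigl[\int V_M \diff P - \frac{c(M)}{\lambda}\Bigr] = U(P)
  \quad \forall \lambda \in (0, 1].
\end{equation*}
A maximizer $M_\lambda$ therefore satisfies $c(M_\lambda)(\lambda^{-1} - 1) \le 0$ after comparing with the $\lambda = 1$ program, which forces $c(M_\lambda) = 0$ whenever $\lambda < 1$. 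Hence $\bM_0 := c^{-1}(\{0\})$---a nonempty compact subset of $\bM^*$ by lower semicontinuity and groundedness of $c$ on the compact $\bM^*$---satisfies $U(P) = \max_{M \in \bM_0} \int V_M \diff P$ for every $P$, which is a dual-self expected utility representation.

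The main obstacle in all three parts is to extract a per-perception structural property---$V_M$ comonotonic-linear in~(\ref{item:rep_oap}), $V_M$ linear in~(\ref{item:rep_mh}), or $c(M) = 0$ in~(\ref{item:rep_dseu})---from an aggregate indifference in which both the maximum over $M$ and the cost $c$ could, a priori, mask that structure. The subrelation $\wpr^*$ together with \cref{prop:mmeu} does the decoupling in the first two cases; the rescaling trick that drives maximizers into $c^{-1}(\{0\})$ as $\lambda \to 0$ plays the analogous role in part~(\ref{item:rep_dseu}).
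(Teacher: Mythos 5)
Your proof is correct and follows essentially the same route as the paper's: you use the multi-{\meu} representation of $\wpr^*$ furnished by \cref{prop:mmeu} together with \cref{thm:id}(\ref{item:id_cn}) to translate the indifference axioms into (comonotonic) additivity of each $H_M$, then apply Schmeidler's characterization for parts~(\ref{item:rep_oap}) and~(\ref{item:rep_mh}), and for part~(\ref{item:rep_dseu}) use a rescaling argument to push an optimal perception into $c^{-1}(\{0\})$. The only cosmetic difference is that the paper fixes $\lambda = 1/2$ in part~(\ref{item:rep_dseu}) whereas you work with general $\lambda < 1$; the mechanism is the same.
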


Each of the three models is compatible with the typical pattern
in the reflection example (\cref{tab:ref}).
Every moral-hazard preference must produce
(the weak version of) the typical pattern,
since it satisfies the opposite of \textit{uncertainty aversion}
(i.e., the ex post randomization of two indifferent acts can never be better
than the original acts).
For the same reason, it is incompatible with the Ellsberg-type behavior.
We present in \cref{sec:machina} an example of
an absolutely ambiguity-averse dual-self expected utility preference
that generates the typical choices.
An important behavioral pattern
excluded from the dual-self expected utility model, but not from ours,
is the $50$--$51$ example by \citet{Machina2009}
(see \cref{sec:machina} for details).
This example is also known to be difficult to accommodate
in standard ambiguity models \citep{Machina2009,BLP2011}.
Finally, as \citet{PTX2026} point out,
an optimal ambiguity attitude preference can simultaneously
be absolutely ambiguity-averse and
explain both the reflection and $50$--$51$ examples.
Indeed, \cref{ex:ref} falls within
the class of optimal ambiguity attitude models and
generalizes the numerical example presented by \citet{PTX2026}.

\subsection{Intersection of ambiguity perceptions}\label{subsec:int}

The intersection of feasible ambiguity perceptions
in a costly ambiguity perception representation is
the set of priors that the {\dm} cannot eliminate,
whichever perception she chooses.
We show that the intersection is uniquely identified from the preference
by providing an explicit representation.
For each $(\mu, P) \in \prob(\Omega) \times \lott(\cF)$,
define $P^\mu \in \lott(X)$ by
$P^\mu(Z) = \int (\int f(\omega)(Z) \diff \mu(\omega))\diff P(f)$,
which is the overall distribution of outcomes given a prior $\mu$.
The \emph{core of $\wpr$} is the set $\bigcap_{P \in \lott(\cF)}
\set{\mu \in \lott(\Omega) \mvert \Dirac[P^\mu] \wpr P}$, and
is denoted by $\core(\wpr)$.
It is the set of priors under which
the overall outcome distribution of each lottery is preferred
to the original lottery.

\begin{proposition}\label{prop:chara_int}
  Let $\tuple{u, (\bM, c)}$
  be a costly ambiguity perception representation of $\wpr$.
  Then, $\bigcap \bM = \core(\wpr)$.
\end{proposition}

Thus, the intersection of feasible ambiguity perceptions
is independent of the choice of a representation,
as it always equals the core of the preference.
This uniqueness does not require convexity of the feasible set,
unlike the uniqueness of the cost structure (\cref{thm:id}).

Moreover, the intersection is tied to the notion of absolute ambiguity aversion
due to \citet{GhirardatoMarinacci2002}.
The relation $\wpr$ is \emph{absolutely ambiguity-averse}
if it is more ambiguity-averse than
a nondegenerate subjective expected utility preference;
that is, a binary relation represented by the function
$P \mapsto \int (\int u \cmpf f \diff \mu) \diff P(f)$
for some vNM function $u$ and a prior $\mu \in \prob(\Omega)$.
Absolutely ambiguity aversion captures
the typical pattern in \citets{Ellsberg1961} thought experiments:
preference for a risky bet over ambiguous bets.
Theorem 12 of \citet{GhirardatoMarinacci2002} characterizes
absolutely ambiguity aversion of a preference over acts
by the nonemptiness of the core.
An analogous result holds in our framework.

\begin{corollary}\label{cor:absaa}
  Let $\tuple{u, (\bM, c)}$ be
  a costly ambiguity perception representation of $\wpr$.
  Then, $\wpr$ is absolutely ambiguity-averse if and only if
  $\bigcap \bM \ne \emptyset$.
\end{corollary}

We omit the proof of \cref{cor:absaa}
since it is a direct consequence of the following proposition for comparatives
of a costly ambiguity perception preference and an {\meu} preference.

\providecommand{\wbp}{\geqslant}
\providecommand{\sbp}{>}

\begin{proposition}\label{prop:cmp_meu}
  Let $\tuple{u, (\bM, c)}$ be a costly ambiguity perception representation
  of $\wpr$, and
  let $(v, L)$ be an {\meu} representation
  of a binary relation $\wbp$ on $\lott(\cF)$.
  The following statements are equivalent.
  \begin{equivalent}
    \item \label{item:cmp_meu_meet}
      $u \pat v$ and $L \subseteq \bigcap \bM$.
    \item \label{item:cmp_meu_aa}
      $\wpr$ is more ambiguity-averse than $\wbp$.
    \item \label{item:cmp_meu_maaa}
      $\wpr$ is more averse to additional ambiguity than $\wbp$.
  \end{equivalent}
\end{proposition}

{\Meu} preferences are natural benchmarks
for measuring aversion to additional ambiguity
since they are neutral to ex ante randomization.
According to \cref{prop:cmp_meu},
comparative aversion to additional ambiguity
of a costly ambiguity perception preference $\wpr$
against an {\meu} preference $\wbp$ actually coincides with
comparative ambiguity aversion of $\wpr$ against $\wbp$,
although the two comparatives are different in general (\cref{ex:cmp}).
They are characterized by the same condition (\ref{item:cmp_meu_meet})
on the intersection of the feasible set.

\subsection{Dual representation}\label{subsec:dual}

Our costly ambiguity perception model is also closely related to
the double maxmin expected utility model proposed by \citet{KeZhang2020};
it is a ``dual'' version of the dual-self expected utility model.
The domain of their model is the same as ours:
the original Anscombe--Aumann domain with two-stage randomization.

A \emph{cautious costly ambiguity perception representation of $\wpr$}
is a pair $\tuple{u, (\bM, c)}$ of
a surjective vNM function on $\lott(X)$ and a cost structure such that
\begin{equation}
  P \mapsto \min_{M \in \bM} \Bigl[
    \int \Bigl(\min_{\mu \in M} \int u \cmpf f \diff \mu\Bigr) \diff P(f)
    + c(M)\Bigr]
\end{equation}
represents $\wpr$.
A \emph{double maxmin expected utility representation of $\wpr$}
is a pair $(u, \bM)$ of
a surjective vNM function and a nonempty compact subset of $\bK$
such that
\begin{equation}
  P \mapsto
    \min_{M \in \bM} \int \Bigl(\min_{\mu \in M} \int u \cmpf f \diff \mu\Bigr)
    \diff P(f)
\end{equation}
represents $\wpr$.
The former representation generalizes
\citets{KeZhang2020} latter representation by adding the cost term.

\citet{KeZhang2020} motivate the double maxmin expected utility model
by considering the case where ex ante randomization may help
the {\dm} hedge against ambiguity because of the \dm's subjective perception
of the timing of randomization.
The next is a key difference from the costly ambiguity perception model.

\begin{axiom}[Ex ante attraction to randomization]\label{axm:eapr}
  For each $(\lambda, (P, Q)) \in [0, 1] \times \lott(\cF)^2$,
  if $P \wpr Q$, then $\lambda P + (1 - \lambda)Q \wpr Q$.
\end{axiom}

An important implication of \axmref{axm:eapr}, compared to \axmref{axm:eaar},
is the possibility of violations of a form of dynamic consistency
(see also \citet{EGK2016}).
Let $(f, g, h) \in \cF^3$.
Suppose that the {\dm} prefers $f$ to $g$, and $g$ to $h$.
Under \axmref{axm:eapr}, the ranking
$\half \Dirac[g] + \half \Dirac[h] \spr \Dirac[f]$ is possible.
That is, ex ante, the {\dm} prefers randomizing $g$ and $h$ over $f$.
However, after the randomization,
she will see the realized act ($g$ or $h$) worse than $f$,
which is dynamically inconsistent.
In contrast, \axmref{axm:eaar} excludes this pattern:
the {\dm} would never choose to randomize over inferior acts
if a superior act is available.

By adopting \axmref{axm:eapr} instead of \axmref{axm:eaar},
we can replace the ``$\max$'' in the costly ambiguity perception representation
with ``$\min$''.

\begin{corollary}\label{cor:rep_kz}
  Assume \axmref{axm:reg}, \axmref{axm:mon}, \axmref{axm:aepr},
  \axmref{axm:irtc}, \axmref{axm:ica}, and \axmref{axm:ubd}.
  \begin{enumerate}
    \item \label{item:kz_ccap}
      The relation $\wpr$ satisfies \axmref{axm:eapr} if and only if
      it has a cautious costly ambiguity perception representation.
    \item \label{item:kz_dmeu}
      The relation $\wpr$ satisfies \axmref{axm:eapr} and \axmref{axm:sica}
      if and only if it has a double maxmin expected utility representation.
  \end{enumerate}
\end{corollary}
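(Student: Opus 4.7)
The proof parallels that of \cref{thm:rep} and \cref{cor:rep_spec}(\ref{item:rep_dseu}), replacing convexity by concavity in the key steps. I split the argument into necessity, sufficiency for part (\ref{item:kz_ccap}), and the reduction to part (\ref{item:kz_dmeu}) via \axmref{axm:sica}.

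\textbf{Necessity.} A cautious CAP utility
\begin{equation}
  U(P) = \min_{M \in \bM}\Bigl[\int V_M(f) \diff P(f) + c(M)\Bigr],
  \qquad
  V_M(f) = \min_{\mu \in M} \int u \cmpf f \diff \mu,
\end{equation}
is concave in $P$ as an infimum of affine functionals, which immediately delivers \axmref{axm:eapr}. The remaining axioms \axmref{axm:reg}--\axmref{axm:ubd} follow from the same verifications as in the necessity half of \cref{thm:rep}, because each $V_M$ is an MEU functional in $f$ (concave in $f$, linear on $\lott(X)$, and integrated linearly against $P$). For part (\ref{item:kz_dmeu}), if $c \equiv 0$ then $U(\lambda P + (1-\lambda)\Dirac[p]) = \lambda U(P) + (1-\lambda)u(p)$, which delivers \axmref{axm:sica}.

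\textbf{Sufficiency for part (\ref{item:kz_ccap}).} I would reuse the proof of \cref{thm:rep} verbatim through the construction of the surjective vNM function $u$, the topological space $\bU$, and the mixture-linear transform $m \mapsto m^\vee$; none of these steps depends on whether \axmref{axm:eaar} or \axmref{axm:eapr} is assumed. The pair (\axmref{axm:eapr}, \axmref{axm:ica}) then produces a \emph{concave} niveloid $W$ on $\set{m^\vee \mvert m \in \lott(\SR^\Omega)}$ with $W(P_u^\vee) = U(P)$ representing $\wpr$, in place of the convex niveloid of \cref{thm:rep}. Applying the concave counterpart of \cref{prop:nv_var_me}, one obtains
\begin{equation}
  W(m^\vee) = \min_{V \in \cV}\Bigl(\int V \diff m + \gamma(V)\Bigr),
\end{equation}
and the step identifying each $V \in \cV$ as an MEU functional uses only \axmref{axm:aepr} and \axmref{axm:imtc}---both unchanged---so it carries over without modification. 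Translating back yields the cautious CAP representation.

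\textbf{Sufficiency for part (\ref{item:kz_dmeu}), and main obstacle.} Given part (\ref{item:kz_ccap}), I would add \axmref{axm:sica} and mirror the argument for \cref{cor:rep_spec}(\ref{item:rep_dseu}). Concretely, \axmref{axm:sica} forces $U$ and $U_\lambda(P) := \min_M[\int V_M \diff P + c(M)/\lambda]$ to represent the same preference for every $\lambda \in (0,1]$; since both minima arise from canonical cost structures on the same $\bM$, uniqueness of the canonical cost forces $c = c/\lambda$ for every $\lambda \in (0,1]$, and groundedness then gives $c \equiv 0$, yielding a double MEU representation. The main technical hurdle is the concave counterpart of \cref{prop:nv_var_me}: in spirit, it is obtained by applying the convex form to the reflected functional $\hat W(\phi) := -W(-\phi)$, but one must check that the mirrored domain satisfies the hypotheses of the convex version. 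Once this is in hand, the rest is a word-for-word mirror of the proof of \cref{thm:rep}.
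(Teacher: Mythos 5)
Your plan for necessity and for the sufficiency of part (\ref{item:kz_ccap}) matches the paper's intended route: the authors state that the proof of \cref{cor:rep_kz} is the proof of \cref{thm:rep} with ``convex'' replaced by ``concave'' and ``$\max$'' by ``$\min$,'' and your reflection trick $\hat W(\xi) := -W(-\xi)$ applied to $\hat W$ on $-\Xi$ is exactly the right way to transport \cref{prop:nv_var_me}; $-\Xi$ is still a convex conical tube, so the hypotheses transfer. The remark that the MEU-identification step uses only \axmref{axm:aepr} and \axmref{axm:imtc} and is untouched is also correct.

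For part (\ref{item:kz_dmeu}) your concrete argument departs from the paper's. The paper mirrors the proof of \cref{cor:rep_spec}(\ref{item:rep_dseu}): for each $P$, take $p$ with $\Dirac[p] \ipr P$, set $Q = \tfrac12 P + \tfrac12\Dirac[p]$, pick a minimizer $M$ at $Q$, and use \axmref{axm:sica} ($P \ipr Q$) together with $\int H_M \diff P_u + c(M) \ge U(P)$ and $c(M) \ge 0$ to force $c(M) = 0$; then $(u,\bM \cap c^{-1}(\{0\}))$ is the representation. That argument is elementary and does not invoke any identification result. Your argument instead rescales the cost by $1/\lambda$, notes $U_\lambda$ represents the same preference, and appeals to ``uniqueness of the canonical cost.'' That route is workable in spirit, but be aware of two points. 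First, it relies on a concave/``min'' analogue of \cref{thm:id} (and, implicitly, on the appropriate canonicality notion for the cautious representation), neither of which is stated in the paper---so you would have to prove it. Second, the final step as written is imprecise: it is not groundedness that gives $c \equiv 0$; rather, $c(M) = c(M)/\lambda$ for $\lambda \in (0,1)$ forces $c(M) \in \{0,\infty\}$ (cost functions may be extended-real-valued), and you then pass to $\bM \cap c^{-1}(\{0\})$, which is nonempty and compact by groundedness and lower semicontinuity. The paper's direct mirror avoids both issues and is the cleaner route.
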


We omit the proof of \cref{cor:rep_kz} since the same argument as
in the proof of \cref{thm:rep} and \cref{cor:rep_spec} (\ref{item:rep_dseu})
applies by changing ``convexity'' to ``concavity'' and
``$\max$'' to ``$\min$''.

In addition to the set of axioms in \cref{cor:rep_kz},
to characterize the double maxmin expected utility model,
\citet{KeZhang2020} impose the following axiom.

\begin{axiom}[Preference for statewise randomization]\label{axm:psr}
  For each $(\lambda, (f, g)) \in (0, 1) \times \cF^2$,
  \begin{enumerate}
    \item \label{item:psr}
      $\Dirac[f] \wpr \Dirac[g]$ implies
      $\Dirac[\lambda f + (1 - \lambda)g] \wpr \Dirac[g]$;
    \item \label{item:psr_cind}
      $\Dirac[f] \wpr \Dirac[g]$ if and only if
      $\Dirac[\lambda f + (1 - \lambda)p]
      \wpr \Dirac[\lambda g + (1 - \lambda)p]$ for each $p \in \lott(X)$.
  \end{enumerate}
\end{axiom}

That is, the restriction of $\wpr$ to $\Dirac[\cF]$ is required to satisfy
the \textit{uncertainty aversion} and \textit{certainty independence} axioms.
As we can see in \cref{cor:rep_kz} (\ref{item:kz_dmeu}),
it turns out that \axmref{axm:psr} is redundant under the other axioms.
We can directly check this point at the level of axioms.

\begin{proposition}\label{prop:kz_axm}
  If $\wpr$ satisfies
  \axmref{axm:reg}, \axmref{axm:aepr}, \axmref{axm:irtc},
  \axmref{axm:eapr}, and \axmref{axm:sica},
  then it satisfies \axmref{axm:psr}.
\end{proposition}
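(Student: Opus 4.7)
The plan is to verify the two clauses of \axmref{axm:psr} separately; each follows by a short chain of the listed axioms, so no representation result or convex-analytic machinery is needed.

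For clause~(\ref{item:psr}), assume $\Dirac[f] \wpr \Dirac[g]$. I would first apply \axmref{axm:aepr} with $\kappa = 1$ (permitted since the axiom allows $\kappa \in [0,1]$), which eliminates the auxiliary lottery $R$ and yields
\begin{equation}
\Dirac[\lambda f + (1 - \lambda)g] \wpr \lambda \Dirac[f] + (1 - \lambda)\Dirac[g].
\end{equation}
Then I would apply \axmref{axm:eapr} with $P = \Dirac[f]$ and $Q = \Dirac[g]$, using the hypothesis, to obtain $\lambda \Dirac[f] + (1 - \lambda)\Dirac[g] \wpr \Dirac[g]$. Transitivity from \axmref{axm:reg} chains the two inequalities and delivers the conclusion.

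For clause~(\ref{item:psr_cind}), fix $p \in \lott(X)$. I would apply \axmref{axm:imtc} with $\kappa = 1$ to each of the pairs $(f, p)$ and $(g, p)$ to obtain
\begin{align}
\Dirac[\lambda f + (1 - \lambda)p] &\ipr \lambda \Dirac[f] + (1 - \lambda)\Dirac[p], \\
\Dirac[\lambda g + (1 - \lambda)p] &\ipr \lambda \Dirac[g] + (1 - \lambda)\Dirac[p].
\end{align}
Rewriting both sides of the target biconditional via these indifferences reduces the conclusion to
\begin{equation}
\Dirac[f] \wpr \Dirac[g] \iff \lambda \Dirac[f] + (1 - \lambda)\Dirac[p] \wpr \lambda \Dirac[g] + (1 - \lambda)\Dirac[p],
\end{equation}
which is exactly \axmref{axm:sica} applied to $P = \Dirac[f]$ and $Q = \Dirac[g]$.

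No substantial obstacle is anticipated: the whole argument is bookkeeping in the axioms. The one point worth stressing is the admissibility of $\kappa = 1$ in \axmref{axm:aepr} and \axmref{axm:imtc}, which is what strips away the spurious lottery $R$ and exposes the statewise comparisons needed to chain with \axmref{axm:eapr} and \axmref{axm:sica}.
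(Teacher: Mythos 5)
Your proof is correct and follows exactly the same chain of reasoning as the paper: \axmref{axm:aepr} (at $\kappa = 1$) followed by \axmref{axm:eapr} and transitivity for clause~(\ref{item:psr}), and \axmref{axm:imtc} (at $\kappa = 1$) to convert between mixture timings plus \axmref{axm:sica} for clause~(\ref{item:psr_cind}). The paper leaves the specialization $\kappa = 1$ implicit, but this is precisely how the auxiliary lottery $R$ is discarded there as well.
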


By a similar reasoning, we can see that
the cautious costly ambiguity perception model satisfies
the \textit{weak certainty independence} axiom of \citet{MMR2006Ecta}.
Thus, it coincides with the variational model \citep{MMR2006Ecta}
on the domain of acts, and
so it cannot account for Machina's examples, as shown in \citet{BLP2011}.

\subsection{Related literature}\label{subsec:lit}

The idea of costly choice of ambiguity perceptions also appears
in the independent work by \citet{PTX2026}.
They take as the primitive a preference over acts;
that is, the domain does not involve ex ante randomization.
By generalizing the Choquet expected utility model \citep{Schmeidler1989},
they propose a model in which the {\dm} chooses
an optimal capacity from a feasible set at a cost:
that is, preferences that can be represented by the function
$f \mapsto \max_{\nu \in N} (\int u \cmpf f \diff \nu - C(\nu))$,
where $N$ is a set of capacities
and $C$ is a grounded function on $N$.
They show that such a representation $\tuple{u, (N, C)}$ of a preference exists
if and only if the preference has
a representation $\tuple{u, (\tilde N, \tilde C)}$ such that
$\tilde N$ consists only of convex capacities.
Thus, the convexity of each feasible capacity is not behaviorally meaningful.
By contrast, in our framework, it is behaviorally relevant
because having a representation with a set of convex capacities implies
\axmref{axm:aepr}.
The key behavioral implication of their model is
aversion to mixing comonotonic acts,
which is also motivated
by the absence of hedging power of specific types of randomization.
Compared to their representations with convex capacities,
our model allows any set of priors as a feasible ambiguity perception.
Finally, the difference in the domains plays a crucial role
in the identification of parameters as \cref{ex:id} illustrates.
Using the richer domain,
we obtain a sharper identification result,
which allows us to conduct several comparatives on the parameters.
Thus, we see our and their approaches are complementary to each other:
while theirs is more parsimonious, ours is suitable for identification.

Another model in which the {\dm} optimally chooses her ambiguity perception
is the dual-self expected utility model by \citet{CFIL2022}.
They also consider a preference over acts.
The key difference from the costly ambiguity perception model is that
there is no cost function in their representation.
While we interpret the maximization
as a single \dm's optimal choice of ambiguity perception,
their primary interpretation is an intrapersonal game played by two selves:
pessimism and optimism.
The key axiom in their characterization is \textit{certainty independence}:
for each $(\lambda, (f, g), p) \in [0, 1]\times \cF^2 \times \lott(X)$,
we have $\Dirac[f] \wpr \Dirac[g]$ if and only if
$\Dirac[\lambda f + (1 - \lambda)p] \wpr \Dirac[\lambda g + (1 - \lambda)p]$.
In our framework, this property follows
from \axmref{axm:irt} and \axmref{axm:sica}
(cf.~\cref{prop:kz_axm} in \cref{subsec:dual}).
As shown in \cref{sec:machina}, it is \textit{certainty independence} that
generates the incompatibility with the $50$--$51$ example.

As discussed in \citet{CFIL2022},
if $\wpr$ has a dual-self expected utility representation $(u, \bM)$,
then there exists a nonempty compact subset $\bL$ of $\bK$ such that
the restriction of $\wpr$ to $\Dirac[\cF]$ is represented by
$f \mapsto \min_{L \in \bL} \max_{\ell \in L} \int u \cmpf f \diff \ell$.
Thus, the ordering of ``$\max$'' and ``$\min$'' has no behavioral meaning
in their framework.
On the other hand, in our framework,
the ``$\max$'' and ``$\min$'' operations cannot be reversed
because they are tied to the attitude
toward ex ante and ex post randomization, respectively.

The costly ambiguity perception model is not the only theory
that accounts for Machina's two examples.
One approach is to relax the assumption of expected utility
over unambiguous prospects \citep{DillenbergerSegal2015,Wang2022},
while our model maintains it.
Another approach is taken by
the expected uncertain utility model \citep{GulPesendorfer2014} and
the two-stage evaluation model \citep{He2021},
which do not assume any objective randomization in their domain.
The {\dm} with these models evaluates outcomes differently
depending on whether the outcomes are on an ambiguous event or not.
In contrast, the vector expected utility model \citep{Siniscalchi2009}
addresses the examples on the domain of acts with ex post randomization,
keeping expected utility over unambiguous acts.
It decomposes the \dm's utility into a baseline expected utility term and
an ``adjustment'' term, which penalizes the payoff variability of an act.
The key axiom, \textit{complementary independence},
requires the independence axiom only for acts
having symmetric payoff deviations with respect to some constant act.
Our model is compatible with \textit{complementary independence}, but
does not always satisfy it.
In particular, it has a nontrivial intersection
with the vector expected utility model, and
neither model nests the other.
While the vector expected utility model explains the two examples
by the abstract adjustment term,
our model does so through the optimization process.

The ``dual'' version of the costly ambiguity perception model
we discuss in \cref{subsec:dual} is
a generalization of \citets{KeZhang2020} double maxmin expected utility model,
which further nests \citets{Saito2015} representation
as a special case.
In their model, the {\dm} chooses her ambiguity perception
to minimize the {\meu} value without costs.
Their primitive is the same as ours, but
the {\dm} \emph{prefers} ex ante randomization.
It captures the \dm's belief that
even ex ante randomization helps hedge against ambiguity.
As we discuss after \cref{thm:rep},
whereas they construct their representation as a ``cautious completion''
of a subrelation of the \dm's preference,
we cannot apply that argument due to the cost term.
Our approach is instead to exploit the convexity and translation equivariance
of the utility function more directly.

\appendix

\section{Preliminaries}\label{sec:prelim}

\subsection{Convex analysis}\label{subsec:cvx}

Let $E$ be a normed space, and let $E'$ be its norm dual.
Define the bilinear functional $\dual{\cdot, \cdot}$ on $E \times E'$ by
$\dual{x, x'} = x'(x)$.

For each real-valued function $W$ on a subset $D$ of $E$,
the \emph{convex conjugate of $W$}
is the extended real-valued function $W^*$ on $E'$ of the form
$W^*(x') = \sup_{x \in D} (\dual{x, x'} - W(x))$;
the \emph{subdifferential of $W$}
is the correspondence $\partial W \colon D \coto E'$
of the form $\partial W(x)
= \bigcap_{y \in D} \set{x' \in E' \mvert W(y) \ge W(x) + \dual{y - x, x'}}$;
let $\gr \partial W$ be the graph
$\set{(x, x') \in D \times E' \mvert x' \in \partial W(x)}$ of $\partial W$.

\begin{lemma}\label{lem:cvx_conj}
  Let $W$ be a real-valued function on a subset $D$ of $E$.
  \begin{enumerate}
    \item \label{item:cvx_conj_lsccvx}
      $W^*$ is weak$*$ lower semicontinuous and convex.
    \item \label{item:cvx_conj_ineq}
      $W(x) \ge \dual{x, x'} - W^*(x')$ for each $(x, x') \in D \times E'$.
    \item \label{item:cvx_conj_sd}
      $W(x) = \dual{x, x'} - W^*(x')$ if and only if
      $(x, x') \in \gr \partial W$.
    \item \label{item:cvx_conj_sd_val}
      $\partial W$ has weak$*$ closed convex values.
    \item \label{item:cvx_conj_clgr}
      If $W$ is lower semicontinuous and
      if $B$ is a norm bounded weak$*$ closed subset of $E'$,
      then $\gr \partial W \cap (D \times B)$ is norm $\times$ weak$*$ closed.
  \end{enumerate}
\end{lemma}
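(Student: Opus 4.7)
The plan is to verify the five claims essentially by unpacking the definitions from the lemma, with only item (5) requiring a nontrivial argument.

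For items (1)--(4), the work is routine. For (1), I write $W^*(x') = \sup_{x \in D}(\langle x, x' \rangle - W(x))$ as the pointwise supremum over $x \in D$ of the maps $x' \mapsto \langle x, x' \rangle - W(x)$; each of these is weak$^*$ continuous (weak$^*$-topology is defined so that evaluation at every $x$ is continuous) and affine, hence convex and weak$^*$ lsc, and these two properties are preserved by taking pointwise suprema. For (2), the definition of $W^*$ gives $W^*(x') \ge \langle x, x' \rangle - W(x)$ for every $x \in D$, which rearranges to the stated inequality. For (3), $x' \in \partial W(x)$ means $\langle y, x' \rangle - W(y) \le \langle x, x' \rangle - W(x)$ for every $y \in D$; taking the supremum over $y$ on the left yields $W^*(x') \le \langle x, x' \rangle - W(x)$, and combined with (2) this is equivalent to equality. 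For (4), $\partial W(x)$ is the intersection over $y \in D$ of the sets $\{x' \in E' : W(y) - W(x) \ge \langle y - x, x' \rangle\}$, each of which is a weak$^*$ closed half-space because $x' \mapsto \langle y - x, x' \rangle$ is weak$^*$ continuous; intersections preserve weak$^*$ closedness and convexity.

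The substantive piece is (5). I take a net $(x_\alpha, x'_\alpha)$ in $\gr \partial W \cap (D \times B)$ converging in norm $\times$ weak$^*$ to $(x, x') \in D \times B$ and verify $x' \in \partial W(x)$. Fix $y \in D$; then for every $\alpha$ we have $W(y) \ge W(x_\alpha) + \langle y - x_\alpha, x'_\alpha \rangle$. Lower semicontinuity of $W$ and $x_\alpha \to x$ in norm give $\liminf_\alpha W(x_\alpha) \ge W(x)$. It remains to show $\langle y - x_\alpha, x'_\alpha \rangle \to \langle y - x, x' \rangle$, and then passing to the $\liminf$ in the above inequality yields $W(y) \ge W(x) + \langle y - x, x' \rangle$, as desired.

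The one place where care is needed, and which I regard as the main obstacle, is the mixed convergence $\langle y - x_\alpha, x'_\alpha \rangle \to \langle y - x, x' \rangle$: the bilinear pairing is not jointly continuous for the norm $\times$ weak$^*$ topology in general. The plan is to split $\langle y - x_\alpha, x'_\alpha \rangle = \langle y, x'_\alpha \rangle - \langle x, x'_\alpha \rangle - \langle x_\alpha - x, x'_\alpha \rangle$. The first two terms converge to $\langle y, x' \rangle$ and $\langle x, x' \rangle$ respectively by weak$^*$ convergence of $x'_\alpha$. The third is bounded by $\|x_\alpha - x\| \cdot \sup_{b \in B} \|b\|$, which tends to zero precisely because $B$ is norm bounded. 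This is where the norm-boundedness hypothesis on $B$ is used, and it is the reason the statement restricts to $D \times B$ rather than all of $D \times E'$.
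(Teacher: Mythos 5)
Your proof is correct, and items (1)--(4) follow the same route as the paper's: pointwise supremum of weak$^*$ continuous affine maps for (1); immediate rearrangement of the definition for (2); taking the supremum over $y$ and combining with (2) for (3); intersection of closed half-spaces for (4). (One minor detail: the paper writes the supremum in (1) over $x \in E$, but since $W$ is only defined on $D$ this is evidently a typo, and your $\sup_{x \in D}$ is the correct reading.)

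For item (5) you diverge from the paper, which simply cites Theorem~2.4.2(ix) of Z\u{a}linescu (2002) rather than proving it. Your self-contained net argument is sound: for a net $(x_\alpha, x'_\alpha)$ in $\gr \partial W \cap (D \times B)$ converging to $(x, x')$, the subgradient inequality $W(y) \ge W(x_\alpha) + \dual{y - x_\alpha, x'_\alpha}$ passes to the limit because lower semicontinuity handles $W(x_\alpha)$, and the decomposition $\dual{y - x_\alpha, x'_\alpha} = \dual{y, x'_\alpha} - \dual{x, x'_\alpha} - \dual{x_\alpha - x, x'_\alpha}$ isolates the only problematic cross term, which is killed by norm convergence of $x_\alpha$ and norm boundedness of $B$. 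This is exactly where the hypothesis on $B$ earns its keep, as you note. Your version is more elementary and makes the role of each hypothesis visible, at the cost of a few more lines; the paper's version is shorter but opaque to a reader who does not have Z\u{a}linescu at hand. One small thing worth flagging explicitly in a written-up version: you assume the limit $(x, x')$ lies in $D \times B$, which is the right reading of ``closed'' here (closedness relative to $D \times E'$, with $x' \in B$ automatic since $B$ is weak$^*$ closed); this is the form needed downstream for the upper hemicontinuity argument in the proof of \cref{prop:nv_var_me}, so the interpretation matches the paper's use.
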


\begin{proof}
  (\ref{item:cvx_conj_lsccvx})\enspace
  Since $W^*$ is the pointwise supremum of the family
  $(\dual{x, \cdot} - W(x))_{x \in E}$ of weak$*$ continuous affine functions,
  it is weak$*$ lower semicontinuous and convex.

  (\ref{item:cvx_conj_ineq})\enspace
  For each $(x, x') \in D \times E'$,
  since $W^*(x') \ge \dual{x, x'} - W(x)$,
  we have $W(x) \ge \dual{x, x'} - W^*(x')$.

  (\ref{item:cvx_conj_sd})\enspace
  It follows that $(x, x') \in \gr \partial W$ if and only if
  $W(x) \le \dual{x, x'} - (\dual{y, x'} - W(y))$ for each $y \in D$,
  which is equivalent to $W(x) \le \dual{x, x'} - W^*(x')$.
  The statement follows from part (\ref{item:cvx_conj_ineq}).

  (\ref{item:cvx_conj_sd_val})\enspace
  For each $x \in D$,
  the set $\partial W(x)$ is the intersection of closed half spaces.

  (\ref{item:cvx_conj_clgr})\enspace
  By Theorem 2.4.2 (ix) of \citet{Zalinescu2002}.
\end{proof}

Given any extended real-valued function $\gamma$,
let $\dom \gamma$ be the inverse image of $[-\infty, \infty)$ under $\gamma$.
A convex function $W$ is \emph{proper}
if $\dom W \ne \emptyset$ and never assumes the value $-\infty$.

\begin{lemma}\label{lem:cvx_lip}
  Let $W$ be a Lipschitz continuous proper convex function
  on a convex subset $D$ of $E$.
  \begin{enumerate}
    \item \label{item:cvx_lip_sd}
      $\partial W$ has nonempty values.
    \item \label{item:cvx_lip_lin}
      For each $(x, y) \in D^2$, it follows that
      $\partial W(x) \cap \partial W(y) \ne \emptyset$ if and only if
      $W(\lambda x + (1 - \lambda)y) = \lambda W(x) + (1 - \lambda)W(y)$
      for each $\lambda \in [0, 1]$.
  \end{enumerate}
\end{lemma}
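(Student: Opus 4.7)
The plan for part (\ref{item:cvx_lip_sd}) is to extend $W$ from $D$ to a continuous convex function on the whole space and apply a standard nonemptiness result. Writing $L$ for the Lipschitz constant of $W$, I would define $\hat W \colon E \to \SR$ by
\begin{equation}
  \hat W(x) = \inf_{y \in D}(W(y) + L\norm{x - y}).
\end{equation}
Three properties need checking: (i) $\hat W$ is finite and $L$-Lipschitz on $E$ (the standard infimal-convolution calculation); (ii) $\hat W = W$ on $D$, where ``$\le$'' is immediate by taking $y = x$ and ``$\ge$'' uses the Lipschitz bound $W(x) \le W(y) + L\norm{x - y}$; and (iii) $\hat W$ is convex. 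Property (iii) is the step that warrants care: given $x_1, x_2 \in E$ and $\lambda \in [0, 1]$, pick $\eps > 0$ and $y_i \in D$ nearly achieving the infimum at $x_i$, then test $\hat W(\lambda x_1 + (1 - \lambda)x_2)$ with the convex combination $y := \lambda y_1 + (1 - \lambda) y_2 \in D$; convexity of $W$ on $D$ and the triangle inequality for $L\norm{\cdot}$ give the bound up to $\eps$, and sending $\eps \to 0$ yields convexity. Now $\hat W$ is a continuous convex function on all of $E$, so by the standard Hahn--Banach argument its subdifferential at $x_0 \in D$ is nonempty: pick $x' \in E'$ with $\hat W(y) \ge \hat W(x_0) + \dual{y - x_0, x'}$ for every $y \in E$. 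Restricting to $y \in D$ and using $\hat W|_D = W$ shows $x' \in \partial W(x_0)$.

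For part (\ref{item:cvx_lip_lin}), the $(\Rightarrow)$ direction is a direct subgradient manipulation. If $x' \in \partial W(x) \cap \partial W(y)$, then the two subgradient inequalities $W(y) \ge W(x) + \dual{y - x, x'}$ and $W(x) \ge W(y) + \dual{x - y, x'}$ combine to give $\dual{y - x, x'} = W(y) - W(x)$. For any $z_\lambda := \lambda x + (1 - \lambda)y \in D$, applying the subgradient inequality at $x$ yields
\begin{equation}
  W(z_\lambda) \ge W(x) + (1 - \lambda)\dual{y - x, x'}
  = \lambda W(x) + (1 - \lambda)W(y),
\end{equation}
while convexity gives the reverse inequality, so equality holds.

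The $(\Leftarrow)$ direction uses a midpoint argument. Set $z_0 = \half x + \half y$; by hypothesis $W(z_0) = \half W(x) + \half W(y)$. Using part (\ref{item:cvx_lip_sd}), pick any $x' \in \partial W(z_0)$. The subgradient inequalities at $z_0$ give $W(x) \ge W(z_0) + \dual{x - z_0, x'}$ and $W(y) \ge W(z_0) + \dual{y - z_0, x'}$; summing them and using $(x - z_0) + (y - z_0) = 0$ yields $W(x) + W(y) \ge 2 W(z_0)$, which by the midpoint identity is an equality. Hence both individual inequalities are equalities, so $W(x) - W(z_0) = \dual{x - z_0, x'}$ and likewise at $y$. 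Then for any $u \in D$,
\begin{equation}
  W(u) \ge W(z_0) + \dual{u - z_0, x'}
  = W(x) + \dual{u - x, x'},
\end{equation}
so $x' \in \partial W(x)$; an identical argument shows $x' \in \partial W(y)$.

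The main obstacle is part (\ref{item:cvx_lip_sd}), specifically verifying convexity of the extension $\hat W$; the subsequent steps are routine applications of Hahn--Banach and of the subgradient inequality. Part (\ref{item:cvx_lip_lin}) should be straightforward once part (\ref{item:cvx_lip_sd}) provides a subgradient at the midpoint.
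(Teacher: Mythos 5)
Your proof is correct and self-contained, whereas the paper handles both parts by pointing to external arguments: part (\ref{item:cvx_lip_sd}) is proved by invoking the argument in the proof of the Duality Theorem of Gale (1967), and part (\ref{item:cvx_lip_lin}) by invoking the argument in the proof of Proposition 1 of Pennesi (2015). What you supply instead is an explicit route to both. For (\ref{item:cvx_lip_sd}), the key maneuver is the McShane-type extension $\hat W(x) = \inf_{y \in D}(W(y) + L\norm{x - y})$, which is the infimal convolution of $W$ (extended by $+\infty$ off $D$) with $L\norm{\cdot}$; this automatically produces a globally $L$-Lipschitz convex extension, and then the classical Hahn--Banach fact that a continuous real-valued convex function on a normed space has a nonempty subdifferential everywhere finishes the job. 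This neatly sidesteps the delicate issue that $D$ may have empty interior or that $x_0$ may sit on the boundary of $D$, where a direct separation against the epigraph of $W$ inside $D$ would not go through. For (\ref{item:cvx_lip_lin}), your midpoint argument is clean: you only use the hypothesis at $\lambda = \half$, and the two subgradient inequalities at $z_0$ are forced to be equalities, after which the subgradient at $z_0$ is shown to be a subgradient at both endpoints relative to $D$. This is likely close in spirit to Pennesi's Proposition 1, but the fact that you need only the midpoint identity is a nice observation (for a convex function, midpoint affineness on a segment is in fact equivalent to affineness on the whole segment, so nothing is lost). Both parts are fully correct given the paper's definition of $\partial W(x)$ as the intersection of subgradient inequalities over $D$ only. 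The practical tradeoff is that your version is longer but auditable without consulting two external references, which is arguably preferable for a lemma that carries real weight downstream (e.g., in Lemma \ref{lem:nv_cvx} and Proposition \ref{prop:nv_var_me}).
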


\begin{proof}
  (\ref{item:cvx_lip_sd})\enspace
  Apply the same argument
  as in the proof of the Duality Theorem of \citet{Gale1967}.

  (\ref{item:cvx_lip_lin})\enspace
  Apply the same argument
  as in the proof of Proposition 1 of \citet{Pennesi2015}.
\end{proof}

A subset $C$ of $E$ is \emph{conical}
if $\alpha C \subseteq C$ for each $\alpha \in \SR_+$.
A \emph{cone} is a conical subset of $E$.
For each subset $S$ of $E$,
let $S^+ = \bigcap_{x \in S} \set{x' \in E' \mvert \dual{x, x'} \ge 0}$, and
let $S^{++} = \bigcap_{x' \in S^+} \set{x \in E \mvert \dual{x, x'} \ge 0}$.
The next is Theorem 1.1.9 of \citet{Zalinescu2002}.

\begin{lemma}\label{lem:cvx_cone}
  The closure of a convex conical subset $S$ of $E$ equals $S^{++}$.
\end{lemma}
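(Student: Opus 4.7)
The plan is to apply the bipolar theorem for cones, which is essentially a consequence of the Hahn--Banach separation theorem. I would establish the two inclusions $\bar S \subseteq S^{++}$ and $S^{++} \subseteq \bar S$ separately. Throughout I will assume $S$ is nonempty (otherwise the statement would need a caveat at $0$, but the context makes it clear we are in the standard setting).

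For the easy direction, note that by definition of $S^+$ we have $\dual{x, x'} \ge 0$ for every $(x, x') \in S \times S^+$, so $S \subseteq S^{++}$. Moreover, $S^{++}$ is the intersection of the family of weakly closed half-spaces $\set{x \in E \mvert \dual{x, x'} \ge 0}$ indexed by $x' \in S^+$, hence norm closed and convex. Therefore $\bar S \subseteq S^{++}$.

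For the reverse direction, I would argue by contrapositive. Suppose $x_0 \in E \setminus \bar S$. Since $\bar S$ is nonempty, closed, and convex, the Hahn--Banach separation theorem (in the geometric form) yields $x' \in E'$ and $\alpha \in \SR$ with $\dual{x_0, x'} < \alpha \le \dual{y, x'}$ for every $y \in \bar S$. Because $S$ is a nonempty cone in the sense that $0 \cdot S \subseteq S$, we have $0 \in S \subseteq \bar S$; substituting $y = 0$ gives $\alpha \le 0$. Next, for any fixed $y \in S$ and any $\lambda > 0$, also $\lambda y \in S$, so $\lambda \dual{y, x'} \ge \alpha$; dividing by $\lambda$ and sending $\lambda \to \infty$ forces $\dual{y, x'} \ge 0$. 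Thus $x' \in S^+$, while $\dual{x_0, x'} < \alpha \le 0$ exhibits an element of $S^+$ witnessing $x_0 \notin S^{++}$.

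The one step requiring a little care is deducing $x' \in S^+$ from the separating inequality, since the usual separation result only delivers a bound $\alpha$ rather than nonnegativity of the functional on $S$. The conical structure of $S$ supplies the additional scaling $\lambda y \in S$ for all $\lambda \ge 0$ needed to upgrade the bound. Everything else is either routine manipulation of definitions or a standard application of Hahn--Banach.
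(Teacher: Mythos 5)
Your proof is correct: the paper itself only cites Theorem 1.1.9 of Zalinescu (2002) for this lemma, and the argument you give via Hahn--Banach separation (together with the observation that $0\in S$ and the scaling trick $\lambda y\in S$ to upgrade the bound to $x'\in S^+$) is precisely the standard proof underlying that citation. Your nonemptiness caveat is also accurate---the paper's definition of conical ($\alpha C\subseteq C$ for all $\alpha\in\SR_+$, hence $0\in C$ when $C\ne\emptyset$) matches your use, and the lemma is only invoked on nonempty sets.
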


Let $W$ be a convex function on a convex subset $D$ of $E$.
For each $(x, v) \in D \times E$ with $x + \eps v \in D$
for some $\eps \in \SR_{++}$,
the \emph{right directional derivative of $W$ in the direction $v$ at $x$}
is the number $\Diff^+_v W(x)$ defined as
\begin{equation}
  \Diff^+_v W(x)
  = \lim_{\lambda \downarrow 0} \frac{W(x + \lambda v) - W(x)}{\lambda}.
\end{equation}
The following comes from Theorem 2.4.9 of \citet{Zalinescu2002}.

\begin{lemma}\label{lem:cvx_dd}
  Let $W$ be a continuous proper convex function on $E$.
  Then, $\Diff^+_v W(x) = \max_{x' \in \partial W(x)} \dual{v, x'}$
  for each $(x, v) \in E^2$.
\end{lemma}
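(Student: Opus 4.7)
The plan is to establish the equality as two inequalities, using the subgradient inequality for the easy direction and Hahn--Banach separation for the harder one. For any fixed $v$, the difference quotient $\lambda \mapsto [W(x + \lambda v) - W(x)]/\lambda$ is nondecreasing on $(0, \infty)$ by the three-slopes consequence of convexity, so the limit defining $\Diff^+_v W(x)$ exists in $[-\infty, \infty]$ and equals $\inf_{\lambda > 0} [W(x + \lambda v) - W(x)]/\lambda$. For the ``$\ge$'' direction, any $x' \in \partial W(x)$ satisfies $W(x + \lambda v) \ge W(x) + \lambda \dual{v, x'}$ by the subgradient inequality applied at $y = x + \lambda v$; dividing by $\lambda > 0$ and passing to the limit yields $\Diff^+_v W(x) \ge \sup_{x' \in \partial W(x)} \dual{v, x'}$.

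For the reverse direction, I would fix $v$ and construct $x'_0 \in \partial W(x)$ with $\dual{v, x'_0} = \Diff^+_v W(x)$. Define $\varphi \colon E \to \SR$ by $\varphi(u) = \Diff^+_u W(x)$. From convexity of $W$, $\varphi$ is positively homogeneous and subadditive, hence sublinear; from continuity of $W$ on all of $E$, the function $\varphi$ is finite and continuous, being sandwiched from above by $u \mapsto W(x + u) - W(x)$ (via the monotonicity of difference quotients at $\lambda = 1$) and from below by $u \mapsto -(W(x - u) - W(x))$ (via subadditivity applied to $u$ and $-u$). By the analytic Hahn--Banach theorem applied to the dominating sublinear functional $\varphi$ and the one-dimensional subspace $\SR v$ on which $u \mapsto \varphi(u)$ is already linear, there exists a linear functional $x'_0$ on $E$ with $x'_0 \le \varphi$ everywhere and $\dual{v, x'_0} = \varphi(v)$; continuity of $\varphi$ ensures $x'_0$ is bounded, so $x'_0 \in E'$. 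Membership in $\partial W(x)$ then follows, since for any $y \in E$ the chain $\dual{y - x, x'_0} \le \varphi(y - x) \le W(y) - W(x)$ gives the subgradient inequality.

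The main delicacy lies in verifying that $\varphi$ is finite-valued and continuous on all of $E$: this is where global continuity of $W$ (and not merely lower semicontinuity) is essential, as it forces $W$ to be locally Lipschitz on $E$ and thereby controls the difference quotients uniformly for directions in a neighborhood of the origin. Once this is secured, the Hahn--Banach step delivers a genuinely bounded linear functional rather than a purely algebraic one, which is exactly what places $x'_0$ in $E'$ and yields attainment of the supremum---so ``$\sup$'' can be upgraded to ``$\max$''. The conclusion then matches Theorem 2.4.9 of \citet{Zalinescu2002}, invoked as the proof in the statement.
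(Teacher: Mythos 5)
Your proof is correct in substance; note that the paper does not prove this lemma at all but simply cites Theorem 2.4.9 of \citet{Zalinescu2002}, so you are supplying the standard ``max formula'' argument that the paper outsources. Your two directions are the right ones: monotonicity of difference quotients plus the subgradient inequality for ``$\ge$'', and a Hahn--Banach extension of the sublinear functional $\varphi(u) = \Diff^+_u W(x)$ for ``$\le$'' together with attainment. One phrasing needs repair: $\varphi$ is \emph{not} in general linear on the subspace $\SR v$, since $\varphi(-v) \ge -\varphi(v)$ can be strict when $W$ is not Gateaux differentiable at $x$ in the direction $v$. The correct move is to \emph{define} the linear functional $f_0(tv) = t\varphi(v)$ on $\SR v$ and verify it is dominated by $\varphi$ there --- for $t < 0$ this is exactly the inequality $-\varphi(v) \le \varphi(-v)$, which follows from subadditivity via $0 = \varphi(0) \le \varphi(v) + \varphi(-v)$ --- and then extend by Hahn--Banach. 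Your continuity argument for $\varphi$ is also slightly compressed: the sandwich gives finiteness and continuity at $0$ (both bounds vanish there), and you should then invoke the fact that a sublinear functional continuous at the origin is Lipschitz, hence continuous everywhere; this is what makes the extended functional norm-bounded. With those two points tightened, the proof is complete and matches the cited result.
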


For each real-valued function $W$ on a convex subset $D$ of $E$,
define the relation $\rel^*_W$ on $D$ by
\begin{equation}
  x \rel^*_W y \iff
  W(\lambda x + (1 - \lambda)z) \ge W(\lambda y + (1 - \lambda)z)
  \quad \forall (\lambda, z) \in (0, 1] \times D,
\end{equation}
whose graph $\set{(x, y) \in D^2 \mvert x \rel^*_W y}$
is denoted by $\gr(\rel^*_W)$.
A binary relation $\rel$ on $D$
is \emph{mixture independent}
if $x \rel \tilde x$ if and only if
$\lambda x + (1 - \lambda)y \rel \lambda \tilde x + (1 - \lambda)y$
for each $(\lambda, y) \in (0, 1] \times D$.

\begin{lemma}\label{lem:cvx_core}
  Let $W$ be a real-valued function on a convex subset $D$ of $E$.
  Then, $\rel^*_W$ is transitive and mixture independent.
\end{lemma}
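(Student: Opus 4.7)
The plan is to verify transitivity by a direct chain of inequalities, and then verify mixture independence in the two directions by rewriting the nested convex combinations so that the defining inequality of $\rel^*_W$ can be applied.

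For transitivity, suppose $x \rel^*_W y$ and $y \rel^*_W z$. For each $(\lambda, w) \in (0,1] \times D$, the defining inequalities give
\begin{equation}
W(\lambda x + (1-\lambda)w) \ge W(\lambda y + (1-\lambda)w) \ge W(\lambda z + (1-\lambda)w),
\end{equation}
so $x \rel^*_W z$. No work beyond unpacking definitions is needed here.

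For mixture independence, interpret the clause as the biconditional: $x \rel^*_W \tilde x$ if and only if $\lambda x + (1-\lambda)y \rel^*_W \lambda \tilde x + (1-\lambda)y$ for every $(\lambda, y) \in (0,1] \times D$. The backward direction is immediate by choosing $\lambda = 1$ and any $y \in D$. For the forward direction, I would fix $(\lambda, y) \in (0,1] \times D$ and, for arbitrary $(\mu, z) \in (0,1] \times D$, reduce the inequality
\begin{equation}
W(\mu[\lambda x + (1-\lambda)y] + (1-\mu)z) \ge W(\mu[\lambda \tilde x + (1-\lambda)y] + (1-\mu)z)
\end{equation}
to a single application of the defining inequality of $\rel^*_W$ at the point $(\nu, w)$ with $\nu := \mu\lambda \in (0,1]$. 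When $\nu < 1$, take
\begin{equation}
w := \frac{\mu(1-\lambda)\,y + (1-\mu)\,z}{1 - \nu},
\end{equation}
which is a convex combination of $y, z \in D$ (the weights are nonnegative and sum to $1$ because $\mu(1-\lambda) + (1-\mu) = 1 - \mu\lambda = 1 - \nu$), so $w \in D$. A direct computation shows $\mu[\lambda x + (1-\lambda)y] + (1-\mu)z = \nu x + (1-\nu)w$, and likewise for $\tilde x$, so the desired inequality follows from $x \rel^*_W \tilde x$ applied at $(\nu, w)$. The edge case $\nu = 1$ forces $\mu = \lambda = 1$, reducing the inequality to $W(x) \ge W(\tilde x)$, which is the $\lambda = 1$ instance of $x \rel^*_W \tilde x$ (using any $z \in D$, which is nonempty since $D$ is a convex subset where $\rel^*_W$ is defined).

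There is no real obstacle here; the only subtlety is verifying that the auxiliary point $w$ lies in $D$ and handling the boundary case $\mu = \lambda = 1$ separately. Since $D$ need not have any topological structure, the argument is purely algebraic.
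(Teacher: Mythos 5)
Your proof is correct and matches the paper's argument essentially step for step: transitivity by chaining the defining inequalities, and the forward direction of mixture independence by collapsing $\mu[\lambda x + (1-\lambda)y] + (1-\mu)z$ into $\nu x + (1-\nu)w$ with $\nu = \mu\lambda$ and the same auxiliary point $w$ (the paper calls it $\tilde z$). You are slightly more explicit than the paper in spelling out the backward direction (via $\lambda=1$) and the degenerate case $\nu = 1$, which the paper leaves implicit even though its formula for $\tilde z$ would otherwise involve dividing by zero.
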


\begin{proof}
  By definition, $\rel^*_W$ is transitive.
  If $x \rel^*_W \tilde x$,
  then for each $(\lambda, y) \in (0, 1] \times D$,
  since for each $(\kappa, z) \in (0, 1] \times D$,
  letting $\tilde z = (1 - \kappa \lambda)^{-1}
  [\kappa(1 - \lambda)y + (1 - \kappa)z]$ gives
  $\tilde z \in D$ and
  \begin{align}
    W(\kappa[\lambda x + (1 - \lambda)y] + (1 - \kappa)z)
    &= W(\kappa \lambda x + (1 - \kappa \lambda)\tilde z) \\
    &\ge W(\kappa \lambda \tilde x + (1 - \kappa \lambda)\tilde z)
    = W(\kappa[\lambda \tilde x + (1 - \lambda)y] + (1 - \kappa)z),
  \end{align}
  we have $\lambda x + (1 - \lambda)y
  \rel^*_W \lambda \tilde x + (1 - \lambda)y$.
  Thus, $\rel^*_W$ is mixture independent.
\end{proof}

\subsection{Niveloids}\label{subsec:nv}

Let $(E, \ge)$ be a Riesz space with unit $e$;
that is, $E$ is a lattice under the order $\ge$, and
for each $x \in E$, there exists $\alpha \in \SR_{++}$ such that
$\alpha e \ge x \ge -\alpha e$.
Denote by $E_+$ the positive cone of $E$.
For each $x \in E$, the \emph{absolute value of $x$} is $x \join (-x)$, and
is denoted by $\abs{x}$;
the \emph{essential supremum of $x$} is the number
$\inf \set{\alpha \in \SR \mvert \alpha e \ge x}$, and
is denoted by $\esup(x)$.
Endow $E$ with the norm $x \mapsto \esup(\abs{x})$.

A real-valued function $W$ on a subset $D$ of $E$ is
\begin{itemize}
  \item
    \emph{monotone} if $W(x) \ge W(y)$
    for each $(x, y) \in D^2$ with $x \ge y$;
  \item
    \emph{normalized} if $W(t e) = t$ for each $t \in \SR$ with $t e \in D$;
  \item
    \emph{translation equivariant} if $W(x + t e) = W(x) + t$
    for each $(x, t) \in D \times \SR$ with $x + t e \in D$;
  \item
    a \emph{niveloid} if $W(x) - W(y) \le \esup(x - y)$
    for each $(x, y) \in D^2$.
\end{itemize}
By construction, every niveloid is $1$-Lipschitz continuous.

A \emph{tube} is a subset $T$ of $E$ such that
$T + \set{t e \mvert t \in \SR} \subseteq T$.
The next is Proposition 2 of \citet{CMMR2014}.

\begin{lemma}\label{lem:nv_mon_ts}
  \hfill
  \begin{enumerate}
    \item \label{item:nv_mon_ts}
      Every niveloid is monotone and translation equivariant.
    \item \label{item:nv_mon_ts_tube}
      Every monotone translation equivariant real-valued function on a tube
      is a niveloid.
  \end{enumerate}
\end{lemma}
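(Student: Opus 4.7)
The plan is to unwind definitions in each direction, using that $E$ has an order unit so that $\esup(x)$ is a finite real number for each $x \in E$ and that $\esup(te) = t$ for each $t \in \SR$.

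For part (\ref{item:nv_mon_ts}), let $W$ be a niveloid on $D$. For monotonicity, fix $(x, y) \in D^2$ with $x \ge y$; since $y - x \le 0 \cdot e$, we have $\esup(y - x) \le 0$, and the niveloid inequality applied to $(y, x)$ yields $W(y) - W(x) \le \esup(y - x) \le 0$. For translation equivariance, fix $(x, t) \in D \times \SR$ with $x + te \in D$; two applications of the niveloid inequality, to $(x + te, x)$ and to $(x, x + te)$, give $W(x + te) - W(x) \le \esup(te) = t$ and $W(x) - W(x + te) \le \esup(-te) = -t$, which together force $W(x + te) = W(x) + t$.

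For part (\ref{item:nv_mon_ts_tube}), let $W$ be monotone and translation equivariant on a tube $T$, and fix $(x, y) \in T^2$. Set $s = \esup(x - y) \in \SR$. For each $\eps > 0$, the definition of $\esup$ gives $(s + \eps)e \ge x - y$, i.e., $x \le y + (s + \eps)e$; the tube property places $y + (s + \eps)e$ in $T$, so monotonicity followed by translation equivariance yields $W(x) \le W(y + (s + \eps)e) = W(y) + s + \eps$. Letting $\eps \downarrow 0$ gives $W(x) - W(y) \le s = \esup(x - y)$, so $W$ is a niveloid.

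The main obstacle is minor: the lemma is an essentially formal consequence of the definitions. The only subtlety is to avoid assuming the infimum defining $\esup(x - y)$ is attained, which is handled by working with $s + \eps$ and taking $\eps \downarrow 0$; the role of the tube hypothesis is precisely to guarantee $y + (s + \eps)e \in T$ so that monotonicity and translation equivariance can be chained on elements of the domain.
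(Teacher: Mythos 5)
Your proof is correct and complete. Note that the paper does not prove this lemma at all---it simply cites it as Proposition~2 of \citet{CMMR2014}---so your argument supplies a self-contained elementary proof where the paper relies on an external reference. Both directions are handled properly: in part~(1) you correctly use $\esup(y-x)\le 0$ when $x\ge y$ and the two-sided estimate via $\esup(\pm te)=\pm t$; in part~(2) you correctly avoid assuming the infimum defining $\esup(x-y)$ is attained by working with $s+\eps$, and you correctly identify that the tube hypothesis is exactly what guarantees $y+(s+\eps)e$ lies in the domain so that monotonicity and translation equivariance can be applied. This matches the standard argument in the niveloid literature.
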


\providecommand{\partialp}{\partial_\piup}

Let $\Delta = \set{x' \in E'_+ \mvert \dual{e, x'} = 1}$,
which is weak$*$ compact and convex.
For each real-valued function $W$ on a subset $D$ of $E$,
define the correspondence $\partialp W \colon D \coto E$ by
$\partialp W(x) = \partial W(x) \cap \Delta$.

\begin{lemma}\label{lem:nv_cvx}
  Let $W$ be a convex niveloid on a convex subset $D$ of $E$.
  \begin{enumerate}
    \item \label{item:nv_cvx_sd}
      If $D = E$, then $\partial W = \partialp W$.
    \item \label{item:nv_cvx_sdp}
      $\partialp W$ has nonempty weak$*$ compact convex values.
  \end{enumerate}
\end{lemma}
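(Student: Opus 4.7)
The plan is to treat the two parts in sequence: part (\ref{item:nv_cvx_sd}) fixes the ``shape'' of each subgradient, forcing it to lie in $\Delta$, and part (\ref{item:nv_cvx_sdp}) adds existence by reducing to the setting of part (\ref{item:nv_cvx_sd}) through an extension of $W$. For part (\ref{item:nv_cvx_sd}), I would fix $x' \in \partial W(x)$ and exploit the two properties niveloids inherit from \cref{lem:nv_mon_ts} (\ref{item:nv_mon_ts}). For each $v \in E_+$, monotonicity gives $W(x - v) \le W(x)$ while the subgradient inequality at $x - v$ gives $W(x - v) \ge W(x) - \dual{v, x'}$; combining yields $\dual{v, x'} \ge 0$, so $x' \in E'_+$. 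For each $t \in \SR$, translation equivariance gives $W(x + te) = W(x) + t$ while the subgradient inequality at $x + te$ gives $W(x + te) \ge W(x) + t \dual{e, x'}$; taking $t$ of both signs forces $\dual{e, x'} = 1$. Hence $\partial W(x) \subseteq \Delta$, and the reverse inclusion is immediate.

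For part (\ref{item:nv_cvx_sdp}), convexity and weak$*$ closedness of $\partialp W(x)$ are inherited from \cref{lem:cvx_conj} (\ref{item:cvx_conj_sd_val}) and from $\Delta$ being convex and weak$*$ closed. Every $x' \in \Delta$ satisfies $\abs{\dual{x, x'}} \le \esup(\abs{x})$, so $\Delta$ is norm bounded and hence weak$*$ compact by the Banach--Alaoglu theorem; $\partialp W(x)$ is then weak$*$ compact as a weak$*$ closed subset of $\Delta$. For nonemptiness when $D \ne E$, I extend $W$ to a convex niveloid $\bar W$ on $E$ via the inf-convolution
\begin{equation}
  \bar W(y) = \inf_{z \in D}[W(z) + \esup(y - z)].
\end{equation}
Convexity of $\bar W$ follows because inf-convolution of two convex functions is convex; finiteness follows from subadditivity of $\esup$ combined with the niveloid bound on $W$ applied to any fixed reference point in $D$; monotonicity and translation equivariance can be read off directly, so \cref{lem:nv_mon_ts} (\ref{item:nv_mon_ts_tube}) certifies that $\bar W$ is a niveloid on the tube $E$. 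The niveloid inequality on $D$ gives $W(y) \le W(z) + \esup(y - z)$ for $y, z \in D$, so $\bar W(y) \ge W(y)$, while the choice $z = y$ gives the reverse inequality; thus $\bar W = W$ on $D$. Applying \cref{lem:cvx_lip} (\ref{item:cvx_lip_sd}) to $\bar W$ gives $\partial \bar W(x) \ne \emptyset$; part (\ref{item:nv_cvx_sd}) applied to $\bar W$ places every such element in $\Delta$; and agreement of $\bar W$ with $W$ on $D$ lifts any $x' \in \partial \bar W(x)$ to an element of $\partial W(x)$ whenever $x \in D$.

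The main obstacle is designing an extension that is simultaneously convex, monotone, translation equivariant, real-valued, and equal to $W$ on $D$. A McShane-type niveloid extension respects the Lipschitz bound but can fail convexity, while a pure convex-hull extension need not be finite-valued or agree with $W$ on the relative boundary of $D$. Inf-convolution with $\esup$ is the right object precisely because $\esup$ plays a dual role---it is at once the sublinear functional controlling niveloidness and a convex function---so convolving with it enforces both structures simultaneously, and the niveloid hypothesis on $W$ is exactly what forces $\bar W \ge W$ on $D$.
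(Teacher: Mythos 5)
Your proof is correct. Part (\ref{item:nv_cvx_sd}) is the paper's argument verbatim: plug $y = x - v$ into the subgradient inequality and use monotonicity to get $\dual{v, x'} \ge 0$ for $v \in E_+$, then plug $y = x + te$ and use translation equivariance to get $\dual{e, x'} = 1$. Part (\ref{item:nv_cvx_sdp}) also follows the same route---compactness and convexity come from \cref{lem:cvx_conj} (\ref{item:cvx_conj_sd_val}) and the weak$*$ compactness of $\Delta$; nonemptiness is obtained by extending $W$ to a convex niveloid on all of $E$, applying \cref{lem:cvx_lip} (\ref{item:cvx_lip_sd}) together with part (\ref{item:nv_cvx_sd}), and noting that a subgradient of the extension at $x \in D$ is a subgradient of $W$. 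The only substantive difference is how the extension is produced: the paper cites Theorem 1 and Proposition 4 of \citet{CMMR2014}, whereas you construct it explicitly as the inf-convolution $\bar W(y) = \inf_{z \in D}[W(z) + \esup(y - z)]$. That construction checks out: the inf-convolution of $W$ (extended by $+\infty$ off $D$) with the sublinear functional $\esup$ is convex; the lower bound $\bar W(y) \ge W(z_0) - \esup(z_0 - y)$ for any fixed $z_0 \in D$, obtained from the niveloid inequality and subadditivity of $\esup$, gives real-valuedness; monotonicity and translation equivariance are inherited from those of $\esup$ in its argument; and the niveloid inequality on $D$ plus the choice $z = y$ gives $\bar W = W$ on $D$, so \cref{lem:nv_mon_ts} (\ref{item:nv_mon_ts_tube}) certifies that $\bar W$ is a niveloid. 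The key point---which you rightly emphasize---is using the one-sided $\esup$ rather than the norm $\esup(\abs{\cdot})$, which is what makes translation equivariance go through without assuming $D$ is itself a tube. Your version makes the lemma self-contained at the cost of a few extra lines; the paper's is shorter but opaque about why the extension exists. One small correction to your closing remark: the inf-convolution you wrote down \emph{is} the McShane-type (upper) extension, and for convex $W$ it is automatically convex; the convexity worry you raise would apply to the sup-convolution (lower) variant, not this one.
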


\begin{proof}
  (\ref{item:nv_cvx_sd})\enspace
  Suppose $D = E$.
  By \cref{lem:nv_mon_ts} (\ref{item:nv_mon_ts}),
  $W$ is monotone and translation equivariant.
  Thus, for each $(x, x') \in \gr \partial W$,
  since $\dual{-v, x'} = \dual{x - v, x'} - \dual{x, x'}
  \le W(x - v) - W(x) \le 0$ for each $v \in E_+$,
  we have $x' \in E'_+$;
  since $\alpha \dual{e, x'} = \dual{x + \alpha e, x'} - \dual{x, x'}
  \le W(x + \alpha e) - W(x) = \alpha$ for each $\alpha \in \SR$,
  we have $\dual{e, x'} = 1$.

  (\ref{item:nv_cvx_sdp})\enspace
  By Theorem 1 and Proposition 4 of \citet{CMMR2014},
  $W$ has a convex niveloidal extension $\hat W$ to $E$.
  For each $x \in D$,
  since $\partial W(x)$ is weak$*$ closed and convex
  by \cref{lem:cvx_conj} (\ref{item:cvx_conj_sd_val}),
  $\partialp W(x)$ is weak$*$ compact and convex;
  since $\partial \hat W(x) \ne \emptyset$
  by \cref{lem:cvx_lip} (\ref{item:cvx_lip_sd}) and
  since $\partialp \hat W(x) = \partial \hat W(x) \subseteq \partial W(x)$
  by part (\ref{item:nv_cvx_sd}),
  we have $\partialp W(x) \ne \emptyset$.
\end{proof}

For each real-valued function $W$ on a subset $D$ of $E$,
a \emph{variational representation of $W$} is a pair $(\Pi, \gamma)$
of a weak$*$ compact subset of $\Delta$ and
a weak$*$ lower semicontinuous extended real-valued function on $\Pi$ such that
$W(x) = \max_{\pi \in \Pi} (\dual{x, \pi} - \gamma(\pi))$ for each $x \in D$.

\begin{lemma}\label{lem:nv_var}
  Let $W$ be a real-valued function on a convex subset $D$ of $E$.
  \begin{enumerate}
    \item \label{item:nv_var_nv}
      If $W$ has a variational representation,
      then it is convex, monotone, and translation equivariant.
    \item \label{item:nv_var_gr}
      If $t e \in D$ for some $t \in \SR$, if $W$ is normalized, and
      if $(\Pi, \gamma)$ is a variational representation of $W$,
      then $\gamma$ is grounded.
  \end{enumerate}
\end{lemma}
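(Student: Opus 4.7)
The plan is to verify the three properties in part (\ref{item:nv_var_nv}) directly from the formula $W(x) = \max_{\pi \in \Pi}(\dual{x, \pi} - \gamma(\pi))$, and then exploit normalization at a single point $te$ together with the identity $\dual{e, \pi} = 1$ for $\pi \in \Delta$ to pin down $\inf \gamma = 0$ in part (\ref{item:nv_var_gr}).

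For part (\ref{item:nv_var_nv}), convexity will be immediate: each map $x \mapsto \dual{x, \pi} - \gamma(\pi)$ is affine, and the pointwise supremum of a family of affine functions is convex; since the supremum is attained on the weak$*$ compact $\Pi$ by weak$*$ lower semicontinuity of $\gamma$, $W$ is indeed real-valued and convex. Monotonicity will follow because $\Pi \subseteq \Delta \subseteq E'_+$: for $x \ge y$ in $D$, every $\pi \in \Pi$ satisfies $\dual{x - y, \pi} \ge 0$, so $\dual{x, \pi} - \gamma(\pi) \ge \dual{y, \pi} - \gamma(\pi)$, and taking maxima over $\pi$ yields $W(x) \ge W(y)$. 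Translation equivariance will follow from the defining property $\dual{e, \pi} = 1$ for $\pi \in \Delta$: whenever $x, x + te \in D$,
\begin{equation}
  W(x + te)
  = \max_{\pi \in \Pi}(\dual{x, \pi} + t \dual{e, \pi} - \gamma(\pi))
  = \max_{\pi \in \Pi}(\dual{x, \pi} - \gamma(\pi)) + t
  = W(x) + t.
\end{equation}

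For part (\ref{item:nv_var_gr}), I would evaluate the variational representation at the point $te \in D$. Using $\dual{e, \pi} = 1$ again, normalization gives
\begin{equation}
  t = W(te) = \max_{\pi \in \Pi}(t \dual{e, \pi} - \gamma(\pi))
  = t - \min_{\pi \in \Pi} \gamma(\pi),
\end{equation}
where the minimum is attained because $\Pi$ is weak$*$ compact and $\gamma$ is weak$*$ lower semicontinuous. Rearranging yields $\min_{\pi \in \Pi} \gamma(\pi) = 0$, so $\gamma$ is grounded.

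There is no serious obstacle here; the only mild subtlety is ensuring that the maximum in the representation is actually finite and attained, which I would justify by appealing to weak$*$ compactness of $\Pi$, weak$*$ upper semicontinuity of each $\dual{x, \cdot}$ on $\Delta$, and weak$*$ lower semicontinuity of $\gamma$ together with the fact that $W$ is given as real-valued. Everything else is a routine application of the structural properties $\Pi \subseteq \Delta$ and $\dual{e, \pi} = 1$.
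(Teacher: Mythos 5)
Your proof is correct and follows essentially the same route as the paper's: part (\ref{item:nv_var_nv}) follows from the fact that each $\dual{\cdot,\pi} - \gamma(\pi)$ with $\pi \in \Delta$ is affine, monotone, and translation equivariant (the latter two via $\pi \in E'_+$ and $\dual{e,\pi}=1$), and these properties pass to the pointwise maximum; part (\ref{item:nv_var_gr}) is exactly the evaluation $t = W(te) = t - \min\gamma(\Pi)$. The only minor caveat is that the attainment of the maximum is already built into the definition of a variational representation, so the closing remarks about compactness and semicontinuity, while harmless, are not needed.
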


\begin{proof}
  (\ref{item:nv_var_nv})\enspace
  If $(\Pi, \gamma)$ is a variational representation of $W$,
  then since $\dual{\cdot, \pi} - \gamma(\pi)$ is
  mixture linear, monotone, and translation equivariant for each $\pi \in \Pi$,
  the function $W$ is convex, monotone, and translation equivariant.

  (\ref{item:nv_var_gr})\enspace
  Suppose that $t e \in D$ for some $t \in \SR$, that $W$ is normalized, and
  that $(\Pi, \gamma)$ is a variational representation of $W$.
  Since $t = W(t e) = \max_{\pi \in \Pi} (\dual{t e, \pi} - \gamma(\pi))
  = t - \min \gamma(\Pi)$,
  the function $\gamma$ is grounded.
\end{proof}

For each real-valued function $W$ on a convex subset of $E$,
a \emph{multi-expectation representation of $\rel^*_W$}
is a weak$*$ compact convex subset $\Pi$ of $\Delta$ such that
\begin{equation}
  x \rel^*_W y \iff
  \dual{x, \pi} \ge \dual{y, \pi} \quad \forall \pi \in \Pi.
\end{equation}

\begin{proposition}\label{prop:nv_var_me}
  Let $W$ be a convex niveloid on a convex subset $D$ of $E$.
  Then, there exists a subset $\Pi$ of $\dom W^*|_\Delta$ such that
  $\Pi$ is a multi-expectation representation of $\rel^*_W$ and
  $(\Pi, W^*|_{\Pi})$ is a variational representation of $W$.
\end{proposition}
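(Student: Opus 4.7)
The plan is to build $\Pi$ from the subdifferentials of $W$. Let $\Pi_0 := \bigcup_{x \in D} \partialp W(x)$, which is nonempty by \cref{lem:nv_cvx}(\ref{item:nv_cvx_sdp}) and contained in $\dom W^*|_\Delta$ because \cref{lem:cvx_conj}(\ref{item:cvx_conj_sd}) gives $W^*(\pi) = \dual{x, \pi} - W(x) < \infty$ for each $\pi \in \partialp W(x)$. Define $\Pi$ as the weak$*$ closed convex hull of $\Pi_0$ inside $\Delta$; by construction it is weak$*$ compact and convex. To place $\Pi$ inside $\dom W^*|_\Delta$, I would derive a uniform bound on $W^*|_{\Pi_0}$ from the niveloid and $1$-Lipschitz properties of $W$ and then invoke the weak$*$ lower semicontinuity and convexity of $W^*$ (\cref{lem:cvx_conj}(\ref{item:cvx_conj_lsccvx})) to conclude that $\Pi$ sits inside a weak$*$ closed sublevel set of $W^*$.

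The variational representation is then immediate: \cref{lem:cvx_conj}(\ref{item:cvx_conj_ineq}) yields $W(x) \ge \dual{x, \pi} - W^*(\pi)$ for every $\pi \in \Pi \subseteq \Delta$, while \cref{lem:cvx_conj}(\ref{item:cvx_conj_sd}) gives equality at any $\pi_x \in \partialp W(x) \subseteq \Pi$. The easy half of the multi-expectation claim follows similarly: if $\dual{x, \pi} \ge \dual{y, \pi}$ for every $\pi \in \Pi$, then for each $(\lambda, z) \in (0, 1] \times D$, choosing $\pi^* \in \partialp W(\lambda y + (1 - \lambda) z) \subseteq \Pi$ gives
\begin{equation}
W(\lambda y + (1 - \lambda) z) = \dual{\lambda y + (1 - \lambda) z, \pi^*} - W^*(\pi^*) \le \dual{\lambda x + (1 - \lambda) z, \pi^*} - W^*(\pi^*) \le W(\lambda x + (1 - \lambda) z),
\end{equation}
so $x \rel^*_W y$.

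The main obstacle is the converse multi-expectation direction: showing $x \rel^*_W y \Rightarrow \dual{x - y, \pi} \ge 0$ for every $\pi \in \Pi$. By continuity and linearity it suffices to verify this for each $\pi \in \partialp W(z)$, $z \in D$. Mixture independence of $\rel^*_W$ (\cref{lem:cvx_core}) combined with dividing $W(z + \lambda(x - z)) \ge W(z + \lambda(y - z))$ by $\lambda$ and sending $\lambda \downarrow 0$ only yields max-over-subdifferential inequalities---via \cref{lem:cvx_dd} applied to a convex niveloid extension of $W$---not the desired pointwise inequalities at $\pi$. I plan to close this gap by studying the nonnegative convex remainder $\Psi(u) := W(u) - W(z) - \dual{u - z, \pi}$, which vanishes at $z$: the key estimate $\Psi(z + \lambda v) = o(\lambda)$ holds when $\pi$ is on the exposed face of $\partialp W(z)$ in direction $v$ (with \cref{lem:cvx_lip}(\ref{item:cvx_lip_lin}) handling flat pieces), and a Krein--Milman decomposition plus a bipolar argument leaning on \cref{lem:cvx_cone} should extend pointwise dominance from exposed $\pi$ to all of $\Pi$. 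An alternative route is to pass to the biconjugate extension $\hat W := W^{**}$, a convex niveloid on $E$ satisfying $\hat W|_D = W$ and $\hat W^* = W^*$, so as to reduce to the full-domain setting of \citet{CMMR2015} after separately verifying $\rel^*_W = \rel^*_{\hat W}|_{D \times D}$.
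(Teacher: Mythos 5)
Your construction of $\Pi$ as the weak$*$ closed convex hull of $\Pi_0 = \bigcup_{x\in D}\partialp W(x)$ is generically \emph{too large} to be a multi-expectation representation, and the gap you flag at the end is not closable along the lines you sketch. The problem is precisely that $\partialp W(z) \subseteq K^+$ (with $K$ the closed convex cone generated by $\set{x - y \mvert x \rel^*_W y}$) can fail at boundary points of $D$: the subdifferential there picks up ``outward normal'' directions reflecting the geometry of $D$ rather than the behaviour of $W$. A concrete failure: take $E = \SR^2$ with unit $e=(1,1)$, $D = \set{\phi \mvert \phi_1-\phi_2 \in [0,1]}$, and $W(\phi)=\phi_2$. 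Then $x \rel^*_W y$ iff $x_2 \ge y_2$, and the unique $\Pi$ that works is $\{(0,1)\}$. But at the boundary point $\phi$ with $\phi_1-\phi_2 = 1$ one computes $\partialp W(\phi) = \Delta$, so your $\Pi_0$ already equals all of $\Delta$ — and $\Delta$ is not a multi-expectation representation of $\rel^*_W$ (it would force $x_1 \ge y_1$ as well). The Krein--Milman/exposed-face plan does not help here because the offending elements of $\partialp W(\phi)$ are themselves exposed points, not mixtures that could be pruned; and the ``flat piece'' lemma (\cref{lem:cvx_lip}(\ref{item:cvx_lip_lin})) only controls equalities of $W$ along segments, not the directions in which $K^+$ cuts. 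Your fallback of passing to a niveloidal extension $\hat W$ on $E$ also fails: extending the domain enlarges the set of test pairs $(\lambda, z)$, so $\rel^*_{\hat W}|_{D\times D}$ is in general \emph{strictly finer} than $\rel^*_W$ (in the example above it becomes coordinatewise dominance). This is exactly the obstruction the paper cites when explaining why the full-domain result of \citet{CMMR2015} does not apply.

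The paper's proof inverts the logic. Rather than hoping all subdifferentials land in $K^+$, it defines $K$ directly from $\gr(\rel^*_W)$, verifies $K$ is a closed convex cone (using \cref{lem:cvx_core}) and hence $K = K^{++}$ by \cref{lem:cvx_cone}, and sets $\Pi = \dom W^*|_\Delta \cap K^+$. With this $\Pi$, the hard direction of multi-expectation is true by construction. The genuine work is then to show the \emph{intersection} $\partialp W(x) \cap K^+$ is nonempty for every $x \in D$ (not the inclusion $\partialp W(x) \subseteq K^+$ you need), which is done by a separation argument combined with the weak$*$ upper hemicontinuity of $\partialp W$ (via \cref{lem:cvx_conj}(\ref{item:cvx_conj_clgr})): if $\partialp W(x) \cap K^+ = \emptyset$, separate to get $\bar x \in K = K^{++}$ with $\dual{\bar x, \cdot}$ strictly negative on $\partialp W(x)$, propagate this to a neighbourhood $U$ of $x$, and derive a contradiction because $u \rel^*_W u - \eps\bar x$ yet $W(u - \eps\bar x) > W(u)$. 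This intersection step is the missing idea in your proposal; once you have it, $\Pi$ inherits the nonempty/compact/convex/variational properties from \cref{lem:cvx_conj}, and the multi-expectation property is automatic.
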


\begin{proof}
  Let $K = \set{\alpha(x - y)
  \mvert (\alpha, (x, y)) \in \SR_+ \times \gr(\rel^*_W)}$.
  By construction, $K$ is conical.
  For each $((\alpha, (x, y)), (\beta, (\tilde x, \tilde y)))
  \in (\SR_{++} \times \gr(\rel^*_W))^2$,
  letting $\lambda = \alpha/(\alpha + \beta)$ gives
  $\lambda x + (1 - \lambda)y \rel^*_W \lambda y + (1 - \lambda)\tilde x
  \rel^*_W \lambda y + (1 - \lambda)\tilde y$ by \cref{lem:cvx_core},
  so $\alpha(x - y) + \beta(\tilde x - \tilde y)
  = (\alpha + \beta)[\lambda(x - y) + (1 - \lambda)(\tilde x - \tilde y)]
  \in K$.
  Thus, $K$ is convex.

  We claim that $\partialp W(x)$ intersects $K^+$ for each $x \in D$.
  Seeking a contradiction, suppose otherwise.
  Let $x \in D$ be such that $\partialp W(x) \cap K^+ = \emptyset$.
  By \cref{lem:nv_cvx} (\ref{item:nv_cvx_sdp}),
  we can apply the separation theorem
  \citep[Theorem 5.79]{AliprantisBorder2006} to get
  $(\bar x, c) \in E \times \SR$ such that $\bar x \ne 0$ and
  $\dual{\bar x, \pi} < c < \dual{\bar x, x'}$
  for each $(\pi, x') \in \partialp W(x) \times K^+$.
  Since $K^+$ is conical, we have $\bar x \in K^{++}$ and $c < 0$.
  Since $K$ is dense in $K^{++}$ by \cref{lem:cvx_cone},
  it is without loss of generality to assume $\bar x \in K$.
  Since by \cref{lem:cvx_conj} (\ref{item:cvx_conj_clgr}) and
  Theorem 17.11 of \citet{AliprantisBorder2006},
  the correspondence $\partialp W$ is upper hemicontinuous,
  there exists an open neighborhood $U$ of $x$ such that
  $\partialp W(u) \subseteq \set{\pi \in \Delta \mvert \dual{\bar x, \pi} < c}$
  for each $u \in U$.
  Let $(u, \eps) \in U \times \SR_{++}$ be such that $u - \eps \bar x \in D$.
  Since $\bar x \in K$, we have $u \rel^*_W u - \eps \bar x$.
  However, $W(u - \eps \bar x) - W(u) \ge -\dual{\eps \bar x, \pi} > -\eps c
  > 0$ for each $\pi \in \partialp W(u)$, which is a contradiction.

  Let $\Pi = \dom W^*|_\Delta \cap K^+$.
  By the above claim and
  \cref{lem:cvx_conj} (\ref{item:cvx_conj_lsccvx})--(\ref{item:cvx_conj_sd}),
  $\Pi$ is nonempty, weak$*$ compact, and convex, and
  $(\Pi, W^*|_\Pi)$ is a variational representation of $W$.
  Thus, for each $(x, y) \in D^2$,
  if $\dual{x, \pi} \ge \dual{y, \pi}$ for each $\pi \in \Pi$,
  then $x \rel^*_W y$.
  Since $x \rel^*_W y$ implies
  $\dual{x, x'} \ge \dual{y, x'}$ for each $x' \in K^+$,
  the set $\Pi$ is a multi-expectation representation of $\rel^*_W$.
\end{proof}

A variational representation $(\Pi, \gamma)$ is \emph{canonical} if
\begin{enumerate}
  \item
    for each $(\pi, \pi') \in \Pi^2$,
    if $\dual{x, \pi} \ge \dual{x, \pi'}$ for each $x \in D$,
    then $\gamma(\pi) \ge \gamma(\pi')$;
  \item
    $\Pi$ and $\gamma$ are convex.
\end{enumerate}
The next generalizes Theorem 2 of \citet{DDMO2017} to an abstract setting.

\begin{proposition}\label{prop:nv_var_uniq}
  Let $(\Pi, \gamma)$ be a canonical variational representation
  of a convex niveloid $W$ on a convex conical subset $D$ in $E$.
  Then, $\gamma = W^*|_\Pi$.
\end{proposition}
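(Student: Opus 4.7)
I would establish both inequalities $\gamma \ge W^*|_\Pi$ and $\gamma \le W^*|_\Pi$ pointwise on $\Pi$. The first is immediate: the variational representation gives $W(x) \ge \dual{x, \pi} - \gamma(\pi)$ for every $(x, \pi) \in D \times \Pi$, hence $\gamma(\pi) \ge \sup_{x \in D}(\dual{x, \pi} - W(x)) = W^*(\pi)$.

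For the reverse, fix $\pi_0 \in \Pi$. I would reduce the claim to finding $x \in D$ such that $\pi_0 \in \arg\max_{\pi \in \Pi}(\dual{x, \pi} - \gamma(\pi))$: at such an $x$, $W(x) = \dual{x, \pi_0} - \gamma(\pi_0)$, so $\gamma(\pi_0) = \dual{x, \pi_0} - W(x) \le W^*(\pi_0)$. Extending $\gamma$ off $\Pi$ by $+\infty$ to a weak$*$ l.s.c.\ proper convex function $\bar\gamma$ on $E'$, the goal becomes $\partial\bar\gamma(\pi_0) \cap D \ne \emptyset$.

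The central step is to rule out $\partial\bar\gamma(\pi_0) \cap D = \emptyset$ by contradiction, combining separation with canonicality. After replacing $D$ by its norm closure (which preserves $D^+$, hence the canonical monotonicity clause, and leaves $W^*$ unchanged because $W$ is $1$-Lipschitz), $D$ becomes a closed convex cone while $\partial\bar\gamma(\pi_0)$ is weak$*$ closed convex by Lemma \ref{lem:cvx_conj} (\ref{item:cvx_conj_sd_val}). Hahn--Banach separation together with the conicality of $D$ produces a nonzero $v \in E'$ with $v \in -D^+$ while $\dual{x, v} > 0$ for every $x \in \partial\bar\gamma(\pi_0)$. Choosing $\epsilon > 0$ small enough that $\pi_0 + \epsilon v \in \Pi$, the subgradient inequality yields $\gamma(\pi_0 + \epsilon v) \ge \gamma(\pi_0) + \epsilon \dual{x, v} > \gamma(\pi_0)$, whereas the canonical monotonicity clause gives the reverse $\gamma(\pi_0) \ge \gamma(\pi_0 + \epsilon v)$ (since $\pi_0 - (\pi_0 + \epsilon v) = -\epsilon v \in D^+$), a contradiction.

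The main obstacle is securing the two existence hypotheses used in this contradiction---nonemptiness of $\partial\bar\gamma(\pi_0)$ and feasibility of $\pi_0 + \epsilon v \in \Pi$---both of which can fail when $\pi_0$ lies on the boundary of $\Pi$. I would address this by approximating $\pi_0$ with a net $(\pi_n)$ in the relative interior of $\Pi$ with $\gamma(\pi_n) \to \gamma(\pi_0)$ (available from convexity and lower semicontinuity of $\gamma$). For each such $\pi_n$, the interior argument above supplies $x_n \in D$ with $\dual{x_n, \pi_n} - W(x_n) = \gamma(\pi_n)$; the $1$-Lipschitz property of niveloids controls the relevant seminorm of $x_n$, upper hemicontinuity of the subdifferential (Lemma \ref{lem:cvx_conj} (\ref{item:cvx_conj_clgr})) combined with Lemma \ref{lem:cvx_cone} (giving $\bar D = D^{++}$) permits extraction of a weak$*$ cluster point $x^\star \in D$, and continuity of $W$ then yields $\dual{x^\star, \pi_0} - W(x^\star) = \gamma(\pi_0)$, completing the proof.
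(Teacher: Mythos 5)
Your first inequality $\gamma \ge W^*|_\Pi$ is exactly the paper's. For the reverse direction your route---exhibit an exact maximizer $x \in D$ for each $\pi_0 \in \Pi$ by showing $\partial\bar\gamma(\pi_0) \cap D \ne \emptyset$, where $\bar\gamma$ is the $+\infty$-extension of $\gamma$ to $E'$---is genuinely different from the paper's, but it has gaps I do not see how to fill. To strictly separate $D$ from $\partial\bar\gamma(\pi_0)$ by a functional $v \in E'$ you must view both sets inside $E''$ with the $\sigma(E'', E')$ topology, and strict separation of two disjoint closed convex sets requires one of them to be compact in that topology; neither is here. The set $\partial\bar\gamma(\pi_0)$ need not be bounded, since $\bar\gamma$ is $+\infty$ off $\Pi$ and hence is far from Lipschitz, and the embedded cone $D$ is in general not $\sigma(E'', E')$-closed. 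The relative-interior patch does not rescue the argument either: in the abstract setting $\Pi$ is a weak$*$ compact convex subset of a dual space that is typically infinite-dimensional, so its relative (indeed algebraic) interior can be empty, leaving no points at which $\partial\bar\gamma$ is guaranteed nonempty and at which $\pi_0 + \eps v \in \Pi$ can be arranged. Finally, the concluding net argument needs a cluster point of $(x_n) \subseteq D \subseteq E$, but \cref{lem:cvx_cone} supplies only density (the norm closure of $D$ equals $D^{++}$), not compactness, and the $1$-Lipschitz bound on $W$ does not by itself control $\esup(\abs{x_n})$.

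The paper sidesteps all of this by separating $0$ from the weak$*$ closure of the set consisting of the epigraph of $\gamma$ minus $C$, inside $E' \times \SR$, where $C = \set{(\bar\pi + x', \alpha) \mvert x' \in D^+}$ and $\alpha < \gamma(\bar\pi)$, and by proving only the approximate bound $W^*(\bar\pi) > \alpha$ rather than exact attainment of $W^*(\bar\pi)$ at a single $x$. The one compactness needed---weak$*$ compactness of the slice of the epigraph over $\Pi \times [\min W^*(\Pi), \alpha + 1]$---comes for free from the weak$*$ compactness of $\Pi$, and the separating functional lands in $E \times \SR$ automatically. Canonicality is used not to force a contradiction through the subgradient inequality, but to show that $C$ avoids the epigraph, which is what keeps $0$ out of the closure and makes the separation applicable.
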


\begin{proof}
  For each $\pi \in \Pi$,
  since $\gamma(\pi) \ge \dual{x, \pi} - W(x)$ for each $x \in D$,
  we have $\gamma(\pi) \ge W^*(\pi)$.
  Thus, $\gamma \ge W^*|_\Pi$.
  For the converse inequality, fix any $\bar \pi \in \Pi$.
  Choose any $\alpha \in [\min W^*(\Pi), \gamma(\bar \pi))$.
  Define $\hat W \colon D^{++} \to \SR$ by
  $\hat W(x) = \max_{\pi \in \Pi} (\dual{x, \pi} - \gamma(\pi))$,
  which is a unique niveloidal extension of $W$ to $D^{++}$.
  By the definition of $W^*$,
  it suffices to show that there exists $\bar x \in D^{++}$ for which
  $\dual{\bar x, \bar \pi} - \hat W(\bar x) > \alpha$.
  Let $C = \set{(\bar \pi + x', \alpha) \mvert x' \in D^+}$,
  let $\epi \gamma$ be the epigraph of $\gamma$, and
  let $F = \epi \gamma \cap (\Pi \times [\min W^*(\Pi), \alpha + 1]) - C$.
  For each $x' \in D^+$ with $\bar \pi + x' \in \Pi$,
  since $\dual{x, \bar \pi + x'} \ge \dual{x, \bar \pi}$ for each $x \in D$,
  we have $\gamma(\bar \pi + x') \ge \gamma(\bar \pi) > \alpha$.
  Thus, $\epi \gamma \cap C = \emptyset$, so $F$ does not contain zero.
  Since $\epi \gamma \cap (\Pi \times [\alpha + 1, \infty)) - C
  \subseteq E' \times [1, \infty)$,
  we have $\epi \gamma - C \subseteq F \cup (E' \times [1, \infty))$.
  Hence, the weak$*$ closure of $\epi \gamma - C$ does not contain zero.
  Therefore, we can apply the separation theorem
  \citep[Corollary 5.80]{AliprantisBorder2006} to get
  $(\bar x, \beta) \in E \times \SR$ such that
  \begin{equation}
    \sup_{((\pi, r), x') \in \epi \gamma \times D^+}
      (\dual{\bar x, \pi - \bar \pi - x'} + \beta(r - \alpha)) < 0,
  \end{equation}
  which implies $(\bar x, \beta) \in D^{++} \times (-\SR_+)$, so
  \begin{equation}
    \max_{\pi \in \Pi} (\dual{\bar x, \pi} + \beta \gamma(\pi))
    < \dual{\bar x, \bar \pi} + \alpha \beta.
  \end{equation}
  If $\beta = 0$,
  then $\max_{\pi \in \Pi} \dual{\bar x, \pi} < \dual{\bar x, \bar \pi}$, so
  for sufficiently large $n \in \SN$,
  \begin{align}
    \dual{n \bar x, \bar \pi} - \hat W(n \bar x)
    = \dual{n \bar x, \bar \pi}
      - \max_{\pi \in \Pi} (\dual{n \bar x, \pi} - \gamma(\pi))
    \ge n\Bigl(\dual{\bar x, \bar \pi}
      - \max_{\pi \in \Pi} \dual{\bar x, \pi}\Bigr) + \min \gamma(\Pi)
    > \alpha.
  \end{align}
  If $\beta < 0$,
  then since $-\beta^{-1}\bar x \in D^{++}$,
  it is without loss of generality to assume $\beta = -1$, so
  $\alpha
  < \dual{\bar x, \bar \pi}
  - \max_{\pi \in \Pi} (\dual{\bar x, \pi} - \gamma(\pi))
  = \dual{\bar x, \bar \pi} - \hat W(\bar x)$.
\end{proof}

\begin{lemma}\label{lem:nv_sd}
  Let $W$ be a convex niveloid on a convex conical subset $D$ of $E$, and
  let $\Pi$ be a subset of $\Delta$ such that
  $(\Pi, W^*|_\Pi)$ is a canonical variational representation of $W$.
  \begin{enumerate}
    \item \label{item:nv_sd_lin}
      For each $(x, y) \in D^2$, it follows that
      $\partial W(x) \cap \partial W(y) \cap \Pi \ne \emptyset$ if and only if
      $W(\lambda x + (1 - \lambda)y) = \lambda W(x) + (1 - \lambda)W(y)$
      for each $\lambda \in [0, 1]$.
    \item \label{item:nv_sd_dd}
      $\Diff^+_v W(x) = \max_{\pi \in \partial W(x) \cap \Pi} \dual{v, \pi}$
      for each $(x, v) \in D^2$.
  \end{enumerate}
\end{lemma}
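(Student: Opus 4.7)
For part (\ref{item:nv_sd_lin}), the ``only if'' direction is a direct conjugate-duality argument. If $\pi \in \partial W(x) \cap \partial W(y) \cap \Pi$, then \cref{lem:cvx_conj} (\ref{item:cvx_conj_sd}) gives $W(x) = \dual{x, \pi} - W^*(\pi)$ and $W(y) = \dual{y, \pi} - W^*(\pi)$, and for each $\lambda \in [0, 1]$ with $z = \lambda x + (1 - \lambda)y$, convexity of $W$ paired with \cref{lem:cvx_conj} (\ref{item:cvx_conj_ineq}) squeezes $W(z)$ to $\dual{z, \pi} - W^*(\pi) = \lambda W(x) + (1 - \lambda)W(y)$. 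For the ``if'' direction, apply the variational representation at $z = (x + y)/2 \in D$ to get $\pi \in \Pi$ with $W(z) = \dual{z, \pi} - W^*(\pi)$, so that $W(x) + W(y) = 2 W(z) = [\dual{x, \pi} - W^*(\pi)] + [\dual{y, \pi} - W^*(\pi)]$. Since each summand on the right is a lower bound for the corresponding term on the left by \cref{lem:cvx_conj} (\ref{item:cvx_conj_ineq}), the two must coincide, placing $\pi$ in $\partial W(x) \cap \partial W(y) \cap \Pi$.

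For part (\ref{item:nv_sd_dd}), convexity and conicality of $D$ give $x + \lambda v = (1 + \lambda)[\tfrac{1}{1 + \lambda}x + \tfrac{\lambda}{1 + \lambda}v] \in D$ for each $\lambda \ge 0$, so the right directional derivative is well defined; moreover, the variational representation at $x$ ensures $\partial W(x) \cap \Pi \ne \emptyset$. For any $\pi$ in $\partial W(x) \cap \Pi$, the subgradient inequality $W(x + \lambda v) - W(x) \ge \lambda \dual{v, \pi}$ yields $\Diff^+_v W(x) \ge \dual{v, \pi}$, establishing the $\ge$ half.

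For the $\le$ half, for each $\lambda > 0$ pick $\pi_\lambda \in \Pi$ attaining the variational maximum at $x + \lambda v$. Using \cref{lem:cvx_conj} (\ref{item:cvx_conj_ineq}) at $x$, one obtains the difference-quotient bound $\lambda^{-1}[W(x + \lambda v) - W(x)] \le \dual{v, \pi_\lambda}$. Weak$*$ compactness of $\Pi$ furnishes a subnet $\pi_{\lambda_\alpha} \to \bar\pi \in \Pi$ with $\lambda_\alpha \to 0$. Since $W$ is $1$-Lipschitz, the identity $W^*(\pi_{\lambda_\alpha}) = \dual{x + \lambda_\alpha v, \pi_{\lambda_\alpha}} - W(x + \lambda_\alpha v)$ converges to $\dual{x, \bar\pi} - W(x)$; combining weak$*$ lower semicontinuity of $W^*$ (\cref{lem:cvx_conj} (\ref{item:cvx_conj_lsccvx})) with \cref{lem:cvx_conj} (\ref{item:cvx_conj_ineq}) pins down $W^*(\bar\pi) = \dual{x, \bar\pi} - W(x)$, i.e., $\bar\pi \in \partial W(x)$. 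Passing to the limit along the subnet in the difference-quotient bound---using monotonicity of difference quotients of convex functions to identify the limit of the left-hand side with $\Diff^+_v W(x)$---yields $\Diff^+_v W(x) \le \dual{v, \bar\pi}$.

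The main obstacle is placing $\bar\pi$ into $\partial W(x)$ in the $\le$ direction: the limit of $W^*(\pi_{\lambda_\alpha})$ is pinned down by the Lipschitz limit of $W$, but only lower semicontinuity of $W^*$ bounds $W^*(\bar\pi)$ from above via this limit, with the matching lower bound coming from the ``automatic'' \cref{lem:cvx_conj} (\ref{item:cvx_conj_ineq}). Everywhere else the argument is a routine translation between the variational and subdifferential representations of $W$.
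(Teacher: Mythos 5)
Your proof is correct, but it takes a genuinely different and somewhat more elementary route than the paper's.

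The paper's proof first constructs the convex niveloidal extension $\hat W$ of $W$ to all of $E$ via $\hat W(x) = \max_{\pi \in \Pi}(\dual{x,\pi} - W^*(\pi))$, then invokes \cref{prop:nv_var_uniq} to identify $\hat W^*|_\Delta$ with the extended cost that equals $W^*$ on $\Pi$ and $+\infty$ elsewhere, and concludes from \cref{lem:cvx_conj} (\ref{item:cvx_conj_sd}) that $\partial \hat W(x) \subseteq \Pi$. Both parts then follow by applying the global results \cref{lem:cvx_lip} (\ref{item:cvx_lip_lin}) and \cref{lem:cvx_dd} to $\hat W$ and transferring through the inclusion $\partial \hat W(x) \subseteq \partial W(x) \cap \Pi$. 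You instead stay on the original domain $D$. For part (\ref{item:nv_sd_lin}) you exploit conjugate duality directly at the midpoint $z = (x+y)/2$: the maximizer of the variational representation at $z$ is automatically a common subgradient whenever the sum $W(x) + W(y)$ equals its lower Fenchel bound; no extension is needed. For part (\ref{item:nv_sd_dd}), the $\ge$ direction is the standard subgradient inequality, and the $\le$ direction is a clean compactness argument: take maximizers $\pi_\lambda$ at $x + \lambda v$, extract a weak$*$ convergent subnet (using that $\Pi$ is weak$*$ compact, hence closed), show the limit lands in $\partial W(x)$ by combining the Lipschitz continuity of $W$ with weak$*$ lower semicontinuity of $W^*$ (the one-sided bound is exactly closed by \cref{lem:cvx_conj} (\ref{item:cvx_conj_ineq})), and pass to the limit in the difference quotient using its monotonicity. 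Your approach avoids both the extension theorem and the uniqueness machinery of \cref{prop:nv_var_uniq}, and the $\le$ direction constructively exhibits a maximizer $\bar\pi$, so the ``$\max$'' in the statement is attained by the argument itself rather than inferred from \cref{lem:cvx_dd}. The tradeoff is that the paper's route reuses lemmas already in place and makes the extension $\hat W$ do all the work uniformly; your route is more self-contained but requires the careful subnet/lsc bookkeeping you carried out.
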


\begin{proof}
  Define $\hat W \colon E \to \SR$ by
  $\hat W(x) = \max_{\pi \in \Pi} (\dual{x, \pi} - W^*(\pi))$,
  which is a convex niveloidal extension of $W$
  by \cref{lem:nv_mon_ts} (\ref{item:nv_mon_ts_tube}) and
  \cref{lem:nv_var} (\ref{item:nv_var_nv}).
  Define $\gamma \colon \Delta \to (-\infty, \infty]$ by
  $\gamma(\pi) = W^*(\pi)$ if $\pi \in \Pi$ and
  $\gamma(\pi) = \infty$ otherwise.
  Then, $(\Delta, \gamma)$
  is a canonical variational representation of $\hat W$,
  so by \cref{prop:nv_var_uniq}, $\gamma = \hat W^*|_\Delta$.
  Hence, by \cref{lem:cvx_conj} (\ref{item:cvx_conj_sd}),
  $\partial \hat W(x) \subseteq \Pi$ for each $x \in E$.

  (\ref{item:nv_sd_lin})\enspace
  Choose any $(x, y) \in D^2$.
  By \cref{lem:cvx_lip} (\ref{item:cvx_lip_lin}),
  $\partial W(x) \cap \partial W(y) \cap \Pi \ne \emptyset$ implies
  $W(\lambda x + (1 - \lambda)y) = \lambda W(x) + (1 - \lambda)W(y)$
  for each $\lambda \in [0, 1]$.
  Conversely, suppose that
  $W(\lambda x + (1 - \lambda)y) = \lambda W(x) + (1 - \lambda)W(y)$
  for each $\lambda \in [0, 1]$.
  Then, $\hat W(\lambda x + (1 - \lambda)y)
  = \lambda \hat W(x) + (1 - \lambda)\hat W(y)$ for each $\lambda \in [0, 1]$.
  Thus, again by \cref{lem:cvx_lip} (\ref{item:cvx_lip_lin}),
  $\partial \hat W(x) \cap \partial \hat W(y) \ne \emptyset$.
  Here, by construction,
  $\partial W(x) \cap \partial W(y) \cap \Pi
  \supseteq \partial \hat W(x) \cap \partial \hat W(y)$.

  (\ref{item:nv_sd_dd})\enspace
  Choose any $(x, v) \in D^2$.
  Since $\partial W(x) \supseteq \partial \hat W(x)$ by definition,
  it follows from \cref{lem:cvx_dd} that $\Diff^+_v W(x) = \Diff^+_v \hat W(x)
  = \max_{\pi \in \partial \hat W(x)} \dual{v, \pi}
  \le \max_{\pi \in \partial W(x) \cap \Pi} \dual{v, \pi}$.
  For each $\pi \in \partial W(x) \cap \Pi$,
  since $\lambda \dual{v, \pi} \le W(x + \lambda v) - W(x)$
  for each $\lambda \in \SR_{++}$,
  we have $\dual{v, \pi} \le \Diff^+_v W(x)$.
  Thus, $\Diff^+_v W(x)
  = \max_{\pi \in \partial W(x) \cap \Pi} \dual{v, \pi}$.
\end{proof}

\subsection{Cost structures}\label{subsec:cost}

Given any topological space $Y$,
denote by $\Cb(Y)$ the Banach lattice
of bounded continuous real-valued functions on $Y$,
endowed with the pointwise ordering and the uniform norm;
denote by $\ca(Y)$ the Banach lattice
of signed Borel measures on $Y$ of bounded variation,
endowed with the setwise ordering and the total variation norm;
by the Riesz representation theorem
\citep[Corollary 14.15]{AliprantisBorder2006},
$\ca(Y)$ is the norm dual of $\Cb(Y)$ when $Y$ is compact and metrizable.

Let $\Phi = \SR^\Omega$, endowed with the uniform norm.
The constant function $\vone_\Omega$ of value one is an order unit of $\Phi$.
A real-valued function $V$ on $\Phi$ is \emph{positively homogeneous}
if $V(\alpha x) = \alpha V(x)$ for each $(\alpha, x) \in \SR_+ \times \Phi$.

Fix any $\bar \omega \in \Omega$.
Let $\Psi = \set{\psi \in \Phi \mvert \psi(\bar \omega) = 0 \tand
\max_{\omega \in \Omega} \abs{\psi(\omega)} = 1}$, and
let $\bU$ be the closed unit ball
in the set of continuous real-valued functions on $\Psi$
endowed with the uniform norm.
The constant function $\vone_\bU$ of value one
is an order unit of $\Cb(\bU)$, and
the norm of each $\xi \in \Cb(\bU)$ equals $\esup(\abs{\xi})$.
For each $\phi \in \Phi \setminus \set{t \vone_\Omega \mvert t \in \SR}$,
there exists a unique $(\alpha, \psi, t) \in \SR_{++} \times \Psi \times \SR$
such that $\phi = \alpha \psi + t \vone_\Omega$.
Thus, every $v \in \bU$ has
a unique positively homogeneous translation equivariant extension $\hat v$
to $\Phi$, which has the form
$\hat v(\alpha \psi + t \vone_\Omega) = \alpha v(\psi) + t$.
For each $m \in \lott(\Phi)$,
define $m^\vee \in \Cb(\bU)$ by $m^\vee(v) = \int \hat v \diff m$.
Let $\Xi = \set{m^\vee \mvert m \in \lott(\Phi)}$.

\begin{lemma}\label{lem:cs_vee}
  \hfill
  \begin{enumerate}
    \item \label{item:cs_vee_ml}
      $m \mapsto m^\vee$ from $\lott(\Phi)$ to $\Cb(\bU)$ is mixture linear.
    \item \label{item:cs_vee_Dirac}
      $\phi \mapsto \Dirac[\phi]^\vee$ from $\Phi$ to $\Cb(\bU)$
      is positively homogeneous and translation equivariant.
    \item \label{item:cs_vee_rng}
      $\Xi$ is a convex conical tube in $\Cb(\bU)$.
  \end{enumerate}
\end{lemma}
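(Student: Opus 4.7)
The plan is to prove the three parts largely by unpacking definitions and transferring structure from $\hat v$ to the map $m \mapsto m^\vee$ via the linearity of the integral.

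For part (\ref{item:cs_vee_ml}), the evaluation $m^\vee(v) = \int \hat v \diff m$ is linear in $m$ for every fixed $v \in \bU$, so mixture linearity is immediate from
\begin{equation}
  [\lambda m + (1 - \lambda)m']^\vee(v)
  = \lambda \int \hat v \diff m + (1 - \lambda)\int \hat v \diff m'
  = [\lambda m^\vee + (1 - \lambda)m'^\vee](v).
\end{equation}
For part (\ref{item:cs_vee_Dirac}), since $\Dirac[\phi]^\vee(v) = \hat v(\phi)$ by construction, the two claimed properties transfer pointwise in $v$ from the corresponding properties of $\hat v$ on $\Phi$: for $\alpha \in \SR_+$ and $t \in \SR$, $\hat v(\alpha \phi) = \alpha \hat v(\phi)$ and $\hat v(\phi + t \vone_\Omega) = \hat v(\phi) + t$, which translate directly into $\Dirac[\alpha \phi]^\vee = \alpha \Dirac[\phi]^\vee$ and $\Dirac[\phi + t \vone_\Omega]^\vee = \Dirac[\phi]^\vee + t \vone_\bU$ as elements of $\Cb(\bU)$.

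For part (\ref{item:cs_vee_rng}), I would verify convex, conical, and tube separately. Convexity follows from part (\ref{item:cs_vee_ml}): if $\xi = m^\vee$ and $\xi' = m'^\vee$, then $\lambda \xi + (1 - \lambda)\xi' = [\lambda m + (1 - \lambda)m']^\vee \in \Xi$. The conical and tube properties require moving the transformation inside the integral via a pushforward of $m$. For $\alpha \in \SR_+$, let $m_\alpha$ be the pushforward of $m$ under $\phi \mapsto \alpha \phi$; since $m$ is finitely supported, so is $m_\alpha$, and by positive homogeneity of $\hat v$,
\begin{equation}
  m_\alpha^\vee(v)
  = \int \hat v(\alpha \phi) \diff m(\phi)
  = \alpha \int \hat v \diff m
  = \alpha m^\vee(v),
\end{equation}
so $\alpha m^\vee \in \Xi$. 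Similarly, for $t \in \SR$, letting $m_t$ be the pushforward of $m$ under $\phi \mapsto \phi + t \vone_\Omega$ and using translation equivariance of $\hat v$ gives $m_t^\vee = m^\vee + t \vone_\bU$, which establishes $\Xi + \set{t \vone_\bU \mvert t \in \SR} \subseteq \Xi$.

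None of the steps should present serious difficulty; the only mild point to be careful about is keeping the pushforward argument clean so that the transformation (scaling or translation) on $\Phi$ is clearly seen to remain inside the finitely-supported cone $\lott(\Phi)$, which is automatic because finite support is preserved by any pushforward. The rest is mechanical application of the defining properties of $\hat v$ already granted in the setup.
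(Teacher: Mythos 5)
Your proof is correct and takes essentially the same approach as the paper: part (\ref{item:cs_vee_ml}) by linearity of the integral, part (\ref{item:cs_vee_Dirac}) by transferring the corresponding properties of $\hat v$, and part (\ref{item:cs_vee_rng}) by using part (\ref{item:cs_vee_ml}) for convexity together with pushforwards under scaling and translation for the conical and tube properties; the paper expresses the pushforward as $l(F) = m(\alpha^{-1}F)$ and $l(F) = m(F - \{t\vone_\Omega\})$, which is the same construction you describe.
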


\begin{proof}
  (\ref{item:cs_vee_ml})\enspace
  For each $(\lambda, (l, m), v) \in [0, 1] \times \lott(\Phi)^2 \times \bU$,
  since $[\lambda l^\vee + (1 - \lambda)m^\vee](v)
  = \lambda \int \hat v \diff l + (1 - \lambda)\int \hat v \diff m
  = \int \hat v \diff [\lambda l + (1 - \lambda)m]
  = [\lambda l + (1 - \lambda)m]^\vee(v)$.

  (\ref{item:cs_vee_Dirac})\enspace
  By the positive homogeneity and translation equivariance
  of $\hat v$ for each $v \in \bU$.

  (\ref{item:cs_vee_rng})\enspace
  By part (\ref{item:cs_vee_ml}), the set $\Xi$ is convex;
  since for each $(\alpha, m) \in \SR_{++} \times \lott(\Phi)$,
  letting $l(F) = m(\alpha^{-1}F)$ for each $F \subseteq \Phi$ gives
  $\alpha m^\vee = l^\vee$,
  the set $\Xi$ is conical;
  since for each $(\alpha, m) \in \SR_{++} \times \lott(\Phi)$,
  letting $l(F) = m(F - \{t \vone_\Omega\})$ for each $F \subseteq \Phi$ gives
  $m^\vee + t \vone_\bU = l^\vee$,
  the set $\Xi$ is a tube.
\end{proof}

A real-valued function on $\Phi$ is \emph{superlinear}
if it is positively homogeneous and concave.
Let $\bV$ be the set of superlinear niveloids on $\Phi$.
Endow $\bV$ with the metric $\rho$ of the form
\begin{equation}
  \rho(V, V') = \max_{\phi \in \bB} \abs{V(\phi) - V'(\phi)},
\end{equation}
where $\bB$ is the closed unit ball of $\Phi$.

\begin{lemma}\label{lem:cs_bU}
  The function $V \mapsto V|_\Psi$
  is a mixture linear isometry from $\bV$ to $\bU$.
\end{lemma}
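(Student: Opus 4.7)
The plan is to establish the three claims bundled in \cref{lem:cs_bU}---that the restriction map lands in $\bU$, preserves mixtures, and is an isometry---in that order.

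For well-definedness, I would observe that each $V \in \bV$ satisfies $V(0) = 0$ by positive homogeneity, is monotone and translation equivariant by \cref{lem:nv_mon_ts} (\ref{item:nv_mon_ts}), and is $1$-Lipschitz as a niveloid. Continuity of $V|_\Psi$ then follows, and applying the niveloid inequality between $\psi$ and $0$ gives $|V(\psi)| \le \esup(|\psi|) = 1$ for every $\psi \in \Psi$, so $V|_\Psi$ lies in $\bU$. Mixture linearity reduces to two routine checks: $\bV$ is closed under convex combinations (positive homogeneity, concavity, and the niveloid inequality all pass to pointwise mixtures), and pointwise restriction to $\Psi$ commutes with convex combinations.

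For the isometry, my plan is to leverage the unique decomposition $\phi = \alpha \psi + t \vone_\Omega$ (with $\alpha > 0$, $\psi \in \Psi$, and $t \in \SR$) for non-constant $\phi$ recorded just before the lemma. Combining positive homogeneity with translation equivariance, every $V \in \bV$ satisfies $V(\phi) = \alpha V(\psi) + t$, hence $V(\phi) - V'(\phi) = \alpha[V(\psi) - V'(\psi)]$. The direction $\rho(V, V') \ge \sup_{\psi \in \Psi} |V(\psi) - V'(\psi)|$ is then immediate since $\Psi \subseteq \bB$.

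The hard part, which I expect to be the main obstacle, is the reverse inequality: I must control the amplification factor $\alpha$ as $\phi$ varies over $\bB$. A naive estimate only gives $\alpha = \max_\omega|\phi(\omega) - \phi(\bar\omega)| \le 2$, which by itself would leave a spurious factor of two. Closing the gap should exploit the structure imposed by superlinearity and monotonicity together with the constraint $\phi(\bar\omega) = \alpha \psi(\bar\omega) + t = t$: whenever $\alpha > 1$ for $\phi \in \bB$, the decomposition forces $\psi$ to be one-signed, and then monotonicity confines $V(\psi)$ and $V'(\psi)$ to a short interval whose length shrinks with $\alpha$ exactly enough to offset the amplification. (Equivalently, one may reduce the supremum defining $\rho$ to those $\phi \in \bB$ with $\phi(\bar\omega) = 0$, where the decomposition gives $\alpha \le 1$ and $t = 0$ automatically, and then recover the full supremum over $\bB$ by translation equivariance.) This bookkeeping is where the bulk of the argument lies.
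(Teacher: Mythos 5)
You have located the exact point where the paper's own proof is also loose. The argument via the decomposition $\phi = \alpha\psi + t\vone_\Omega$ reproduces the paper's calculation, and the last step the paper writes, passing from $\max_{(\alpha,\psi,t)}\alpha|V(\psi) - V'(\psi)|$ to $\max_{\psi \in \Psi}|V(\psi) - V'(\psi)|$, is asserted without justification. You are right to flag it. But neither of your proposed repairs closes the gap. Monotonicity confines $V(\psi)$ (and $V'(\psi)$) to $[-1,1]$, and to one half of it when $\psi$ is one-signed, but that bounds each \emph{value}, not the \emph{difference} $V(\psi) - V'(\psi)$, and the bound does not tighten as $\alpha$ grows. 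Your translation-equivariance alternative fails because shifting a general $\phi \in \bB$ by $-\phi(\bar\omega)\vone_\Omega$ so that it vanishes at $\bar\omega$ will generally carry it outside $\bB$.

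In fact the factor of two is genuine, so the lemma as stated is not quite right. Take $\Omega = \{\bar\omega, \omega_1\}$ and set $V(\phi) = \min\{\phi(\bar\omega), \phi(\omega_1)\}$ and $V'(\phi) = \min_{\mu \in [0,1/2]}\,[\mu\phi(\bar\omega) + (1-\mu)\phi(\omega_1)]$; both are support functions of compact convex subsets of $\prob(\Omega)$ and hence members of $\bV$. Here $\Psi$ consists of the two functions $\psi_\pm$ with $\psi_\pm(\bar\omega) = 0$, $\psi_\pm(\omega_1) = \pm 1$, and a direct check gives $V(\psi_+) - V'(\psi_+) = -1/2$, $V(\psi_-) - V'(\psi_-) = 0$, so $\max_{\Psi}|V - V'| = 1/2$; yet the unit-ball point $\phi$ with $\phi(\bar\omega) = -1$, $\phi(\omega_1) = 1$ gives $V(\phi) - V'(\phi) = -1$, so $\rho(V,V') \ge 1$. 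What does follow from the decomposition argument (since $(1,\psi,0) \in B$ for every $\psi \in \Psi$, and since $\alpha \le 2$ always) is the two-sided estimate $\max_{\Psi}|V - V'| \le \rho(V,V') \le 2\max_{\Psi}|V - V'|$, so $V \mapsto V|_\Psi$ is a mixture linear bi-Lipschitz embedding rather than an isometry. This weaker conclusion is all that is used where \cref{lem:cs_bU} is invoked (\cref{lem:mmeu_cdec} and \cref{lem:mmeu_cben} only need the image to be compact and convex), so the discrepancy is harmless downstream, but the lemma's statement and the last display of its proof should be amended accordingly.
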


\begin{proof}
  The mixture linearity holds by construction.
  For each $V \in \bV$, since $V|_\Psi$ is continuous and
  since $1 = V(\vone_\Omega) \ge V(\psi)$ for each $\psi \in \Psi$,
  we have $V|_\Psi \in \bU$.
  Let $B = \set{(\alpha, \psi, t) \in \SR_+ \times \Psi \times \SR
  \mvert \alpha \psi + t \vone_\Omega \in \bB}$.
  For each $(V, V') \in \bV^2$,
  \begin{align}
    \rho(V, V')
    &= \max_{(\alpha, \psi, t) \in B}
      \abs{V(\alpha \psi + t \vone_\Omega) - V'(\alpha \psi + t \vone_\Omega)}
    = \max_{(\alpha, \psi, t)\in B} \alpha \abs{V(\psi) - V'(\psi)} \\
    &= \max_{\psi \in \Psi} \abs{V(\psi) - V'(\psi)},
  \end{align}
  so $V \mapsto V|_\Psi$ is an isometry.
\end{proof}

For each $M \in \bK$,
the \emph{support function of $M$} is the real-valued function
$H_M$ on $\Phi$ of the form $H_M(\phi) = \min_{\mu \in M} \dual{\phi, \mu}$.

\begin{lemma}\label{lem:cs_supp}
  \hfill
  \begin{enumerate}
    \item \label{item:cs_supp}
      $M \mapsto H_M$ on $\bK$
      is a surjective mixture linear isometry to $\bV$.
    \item \label{item:cs_supp_mon}
      For each $(M, M') \in \bK^2$, it follows that $M \subseteq M'$
      if and only if $H_M \ge H_{M'}$.
  \end{enumerate}
\end{lemma}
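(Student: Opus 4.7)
The plan is to establish the four pieces of part (\ref{item:cs_supp})---well-definedness into $\bV$, mixture linearity, isometry, and surjectivity---and then deduce part (\ref{item:cs_supp_mon}) and injectivity by a separation argument. Because $\Omega$ is finite, every space in sight is finite-dimensional, so classical convex analysis applies cleanly. First I would verify $H_M \in \bV$: as a pointwise infimum of the linear maps $\phi \mapsto \dual{\phi, \mu}$, the function $H_M$ is positively homogeneous and concave, hence superlinear. Because each $\mu \in M$ is a probability measure, $H_M$ is monotone and translation equivariant, so by \cref{lem:nv_mon_ts} (\ref{item:nv_mon_ts_tube}) it is a niveloid on $\Phi$. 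Mixture linearity of $M \mapsto H_M$ follows because minimizing a linear functional over $\lambda M + (1 - \lambda)M'$ decomposes into minimizing separately over $M$ and $M'$.

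For part (\ref{item:cs_supp_mon}), one direction is immediate from minimizing over a larger set. For the converse, assume $H_M \ge H_{M'}$ but take $\mu^* \in M \setminus M'$ seeking a contradiction. A separation theorem in the finite-dimensional space $\ca(\Omega)$ yields $\phi \in \Phi$ with $\dual{\phi, \mu^*} < \min_{\mu' \in M'} \dual{\phi, \mu'} = H_{M'}(\phi)$, while $H_M(\phi) \le \dual{\phi, \mu^*}$, a contradiction. This also gives injectivity of $M \mapsto H_M$.

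For the isometry, I would argue both directions of $\rho(H_M, H_{M'}) = d_H(M, M')$ where $d_H$ denotes the Hausdorff distance induced by total variation. For $\phi \in \bB$ and $\mu \in M$, picking $\mu' \in M'$ attaining $\inf_{\nu \in M'} \|\mu - \nu\|$ yields $\dual{\phi, \mu} - H_{M'}(\phi) \le \dual{\phi, \mu - \mu'} \le \|\mu - \mu'\|$; symmetrizing and taking suprema gives $\rho(H_M, H_{M'}) \le d_H(M, M')$. The reverse inequality follows by taking a point of $M$ farthest from $M'$ (or vice versa), separating it from the other set, and normalizing the separating functional to lie in $\bB$.

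The main obstacle is surjectivity. Given $V \in \bV$, define
\begin{equation}
  M_V = \set{\mu \in \ca(\Omega) \mvert \dual{\phi, \mu} \ge V(\phi) \text{ for all } \phi \in \Phi},
\end{equation}
which is closed and convex. Testing with nonpositive $\phi$ and using monotonicity of $V$ forces $\mu \ge 0$ for every $\mu \in M_V$; evaluating at $\pm \vone_\Omega$ and using translation equivariance with $V(\vone_\Omega) = 1$ forces $\mu(\Omega) = 1$. Hence $M_V \subseteq \prob(\Omega)$ is compact and convex, i.e., $M_V \in \bK$, and by construction $H_{M_V} \ge V$. For the reverse inequality at a fixed $\phi_0$, apply the Hahn--Banach sandwich theorem to the superlinear $V$: since $V$ is concave and positively homogeneous, there exists a linear functional on $\Phi$ dominating $V$ globally and agreeing with it at $\phi_0$. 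In the finite-dimensional setting this functional corresponds to some $\mu \in M_V$ with $\dual{\phi_0, \mu} = V(\phi_0)$, giving $H_{M_V}(\phi_0) \le V(\phi_0)$. Thus $V = H_{M_V}$, completing surjectivity.
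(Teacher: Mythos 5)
Your proposal reaches the same conclusion as the paper, but where the paper simply cites Theorems 7.51, 7.52, and Corollary 7.59 of \citet{AliprantisBorder2006} and Theorem 2.4.14 of \citet{Zalinescu2002}, you unfold the underlying arguments directly (separation for part (\ref{item:cs_supp_mon}) and injectivity, Hahn--Banach sandwich for surjectivity, and a direct support-function estimate for the isometry). That is a perfectly valid, more self-contained route; the paper buys brevity, you buy transparency. Two sign slips are worth fixing. In the surjectivity step, it is \emph{nonnegative} $\phi$, not nonpositive, that forces $\mu \ge 0$: for $\phi \ge 0$ one has $\dual{\phi, \mu} \ge V(\phi) \ge V(0) = 0$ by monotonicity of $V$; testing with $\phi \le 0$ gives only a vacuous lower bound since $V(\phi) \le 0$. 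In the isometry step, recall that the paper defines $H_M$ with a $\min$, so with $\mu' \in M'$ closest to $\mu$ one has $H_{M'}(\phi) \le \dual{\phi, \mu'}$ and hence $H_{M'}(\phi) - \dual{\phi,\mu} \le \dual{\phi, \mu' - \mu} \le \lVert \mu - \mu' \rVert$; your chain of inequalities is written as if $H_{M'}$ were a $\max$. The correct bookkeeping gives $H_{M'}(\phi) - H_M(\phi) = \max_{\mu \in M}(H_{M'}(\phi) - \dual{\phi,\mu}) \le \sup_{\mu \in M} d(\mu, M')$, and symmetrizing then yields $\rho(H_M,H_{M'}) \le d_H(M,M')$ as you intend. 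Neither slip affects the viability of the approach, and the remaining pieces (mixture linearity by decomposing the minimization, the separation argument for part (\ref{item:cs_supp_mon}), and the construction of $M_V$ for surjectivity) match the substance of the cited textbook results.
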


\begin{proof}
  (\ref{item:cs_supp})\enspace
  For each $M \in \bK$, since $H_M$ is superlinear
  by Theorem 7.51 of \citet{AliprantisBorder2006} and
  since $H_M$ is monotone and translation equivariant
  by the monotonicity and translation equivariance of $\dual{\cdot, \mu}$
  for each $\mu \in M$,
  we have $H_M \in \bV$.
  Since for each $(\lambda, (M, M'), \phi)
  \in [0, 1] \times \bK^2 \times \Phi$,
  \begin{equation}
    H_{\lambda M + (1 - \lambda)M'}(\phi)
    = \min_{(\mu, \mu') \in M \times M'}
      [\lambda \dual{\phi, \mu} + (1 - \lambda)\dual{\phi, \mu'}]
    = \lambda H_M(\phi) + (1 - \lambda)H_{M'}(\phi),
  \end{equation}
  the function $M \mapsto H_M$ is mixture linear;
  it is surjective by Theorem 7.52 of \citet{AliprantisBorder2006} and
  \cref{lem:nv_cvx} (\ref{item:nv_cvx_sd});
  it is an isometry by Corollary 7.59 of \citet{AliprantisBorder2006}.

  (\ref{item:cs_supp_mon})\enspace
  By Theorem 2.4.14 (vi) of \citet{Zalinescu2002}.
\end{proof}

Recall $\Delta = \set{\pi \in \Cb(\bU)'_+ \mvert \dual{\vone_\bU, \pi} = 1}$.
For each $\pi \in \Delta$, define $\pi^\wedge \colon \Phi \to \SR$ by
$\pi^\wedge(\phi) = \dual{\Dirac[\phi]^\vee, \pi}$.
By definition, for each $(M, \pi) \in \bK \times \Delta$
with $H_M = \pi^\wedge$, it follows that
$\int H_M \diff m = \int \pi^\wedge \diff m = \dual{m^\vee, \pi}$
for each $m \in \lott(\Phi)$.
For each subset $\Pi$ of $\Delta$,
let $\Pi^\wedge = \set{\pi^\wedge \mvert \pi \in \Pi}$.
Let $\Delta_\bV = \set{\pi \in \Delta \mvert \pi^\wedge \in \bV}$,
endowed with the weak$*$ topology.

\begin{lemma}\label{lem:cs_wedge}
  The function $\pi \mapsto \pi^\wedge$ from $\Delta_\bV$ to $\bV$
  is surjective, mixture linear, and continuous.
\end{lemma}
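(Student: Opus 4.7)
The plan is to verify each of the three properties in turn, with mixture linearity essentially by construction, surjectivity by exhibiting a point-evaluation functional, and continuity via equicontinuity on the compact ball $\bB$.

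For mixture linearity, the key observation is that $\pi^\wedge$ is defined via the linear pairing $\phi \mapsto \dual{\Dirac[\phi]^\vee, \pi}$, so for $(\lambda, (\pi, \pi')) \in [0, 1] \times \Delta_\bV^2$ and $\phi \in \Phi$,
\begin{equation}
  [\lambda \pi + (1 - \lambda)\pi']^\wedge(\phi)
  = \lambda \pi^\wedge(\phi) + (1 - \lambda)\pi'^\wedge(\phi).
\end{equation}
Since the class of superlinear niveloids is convex (each of positive homogeneity, concavity, monotonicity, and translation equivariance is preserved under convex combinations), $\lambda \pi + (1 - \lambda)\pi' \in \Delta_\bV$.

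For surjectivity, given $V \in \bV$, set $v = V|_\Psi$, which lies in $\bU$ by \cref{lem:cs_bU}. As a superlinear niveloid, $V$ is positively homogeneous and translation equivariant (the latter by \cref{lem:nv_mon_ts}(\ref{item:nv_mon_ts})), with $V(0) = 0$. Hence for each $\phi \in \Phi$, writing $\phi = \alpha \psi + t \vone_\Omega$ with $(\alpha, \psi, t) \in \SR_+ \times \Psi \times \SR$ (or $\phi = t\vone_\Omega$ if $\alpha = 0$),
\begin{equation}
  V(\phi) = \alpha V(\psi) + t = \alpha v(\psi) + t = \hat v(\phi).
\end{equation}
Now define $\pi \in \Cb(\bU)'$ by $\dual{\xi, \pi} = \xi(v)$ for each $\xi \in \Cb(\bU)$. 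This is a positive linear functional with $\dual{\vone_\bU, \pi} = 1$, so $\pi \in \Delta$; and $\pi^\wedge(\phi) = \Dirac[\phi]^\vee(v) = \hat v(\phi) = V(\phi)$, so $\pi \in \Delta_\bV$ and $\pi^\wedge = V$.

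For continuity, I would proceed via equicontinuity. Let $(\pi_\alpha)$ be a net in $\Delta_\bV$ converging weak$*$ to $\pi \in \Delta_\bV$. For each fixed $\phi \in \Phi$, since $\Dirac[\phi]^\vee \in \Cb(\bU)$, weak$*$ convergence gives $\pi_\alpha^\wedge(\phi) \to \pi^\wedge(\phi)$. The family $\{\pi_\alpha^\wedge\} \cup \{\pi^\wedge\}$ consists of niveloids on $\Phi$, and by the remark following the definition of niveloids every such function is $1$-Lipschitz; hence the family is equicontinuous on $\Phi$. Since $\Phi$ is finite-dimensional, $\bB$ is norm compact, so by a standard Arzelà--Ascoli-type argument (cover $\bB$ by finitely many $\eps/3$-balls, use pointwise convergence at the centers and equi-Lipschitzness to handle neighborhoods), pointwise convergence of the net upgrades to uniform convergence on $\bB$, i.e., $\rho(\pi_\alpha^\wedge, \pi^\wedge) \to 0$.

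The potential obstacle is the surjectivity step, since one has to identify $V$ with $\hat v$ for $v = V|_\Psi$ and then invoke point evaluation at $v$ in $\Cb(\bU)'$; the identification follows cleanly from the positive homogeneity and translation equivariance of $V$, which together with the construction of $\hat v$ force the two functions to agree on $\Phi$. The continuity argument requires the uniform Lipschitz bound inherent in the definition of a niveloid, which makes the passage from pointwise to uniform convergence on the compact ball $\bB$ automatic.
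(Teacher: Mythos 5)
Your proof is correct and follows essentially the same approach as the paper's: mixture linearity is immediate from the linear pairing, surjectivity is obtained by pairing with the point evaluation (Dirac charge) at $V|_\Psi \in \bU$ so that $\pi^\wedge = \hat{V|_\Psi} = V$, and continuity is the finite $\eps/3$-net argument on the compact ball $\bB$ combined with the uniform $1$-Lipschitz bound of niveloids and weak$*$ convergence at the net points. The only cosmetic difference is that you avoid explicitly invoking the Riesz representation identification of $\Cb(\bU)'$ by exhibiting the evaluation functional directly, and you spell out the (trivial but genuine) check that $\Delta_\bV$ is convex so that mixture linearity is well-posed.
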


\begin{proof}
  The mixture linearity follows by definition.
  Since $\Cb(\bU)'$ can be identified with the space of
  normal signed charges on the Borel algebra on $\bU$
  by the Riesz representation theorem
  \citep[Theorem 14.10]{AliprantisBorder2006} and
  since $\Dirac[V|_\Psi]^\wedge = V$ for each $V \in \bV$,
  the function $\pi \mapsto \pi^\wedge$ is surjective.
  For the continuity, choose any net $\net{\pi_d}$ in $\Delta_\bV$
  with a limit $\bar \pi \in \Delta_\bV$.
  We show that $\net{\rho(\pi_d^\wedge, \bar \pi^\wedge)}$ converges to $0$.
  Fix any $\eps > 0$.
  Let $F$ be a finite subset of $\bB$
  whose $(\eps/3)$-neighborhood includes $\bB$.
  By the $1$-Lipschitz continuity of each member of $\bV$,
  for each $(\phi, d) \in \bB \times \bD$,
  there exists $\psi \in F$ such that
  \begin{align}
    \abs{\pi_d^\wedge(\phi) - \bar \pi^\wedge(\phi)}
    &\le \abs{\pi_d^\wedge(\phi) - \pi_d^\wedge(\psi)}
      + \abs{\pi_d^\wedge(\psi) - \bar \pi^\wedge(\psi)}
      + \abs{\bar \pi^\wedge(\psi) - \bar \pi^\wedge(\phi)} \\
    &< \abs{\dual{\Dirac[\psi]^\vee, \pi_d - \bar \pi}} + \frac{2}{3}\eps.
  \end{align}
  Thus, since $\net{\pi_d}$ is eventually in the weak$*$ neighborhood
  $\bigcap_{\psi \in F} \set{\tilde \pi \in \Delta_\bV
  \mvert \abs{\dual{\Dirac[\psi]^\vee, \tilde \pi - \bar \pi}} < \eps/3}$
  of $\bar \pi$,
  the net $\net{\rho(\pi_d^\wedge, \bar \pi^\wedge)}$
  is eventually in $[-\eps, \eps]$.
\end{proof}

For each cost structure $(\bM, c)$,
let $\bM^\diamond = \set{\pi \in \Delta
\mvert \pi^\wedge = H_M \tforsome M \in \bM}$, and
define $c^\diamond \colon \bM^\diamond \to \SR$ by
$c^\diamond(\pi) = c(M)$ for each $(\pi, M) \in \bM^\diamond \times \bM$
with $\pi^\wedge = H_M$.

\begin{lemma}\label{lem:cs_nv}
  Let $(\bM, c)$ be a cost structure, and
  let $W$ be the real-valued function on $\Xi$ of the form
  $W(m^\vee) = \max_{M \in \bM} (\int H_M \diff m - c(M))$.
  % $W$ is well-defined
  % since $l^\vee = m^\vee$ implies $\int H_M \diff l = \int H_M \diff m$
  \begin{enumerate}
    \item \label{item:cs_nv}
      $W$ is a normalized convex niveloid that has a variational representation
      $(\bM^\diamond, c^\diamond)$.
    \item \label{item:cs_nv_cn}
      If $(\bM, c)$ is convex, then $c^\diamond = W^*|_{\bM^\diamond}$.
    \item \label{item:cs_nv_sd}
      If $(\bM, c)$ is convex,
      then $[\argmax_{M \in \bM} (\int H_M \diff m - c(M))]^\diamond
      = \partial W(m^\vee) \cap \bM^\diamond$ for each $m \in \lott(\Phi)$.
  \end{enumerate}
\end{lemma}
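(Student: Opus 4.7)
The plan is to reformulate $W$ as a variational representation in the sense of \cref{subsec:nv} and then apply the abstract results developed there. The key identity is that for each $m \in \lott(\Phi)$ and each $(\pi, M) \in \bM^\diamond \times \bM$ with $\pi^\wedge = H_M$, we have $\int H_M \diff m = \dual{m^\vee, \pi}$, as noted just before \cref{lem:cs_wedge}. Because $M \mapsto H_M$ is injective by \cref{lem:cs_supp} (\ref{item:cs_supp}), $c^\diamond$ is well defined, and
\begin{equation}
W(m^\vee) = \max_{\pi \in \bM^\diamond} (\dual{m^\vee, \pi} - c^\diamond(\pi)).
\end{equation}

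For part (\ref{item:cs_nv}), I first verify the compactness and lower semicontinuity conditions required for the above to be a variational representation. Given a net $\net{\pi_d}$ in $\bM^\diamond$ with weak$*$ limit $\bar \pi \in \Delta$, I pick $M_d \in \bM$ with $H_{M_d} = \pi_d^\wedge$; by compactness of $\bM$, I extract a subnet with $M_d \to \bar M \in \bM$, apply the isometry from \cref{lem:cs_supp} (\ref{item:cs_supp}), and compare the two expressions for $\lim \pi_d^\wedge(\phi)$ to conclude $\bar \pi^\wedge = H_{\bar M}$, hence $\bar \pi \in \bM^\diamond$. The same argument applied to the compact sublevel set $\set{M \in \bM \mvert c(M) \le t}$ yields weak$*$ lower semicontinuity of $c^\diamond$. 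Then \cref{lem:nv_var} (\ref{item:nv_var_nv}) makes $W$ convex, monotone, and translation equivariant on $\Xi$; since $\Xi$ is a tube by \cref{lem:cs_vee} (\ref{item:cs_vee_rng}), \cref{lem:nv_mon_ts} (\ref{item:nv_mon_ts_tube}) upgrades $W$ to a niveloid. Normalization follows because $\Dirac[t\vone_\Omega]^\vee = t\vone_\bU$ by construction of $\hat v$, so $W(t\vone_\bU) = \max_{M \in \bM}(t - c(M)) = t$ by groundedness of $c$.

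For part (\ref{item:cs_nv_cn}), I check that under canonicality the variational representation $(\bM^\diamond, c^\diamond)$ is itself canonical as required by \cref{prop:nv_var_uniq}. The monotonicity-type condition transfers from (\ref{item:cn_mon}) via \cref{lem:cs_supp} (\ref{item:cs_supp_mon}): if $\dual{m^\vee, \pi} \ge \dual{m^\vee, \pi'}$ for all $m \in \lott(\Phi)$, specializing to Dirac measures yields $H_M \ge H_{M'}$, hence $M \subseteq M'$, hence $c(M) \ge c(M')$. Convexity of $\bM^\diamond$ and $c^\diamond$ uses the mixture linearity of both $M \mapsto H_M$ and $\pi \mapsto \pi^\wedge$: for $\pi, \pi' \in \bM^\diamond$ corresponding to $M, M' \in \bM$, the mixture $\lambda \pi + (1-\lambda)\pi'$ corresponds to $\lambda M + (1-\lambda)M' \in \bM$, and $c^\diamond(\lambda \pi + (1-\lambda)\pi') \le \lambda c^\diamond(\pi) + (1-\lambda)c^\diamond(\pi')$ is convexity of $c$. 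Since $\Xi$ is a convex cone by \cref{lem:cs_vee} (\ref{item:cs_vee_rng}), \cref{prop:nv_var_uniq} delivers $c^\diamond = W^*|_{\bM^\diamond}$.

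For part (\ref{item:cs_nv_sd}), I unfold the definitions: $\pi \in [\argmax_{M \in \bM}(\int H_M \diff m - c(M))]^\diamond$ iff the unique $M \in \bM$ with $H_M = \pi^\wedge$ attains the maximum, iff $\dual{m^\vee, \pi} - c^\diamond(\pi) = W(m^\vee)$, iff (by part (\ref{item:cs_nv_cn})) $\dual{m^\vee, \pi} - W^*(\pi) = W(m^\vee)$, iff $\pi \in \partial W(m^\vee)$ by \cref{lem:cvx_conj} (\ref{item:cvx_conj_sd}). The main difficulty is part (\ref{item:cs_nv}), where I must carefully transport compactness and lower semicontinuity across three distinct topologies---the Hausdorff metric on $\bK$, the uniform metric on $\bV$, and the weak$*$ topology on $\Delta$---linked by the isometry from $\bK$ to $\bV$ and the continuous surjection $\pi \mapsto \pi^\wedge$ from $\Delta_\bV$ to $\bV$ of \cref{lem:cs_wedge}.
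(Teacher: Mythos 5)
The proposal is correct and follows the paper's proof essentially step for step: part (\ref{item:cs_nv}) rewrites $W(m^\vee) = \max_{\pi \in \bM^\diamond}(\dual{m^\vee,\pi} - c^\diamond(\pi))$ via the identity $\int H_M \diff m = \dual{m^\vee,\pi}$ and invokes \cref{lem:nv_var}, \cref{lem:nv_mon_ts}, and \cref{lem:cs_vee}; part (\ref{item:cs_nv_cn}) verifies canonicality of $(\bM^\diamond, c^\diamond)$ and applies \cref{prop:nv_var_uniq}; and part (\ref{item:cs_nv_sd}) unfolds the argmax using part (\ref{item:cs_nv_cn}) and \cref{lem:cvx_conj}. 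The only difference is that you spell out the net argument for weak$*$ compactness of $\bM^\diamond$ and lower semicontinuity of $c^\diamond$, which the paper leaves implicit in its citation of \cref{lem:cs_supp} and \cref{lem:cs_wedge}---a useful clarification, and it is correct.
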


\begin{proof}
  (\ref{item:cs_nv})\enspace
  By construction, $W(m^\vee)
  = \max_{\pi \in \bM^\diamond} (\dual{m^\vee, \pi} - c^\diamond(\pi))$
  for each $m \in \lott(\Phi)$.
  Thus, by \cref{lem:cs_supp} (\ref{item:cs_supp}) and \cref{lem:cs_wedge},
  $(\bM^\diamond, c^\diamond)$ is a variational representation of $W$.
  Thus, by \cref{lem:nv_mon_ts} (\ref{item:nv_mon_ts_tube}),
  \cref{lem:nv_var} (\ref{item:nv_var_nv}), and
  \cref{lem:cs_vee} (\ref{item:cs_vee_rng}), $W$ is a convex niveloid.
  Since $c$ is grounded, so is $c^\diamond$.
  Hence, $W$ is normalized.

  (\ref{item:cs_nv_cn})\enspace
  Suppose that $(\bM, c)$ is convex.
  By \cref{lem:cs_supp,lem:cs_wedge},
  $\bM^\diamond$ and $c^\diamond$ are convex, and
  for each $(\pi, \tilde \pi) \in (\bM^\diamond)^2$,
  if $\dual{\xi, \pi} \ge \dual{\xi, \tilde \pi}$ for each $\xi \in \Xi$,
  then $c^\diamond(\pi) \ge c^\diamond(\tilde \pi)$.
  Thus, by part (\ref{item:cs_nv}), $(\bM^\diamond, c^\diamond)$
  is a canonical variational representation of $W$.
  Hence, by \cref{prop:nv_var_uniq}, $c^\diamond = W^*|_{\bM^\diamond}$.

  (\ref{item:cs_nv_sd})\enspace
  If $(\bM, c)$ is convex, then
  for each $m \in \lott(\Phi)$,
  then by part (\ref{item:cs_nv_cn}) and
  \cref{lem:cvx_conj} (\ref{item:cvx_conj_ineq}) and (\ref{item:cvx_conj_sd}),
  $[\argmax_{M \in \bM} (\int H_M \diff m - c(M))]^\diamond
  = \argmax_{\pi \in \bM^\diamond} (\dual{m^\vee, \pi} - c^\diamond(\pi))
  = \argmax_{\pi \in \bM^\diamond} (\dual{m^\vee, \pi} - W^*(\pi))
  = \partial W(m^\vee) \cap \bM^\diamond$.
\end{proof}

A binary relation $\rel$ on $\Xi$ is
\begin{itemize}
  \item
    \emph{$\Phi$-monotone}
    if $\phi \ge \psi$ implies $\Dirac[\phi]^\vee \rel \Dirac[\psi]^\vee$
    for each $(\phi, \psi) \in \Phi^2$;
  \item
    \emph{attracted to ex post randomization}
    if $\Dirac[\lambda \phi + (1 - \lambda)\psi]^\vee
    \rel [\lambda \Dirac[\phi] + (1 - \lambda)\Dirac[\psi]]^\vee$
    for each $(\lambda, (\phi, \psi)) \in [0, 1] \times \Phi^2$.
\end{itemize}

\begin{lemma}\label{lem:cs_nv_rep}
  Let $W$ be a normalized convex niveloid on $\Xi$ such that
  $\rel^*_W$ is $\Phi$-monotone and attracted to ex post randomization, and
  let $c$ be the extended real-valued function on $\bK$ of the form
  $c(M) = \sup_{m \in \lott(\Phi)} (\int H_M \diff m - W(m^\vee))$.
  Then, there exists a compact convex subset $\bM^*$ of $\dom c$ such that
  \begin{enumerate}
    \item \label{item:cs_nv_rep_mmeu}
      for each $(l, m) \in \lott(\Phi)^2$,
      it follows that $l^\vee \rel^*_W m^\vee$ if and only if
      $\int H_M \diff l \ge \int H_M \diff m$ for each $M \in \bM^*$\textup{;}
    \item \label{item:cs_nv_rep}
      for each compact convex subset $\bM$ of $\bK$
      with $\bM^* \subseteq \bM \subseteq (\bM^*)^\uparrow$,
      the pair $(\bM, c|_\bM)$ is a convex cost structure and
      $W(m^\vee) = \max_{M \in \bM} (\int H_M \diff m - c(M))$
      for each $m \in \lott(\Phi)$.
  \end{enumerate}
\end{lemma}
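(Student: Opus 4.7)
The plan is to apply \cref{prop:nv_var_me} to $W$ and then transport the resulting dual-side variational representation from $\Delta$ to the parameter space $\bK$ along the support-function isometry of \cref{lem:cs_supp}. First, \cref{prop:nv_var_me} produces a weak$*$ compact convex $\Pi \subseteq \dom W^*|_\Delta$ that is simultaneously a multi-expectation representation of $\rel^*_W$ and such that $(\Pi, W^*|_\Pi)$ is a variational representation of $W$.

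The crux is to verify that $\pi^\wedge \in \bV$ for every $\pi \in \Pi$. By \cref{lem:cs_vee}~(\ref{item:cs_vee_Dirac}), $\pi^\wedge$ is automatically positively homogeneous and translation equivariant. Since $\Pi$ is a multi-expectation representation of $\rel^*_W$, $\Phi$-monotonicity of $\rel^*_W$ forces $\pi^\wedge(\phi) \ge \pi^\wedge(\psi)$ whenever $\phi \ge \psi$, and attraction to ex post randomization together with the mixture linearity of $m \mapsto m^\vee$ (\cref{lem:cs_vee}~(\ref{item:cs_vee_ml})) translates to concavity of $\pi^\wedge$. Monotonicity and translation equivariance on the tube $\Phi$ then yield the niveloid property via \cref{lem:nv_mon_ts}~(\ref{item:nv_mon_ts_tube}), so $\pi^\wedge$ is a superlinear niveloid.

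Using the isometric mixture linear bijection $M \mapsto H_M$ of \cref{lem:cs_supp}~(\ref{item:cs_supp}), let $M_\pi \in \bK$ be the unique set with $H_{M_\pi} = \pi^\wedge$, and set $\bM^* = \set{M_\pi \mvert \pi \in \Pi}$. Because $\pi \mapsto \pi^\wedge$ is continuous and mixture linear (\cref{lem:cs_wedge}) and $H^{-1}$ is an isometry, $\bM^*$ is the compact convex image of $\Pi$. Invoking $\dual{m^\vee, \pi} = \int \pi^\wedge \diff m = \int H_{M_\pi} \diff m$ gives $c(M_\pi) = W^*(\pi) < \infty$, so $\bM^* \subseteq \dom c$; the same identity converts the multi-expectation property of $\Pi$ into the multi-MEU statement of part (\ref{item:cs_nv_rep_mmeu}).

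For part (\ref{item:cs_nv_rep}), the variational representation yields $W(m^\vee) = \max_{M \in \bM^*}(\int H_M \diff m - c(M))$, while the definition of $c$ forces $\int H_M \diff m - c(M) \le W(m^\vee)$ for every $M \in \bK$; combining these for any $\bM$ with $\bM^* \subseteq \bM \subseteq \dom c$ gives the displayed representation, with the maximum achieved on $\bM^*$. Canonicality of $(\bM, c|_\bM)$ then follows routinely: lower semicontinuity and convexity of $c$ on $\bK$ come from its being a supremum of continuous affine functions $M \mapsto \int H_M \diff m - W(m^\vee)$ (via the isometry and mixture linearity of $M \mapsto H_M$); set monotonicity $M \subseteq M' \Rightarrow c(M) \ge c(M')$ follows from \cref{lem:cs_supp}~(\ref{item:cs_supp_mon}); and groundedness holds since evaluating $c$ at $m = \Dirac[0]$ with $W(\Dirac[0]^\vee) = 0$ gives $c \ge 0$ on $\bK$, while the variational representation at the same point gives $\min_{\bM^*} c = 0$. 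The main obstacle is the identification of $\pi^\wedge$ as a superlinear niveloid for every $\pi \in \Pi$: the $\Phi$-monotonicity and attraction-to-ex-post axioms on $\rel^*_W$ are the precise hooks that make this possible, thereby enabling the passage from the abstract $\Pi$ in $\Delta$ to the concrete set $\bM^*$ of ambiguity perceptions.
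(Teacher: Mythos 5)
Your proposal is correct and follows essentially the same route as the paper's proof: apply \cref{prop:nv_var_me} to obtain $\Pi$, verify via $\Phi$-monotonicity and attraction to ex post randomization of $\rel^*_W$ that each $\pi^\wedge$ is a superlinear niveloid, transport $\Pi$ to $\bM^*$ along the support-function isometry, and deduce the variational representation on $\bM$ and canonicality of $c|_\bM$ exactly as in the paper. The only cosmetic difference is that you re-derive groundedness directly from normalization of $W$ rather than citing \cref{lem:nv_var}~(\ref{item:nv_var_gr}).
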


\begin{proof}
  By construction, $c(M) = W^*(\pi)$
  for each $(M, \pi) \in \bK \times \bK^\diamond$ with $H_M = \pi^\wedge$.
  By \cref{prop:nv_var_me},
  there exists a subset $\Pi$ of $\dom W^*|_\Delta$ such that
  $\Pi$ is a multi-expectation representation of $\rel^*_W$ and
  $(\Pi, W^*|_\Pi)$ is a variational representation of $W$.
  Since each member of $\Pi^\wedge$ is positively homogeneous and
  translation equivariant by \cref{lem:cs_vee} (\ref{item:cs_vee_Dirac})
  and is monotone and concave by the $\Phi$-monotonicity and
  attraction to ex post randomization of $\rel^*_W$,
  we have $\Pi \subseteq \Delta_\bV$.
  Let $\bM^* = \set{M \in \bK \mvert H_M \in \Pi^\wedge}$.
  By \cref{lem:cs_supp} (\ref{item:cs_supp}) and \cref{lem:cs_wedge},
  $\bM^*$ is nonempty, compact, and convex.
  Since for each $\pi \in \Pi$,
  there exists $M \in \bM$ such that $H_M = \pi^\wedge$,
  we have (\ref{item:cs_nv_rep_mmeu}) and
  $W(m^\vee) = \max_{M \in \bM^*} (\int H_M \diff m - c(M))$
  for each $m \in \lott(\Phi)$.
  Since $\Pi \subseteq \dom W^*$, we have $\bM^* \subseteq \dom c$.
  Thus, since $c(M) \ge c(M')$
  for each $(M, M') \in \bM^2$ with $M \subseteq M'$
  by \cref{lem:cs_supp} (\ref{item:cs_supp_mon}),
  the pair $(\bM^*, c|_{\bM^*})$ is a cost structure and
  $(\bM^*)^\uparrow \subseteq \dom c$.

  To see (\ref{item:cs_nv_rep}),
  choose any compact convex subset $\bM$ of $\bK$
  with $\bM^* \subseteq \bM \subseteq (\bM^*)^\uparrow$.
  Then, for each $m \in \lott(\Phi)$,
  we have $W(m^\vee) \le \max_{M \in \bM} (\int H_M \diff m - c(M))$;
  since $c(M) \ge \int H_M \diff m - W(m^\vee)$ for each $M \in \bM$,
  we have $W(m^\vee) \ge \max_{M \in \bM} (\int H_M \diff m - c(M))$.
  Since $M \mapsto \int H_M \diff m - W(m^\vee)$
  is mixture linear and continuous for each $m \in \lott(\Phi)$
  by \cref{lem:cs_supp} (\ref{item:cs_supp}),
  the function $c|_\bM$ is lower semicontinuous and convex.
  Since $W^*|_\Pi$ is grounded by \cref{lem:nv_var} (\ref{item:nv_var_gr}),
  so is $c|_\bM$.
  Thus, $(\bM, c)$ is convex.
\end{proof}

\section{Proofs}

\subsection{Auxiliary lemmas}\label{subsec:aux}

With an abuse of notation, identify $\lott(X)$ with the set of constant acts.

\begin{lemma}\label{lem:aux_ica}
  Assume \axmref{axm:reg} and \axmref{axm:ica}.
  \begin{enumerate}
    \item \label{item:aux_ica_ind}
      The restriction of $\wpr$ to $\lott(\lott(X))$ is mixture independent.
    \item \label{item:aux_ica_ubd}
      If $\wpr$ satisfies \axmref{axm:ubd},
      then for each $(\lambda, (p, q)) \in (0, 1) \times \lott(X)^2$
      with $\Dirac[p] \spr \Dirac[q]$,
      there exists $R \in \lott(\lott(X))$ such that
      $\lambda R + (1 - \lambda)\Dirac[q] \ipr \Dirac[p]$.
  \end{enumerate}
\end{lemma}

\begin{proof}
  (\ref{item:aux_ica_ind})\enspace
  For each $(P, Q, R) \in \lott(\lott(X))^3$,
  by \axmref{axm:ica}, the following three conditions are equivalent:
  \begin{enumerate*}[label = (\alph*)]
    \item
      $\half P + \half R \wpr \half Q + \half R$,
    \item
      $P \wpr \half P + \half Q$, and
    \item
      $\half P + \half Q \wpr Q$;
  \end{enumerate*}
  thus, we have $P \wpr Q$ if and only if
  $\half P + \half R \wpr \half Q + \half R$.
  Since $\lambda Q + (1 - \lambda)R
  = \half[2\lambda Q + (1 - 2\lambda)R] + \half R$
  for each $(\lambda, (Q, R)) \in [0, \half] \times \lott(\lott(X))^2$ and
  since the dyadic rationals are dense in the real line,
  the mixture independence of $\wpr$ follows from its mixture continuity.

  (\ref{item:aux_ica_ubd})\enspace
  Assume \axmref{axm:ubd}.
  Choose any $(\lambda, (p, q)) \in (0, 1) \times \lott(X)^2$
  with $\Dirac[p] \spr \Dirac[q]$.
  Let $r_0 = p$.
  For each $n \in \SN$, if $\Dirac[r_{n - 1}] \wpr \Dirac[p] \spr \Dirac[q]$,
  then by \axmref{axm:ubd}, we can take $r_n \in \lott(X)$ such that
  $\half \Dirac[r_n] + \half \Dirac[q] \wpr \Dirac[r_{n - 1}]$,
  in which case $\Dirac[r_n] \spr \Dirac[r_{n - 1}]$
  by part (\ref{item:aux_ica_ind}),
  so $\Dirac[r_n] \wpr \Dirac[p] \spr \Dirac[q]$ .
  Let $n \in \SN$ be such that $2^{-n} < \lambda$.
  Then, $\lambda \Dirac[r_n] + (1 - \lambda)\Dirac[q]
  \spr 2^{-n} \Dirac[r_n] + (1 - 2^{-n})\Dirac[q]
  \wpr \Dirac[p] \spr \Dirac[q]$.
  By the mixture continuity of $\wpr$,
  there exists $\kappa \in (0, 1)$ such that
  $\lambda[\kappa \Dirac[r_n] + (1 - \kappa)\Dirac[q]]
  + (1 - \lambda)\Dirac[q] \ipr \Dirac[p]$.
\end{proof}

\begin{lemma}\label{lem:aux_vnm_surj}
  Assume \axmref{axm:reg}, \axmref{axm:irtc}, and \axmref{axm:ica}.
  The relation $\wpr$ satisfies \axmref{axm:ubd} if and only if
  its restriction to $\Dirac[\lott(X)]$ is represented
  by a surjective vNM function.
\end{lemma}

\begin{proof}
  For each $(p, q, r) \in \lott(X)^3$,
  it follows from \axmref{axm:irtc} and
  \cref{lem:aux_ica} (\ref{item:aux_ica_ind}) that
  $\Dirac[p] \ipr \Dirac[q]$ if and only if
  $\Dirac[\half p + \half r] \ipr \half \Dirac[p] + \half \Dirac[r]
  \ipr \half \Dirac[q] + \half \Dirac[r] \ipr \Dirac[\half q + \half r]$.
  Thus, by the mixture space theorem \citep{HersteinMilnor1953},
  there exists a vNM function $u$ that represents
  the restriction of $\wpr$ to $\Dirac[\lott(X)]$.
  By Lemma 59 of \citet{CMMM2011},
  $\wpr$ satisfies \axmref{axm:ubd} if and only if $u$ is surjective.
\end{proof}

The relation $\wpr$ \emph{exhibits increasing desire for constant acts}
if for each $(\lambda, (P, Q), (p, q))
\in [0, 1] \times \lott(\cF)^2 \times \lott(X)^2$
with $P \ipr \Dirac[p]$ and $Q \ipr \Dirac[q]$,
we have $\lambda \Dirac[p] + (1 - \lambda)\Dirac[q]
\wpr \lambda P + (1 - \lambda)Q$.

\begin{lemma}\label{lem:aux_eaar}
  Assume \axmref{axm:reg}, \axmref{axm:ica}, and \axmref{axm:ubd}.
  The relation $\wpr$ satisfies \axmref{axm:eaar} if and only if
  it exhibits increasing desire for constant acts.
\end{lemma}

\begin{proof}
  If $\wpr$ exhibits increasing desire for constant acts,
  then for each $(\lambda, (P, Q)) \in [0, 1] \times \lott(\cF)^2$
  with $P \wpr Q$,
  letting $(p, q) \in \lott(X)^2$ be such that
  $\Dirac[p] \ipr P$ and $\Dirac[q] \ipr Q$ gives
  $P \ipr \Dirac[p] \wpr \lambda \Dirac[p] + (1 - \lambda)\Dirac[q]
  \wpr \lambda P + (1 - \lambda)Q$.
  For the converse, assume \axmref{axm:eaar}.
  Let $(\lambda, (P, Q), (p, q))
  \in [0, 1] \times \lott(\cF)^2 \times \lott(X)^2$ be such that
  $P \ipr \Dirac[p]$ and $Q \ipr \Dirac[q]$.
  Without loss of generality, assume $P \wpr Q$.
  If $P \ipr Q$,
  then $\Dirac[p] \ipr \lambda \Dirac[p] + (1 - \lambda)\Dirac[q]$
  by \cref{lem:aux_ica} (\ref{item:aux_ica_ind}) and
  $\Dirac[p] \ipr P \wpr \lambda P + (1 - \lambda)Q$ by \axmref{axm:eaar},
  so $\lambda \Dirac[p] + (1 - \lambda)\Dirac[q]
  \wpr \lambda P + (1 - \lambda)Q$.
  Suppose $P \spr Q$.
  By the mixture continuity of $\wpr$,
  it suffices to show that for each $\eps \in (0, 1)$,
  \begin{equation}\label{eq:pref}
    \lambda \Dirac[p] + (1 - \lambda)\Dirac[q]
    \wpr \eps \bigl[\lambda \Dirac[p] + (1 - \lambda)\Dirac[q]\bigr]
      + (1 - \eps)[\lambda P + (1 - \lambda)Q].
  \end{equation}
  Choose any $\eps \in (0, 1)$.
  By \axmref{axm:eaar}, $\Dirac[p] \wpr \eps \Dirac[p] + (1 - \eps)P$ and
  $\eps \Dirac[q] + (1 - \eps)\Dirac[q] = \Dirac[q]
  \wpr \eps \Dirac[q] + (1 - \eps)Q$.
  By \cref{lem:aux_ica} (\ref{item:aux_ica_ubd}),
  there exists $R \in \lott(\lott(X))$ such that
  $\eps R + (1 - \eps)\Dirac[q] \ipr \Dirac[p]$.
  By \axmref{axm:ica},
  $\Dirac[p] \ipr \eps R + (1 - \eps)\Dirac[q] \wpr \eps R + (1 - \eps)Q$.
  Thus, \axmref{axm:eaar} implies
  $\Dirac[p] \wpr \lambda[\eps \Dirac[p] + (1 - \eps)P]
  + (1 - \lambda)[\eps R + (1 - \eps)Q]$.
  Since $\Dirac[p] \ipr \lambda \Dirac[p]
  + (1 - \lambda)[\eps R + (1 - \eps)\Dirac[q]]$
  by \cref{lem:aux_ica} (\ref{item:aux_ica_ind}),
  we have
  \begin{equation}
    \lambda \Dirac[p] + (1 - \lambda)\bigl[\eps R + (1 - \eps)\Dirac[q]\bigr]
    \wpr \lambda\bigl[\eps \Dirac[p] + (1 - \eps)P\bigr]
    + (1 - \lambda)[\eps R + (1 - \eps)Q]
  \end{equation}
  Applying \axmref{axm:ica} yields \eqref{eq:pref}.
\end{proof}

\providecommand{\wpru}{\wpr^\mathrm{u}}
\providecommand{\ipru}{\ipr^\mathrm{u}}
\providecommand{\spru}{\spr^\mathrm{u}}

Let $\wpru$ be a binary relation on $\lott(\Phi)$.
Denote by $\ipru$ and $\spru$
the symmetric and asymmetric parts of $\wpru$, respectively.
The relation $\wpru$ is \emph{indifferent to mixture timing of constants}
if $\kappa \Dirac[\lambda \phi + (1 - \lambda)t\vone_\Omega]
+ (1 - \kappa)m
\ipru \kappa[\lambda \Dirac[\phi] + (1 - \lambda)\Dirac[t \vone_\Omega]]
+ (1 - \kappa)m$
for each $((\kappa, \lambda), m, \phi, t)
\in [0, 1]^2 \times \lott(\Phi) \times \Phi \times \SR$.

\begin{lemma}\label{lem:aux_imtc}
  If $\wpru$ is transitive and indifferent to mixture timing of constants, then
  $l^\vee = m^\vee$ implies $l \ipru m$ for each $(l, m) \in \lott(\Phi)^2$.
\end{lemma}

\begin{proof}
  Suppose that
  $\wpru$ is transitive and indifferent to mixture timing of constants.
  Choose any $(m_1, m_2) \in \lott(\Phi)^2$ with $m_1^\vee = m_2^\vee$.
  Then, there exists a triple $(\Theta, \alpha, t)$ of
  a finite subset of $\Psi$, a nonnegative-valued function on $\Theta$, and
  a real number such that
  $m_1^\vee(v) = m_2^\vee(v) = \sum_{\psi \in \Theta} \alpha(\psi)v(\psi) + t$
  for each $v \in \bU$.
  Let $N = \abs{\Theta} + 1$.
  We show that $m_i \ipru N^{-1}(\sum_{\psi \in \Theta}
  \Dirac[N \alpha(\psi) \psi] + \Dirac[N t \vone_\Omega])$
  for each $i \in \{1, 2\}$.

  Fix any $i \in \{1, 2\}$.
  For each $((\kappa, \lambda), l, (a, b), \psi, (r, s))
  \in [0, 1]^2 \times \lott(\Phi) \times \SR_+^2 \times \Psi \times \SR^2$,
  if $a \le b$, then by indifference to mixture timing of constants,
  \begin{align}
    &\kappa\bigl[\lambda \Dirac[a \psi + r \vone_\Omega]
      + (1 - \lambda)\Dirac[b \psi + s \vone_\Omega]\bigr]
      + (1 - \kappa)l \\
    &\qquad
    = \kappa\biggl[\lambda \Dirac\Bigl[
      \frac{a}{b}(b \psi + s \vone_\Omega)
      + \Bigl(1 - \frac{a}{b}\Bigr)
      \frac{b r - a s}{b - a}\vone_\Omega\Bigr]
      + (1 - \lambda)\Dirac[b \psi + s \vone_\Omega]\biggr]
      + (1 - \kappa)l \\
    &\qquad
    \ipru \kappa\biggl\{
      \Bigl[1 - \lambda\Bigl(1 - \frac{a}{b}\Bigr)\Bigr]
      \Dirac[b \psi + s \vone_\Omega]
      + \lambda\Bigl(1 - \frac{a}{b}\Bigr)
      \Dirac\Bigl[\frac{b r - a s}{b - a}\vone_\Omega\Bigr]
      \biggr\}
      + (1 - \kappa)l \\
    &\qquad
    \ipru \kappa \Dirac\bigl[[\lambda a + (1 - \lambda)b]\psi
      + [\lambda r + (1 - \lambda)s]\vone_\Omega\bigr] + (1 - \kappa)l,
  \end{align}
  and by \cref{lem:cs_vee}
  (\ref{item:cs_vee_ml}) and (\ref{item:cs_vee_Dirac}),
  \begin{align}
    &\bigl\{\kappa\bigl[\lambda \Dirac[a \psi + r \vone_\Omega]
      + (1 - \lambda)\Dirac[b \psi + s \vone_\Omega]\bigr]
      + (1 - \kappa)l\bigr\}^\vee \\
    &\qquad
    = \bigl\{\kappa \Dirac\bigl[[\lambda a + (1 - \lambda)b]\psi
      + [\lambda r + (1 - \lambda)s]\vone_\Omega\bigr]
      + (1 - \kappa)l\bigr\}^\vee.
  \end{align}
  Thus, there exists $(\sigma_i, \beta_i, \tau_i)
  \in \SR_+^\Theta \times \SR_+^\Theta \times \SR^\Theta$ such that
  $\sum_{\psi \in \Theta} \sigma_i(\psi) = 1$,
  $m_i \ipru \sum_{\psi \in \Theta}
  \sigma_i(\psi) \Dirac[\beta_i(\psi)\psi + \tau_i(\psi)\vone_\Omega]$, and
  $m_i^\vee = [\sum_{\psi \in \Theta}
  \sigma_i(\psi) \Dirac[\beta_i(\psi)\psi + \tau_i(\psi)\vone_\Omega]]^\vee$.
  Since by \cref{lem:cs_vee}
  (\ref{item:cs_vee_ml}) and (\ref{item:cs_vee_Dirac}),
  \begin{align}
    \sum_{\psi \in \Theta} \alpha(\psi) \Dirac[\psi]^\vee + t \vone_\bU
    &= m_i^\vee
    = \Bigl[\sum_{\psi \in \Theta}
      \sigma_i(\psi) \Dirac[\beta_i(\psi)\psi + \tau_i(\psi)\vone_\Omega]
      \Bigr]^\vee \\
    &= \sum_{\psi \in \Theta} \sigma_i(\psi) \beta_i(\psi) \Dirac[\psi]^\vee
      + \sum_{\psi \in \Theta} \sigma_i(\psi) \tau_i(\psi) \vone_\bU,
  \end{align}
  we have $\alpha(\psi) = \sigma_i(\psi) \beta_i(\psi)$
  for each $\psi \in \Theta$ and
  $t = \sum_{\psi \in \Theta} \sigma_i(\psi) \tau_i(\psi)$.
  Hence, by the indifference to mixture timing of constant acts of $\wpru$,
  \begin{align}
    m_i
    &\ipru \sum_{\psi \in \Theta}
      \sigma_i(\psi) \Dirac[\beta_i(\psi)\psi + \tau_i(\psi)\vone_\Omega] \\
    &= \sum_{\psi \in \Theta} \sigma_i(\psi) \Dirac\Bigl[
      \frac{1}{N} N \beta_i(\psi)\psi + \Bigl(1 - \frac{1}{N}\Bigr)
      \frac{N}{N - 1} \tau_i(\psi)\vone_\Omega\Bigr] \\
    &\ipru \sum_{\psi \in \Theta} \sigma_i(\psi) \biggl(\frac{1}{N}
      \Dirac[N \beta_i(\psi)\psi]
      + \frac{N - 1}{N}
      \Dirac\Bigl[\frac{N}{N - 1}\tau_i(\psi)\vone_\Omega\Bigr]\biggr) \\
    &= \frac{1}{N}
      \sum_{\psi \in \Theta} \sigma_i(\psi) \biggl(\Dirac[N \beta_i(\psi)\psi]
      + \frac{N - 1}{N}
      \Dirac\Bigl[\frac{1}{N - 1} N \tau_i(\psi)\vone_\Omega
      + \Bigl(1 - \frac{1}{N - 1}\Bigr) 0\Bigr]\biggr) \\
    &\ipru \frac{1}{N}
      \sum_{\psi \in \Theta} \sigma_i(\psi) \Bigl(\Dirac[N \beta_i(\psi)\psi]
      + \frac{1}{N} \Dirac[N \tau_i(\psi)\vone_\Omega]
      + \frac{N - 2}{N} \Dirac[0]\Bigr) \\
    &= \frac{1}{N} \biggl\{
      \sum_{\psi \in \Theta} \bigl[\sigma_i(\psi) \Dirac[N \beta_i(\psi)\psi]
      + (1 - \sigma_i(\psi))\Dirac[0]\bigr]
      + \sum_{\psi \in \Theta}
      \sigma_i(\psi) \Dirac[N \tau_i(\psi)\vone_\Omega] \biggr\} \\
    &\ipru \frac{1}{N} \biggl(
      \sum_{\psi \in \Theta} \Dirac[N \sigma_i(\psi) \beta_i(\psi) \psi]
      + \Dirac\Bigl[N \sum_{\psi \in \Theta}
      \sigma_i(\psi) \tau_i(\psi) \vone_\Omega\Bigr]\biggr) \\
    &= \frac{1}{N} \Bigl(\sum_{\psi \in \Theta} \Dirac[N \alpha(\psi) \psi]
      + \Dirac[N t \psi]\Bigr).
    \qedhere
  \end{align}
\end{proof}

The relation $\wpru$ is
\begin{itemize}
  \item
    \emph{monotone} if $\phi \ge \psi$ implies
    $\lambda \Dirac[\phi] + (1 - \lambda)m
    \wpru \lambda \Dirac[\psi] + (1 - \lambda)m$
    for each $(\lambda, m, (\phi, \psi))
    \in [0, 1] \times \lott(\Phi) \times \Phi^2$;
  \item
    \emph{$\Xi$-monotone} if $l^\vee \ge m^\vee$ implies $l \wpru m$
    for each $(l, m) \in \lott(\Phi)^2$.
\end{itemize}

\begin{lemma}\label{lem:aux_mon}
  If $\wpru$ is transitive, monotone, and
  indifferent to mixture timing of constants, then it is $\Xi$-monotone.
\end{lemma}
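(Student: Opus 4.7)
The plan is to transform $l$ and $m$ into a canonical ``directional measure plus single constant'' form via indifferences generated by the mixture-timing axiom, and then compare the canonical forms using statewise dominance and transitivity.

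First I would derive expected-utility behaviour on lotteries over constant acts: specialising the axiom to $\phi=s\vone_\Omega$ gives, within any ambient lottery, $\lambda\Dirac[s\vone_\Omega]+(1-\lambda)\Dirac[t\vone_\Omega]\ipru\Dirac[(\lambda s+(1-\lambda)t)\vone_\Omega]$, and iterating collapses any finite submixture of Dirac measures on constant acts into the Dirac at its weighted mean. Next, for each $\phi$ in the support of $l$ with canonical decomposition $\phi=\alpha_\phi\tilde\psi_\phi+t_\phi\vone_\Omega$ and $\alpha_\phi\in[0,1)$, the axiom applied with $\lambda=\alpha_\phi$ and $t=t_\phi/(1-\alpha_\phi)$ splits $\Dirac[\phi]$ into $\alpha_\phi\Dirac[\tilde\psi_\phi]+(1-\alpha_\phi)\Dirac[(t_\phi/(1-\alpha_\phi))\vone_\Omega]$. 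Iterating over the support of $l$ and collapsing the resulting constants delivers $l\ipru\tilde l=\nu_l+c_l\Dirac[\bar T_l\vone_\Omega]$, with $\nu_l$ a positive Borel measure on $\Psi$, $c_l=1-\nu_l(\Psi)$ and $\bar T_l=\sum_\phi l(\phi)t_\phi/c_l$; similarly $m\ipru\tilde m=\nu_m+c_m\Dirac[\bar T_m\vone_\Omega]$. Since every operation preserves $(\,\cdot\,)^\vee$ (by the mixture linearity of $m\mapsto m^\vee$ together with the identity $\Dirac[\lambda\phi+(1-\lambda)t\vone_\Omega]^\vee=\lambda\Dirac[\phi]^\vee+(1-\lambda)\Dirac[t\vone_\Omega]^\vee$, which follows from the positive homogeneity and translation equivariance of each $\hat v$), the hypothesis transfers to $\tilde l^\vee\ge\tilde m^\vee$.

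Evaluating the latter on arbitrary $v\in\bU$, using the affine structure of $v\mapsto\tilde l^\vee(v)-\tilde m^\vee(v)$ on $\bU$, yields $d:=c_l\bar T_l-c_m\bar T_m\ge\|\nu_l-\nu_m\|_{\mathrm{TV}}$. Let $\tau=\nu_l\wedge\nu_m$ and write $\nu_l=\tau+\sigma^+$, $\nu_m=\tau+\sigma^-$ for the Jordan pieces. In $\tilde l$, replace each $\Dirac[\tilde\psi]$ carrying mass in $\sigma^+$ by $\Dirac[(\min_\omega\tilde\psi(\omega))\vone_\Omega]$; statewise dominance then gives $\tilde l\ge_\fsd\tilde l'$, so $\tilde l\wpru\tilde l'$ by $\ge_\fsd$-monotonicity, and collapsing the new constants yields $\tilde l'\ipru\tilde l''=\tau+C\Dirac[\bar T_l'\vone_\Omega]$ with $C=1-\tau(\Psi)$. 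Symmetrically, replacing Diracs on $\tilde\psi$ carrying mass in $\sigma^-$ inside $\tilde m$ by $\Dirac[(\max_\omega\tilde\psi(\omega))\vone_\Omega]$ gives $\tilde m'\wpru\tilde m$ and $\tilde m'\ipru\tilde m''=\tau+C\Dirac[\bar T_m'\vone_\Omega]$. Using $|\tilde\psi(\omega)|\le 1$ on $\Psi$, the arithmetic $C(\bar T_l'-\bar T_m')\ge d-\|\nu_l-\nu_m\|_{\mathrm{TV}}\ge 0$ gives $\tilde l''\ge_\fsd\tilde m''$, hence $\tilde l''\wpru\tilde m''$. Chaining $l\ipru\tilde l\wpru\tilde l'\ipru\tilde l''\wpru\tilde m''\ipru\tilde m'\wpru\tilde m\ipru m$ and invoking transitivity yields $l\wpru m$.

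The main obstacle is acts $\phi$ with canonical $\alpha_\phi>1$, for which the single-step split used above is unavailable because $1-\alpha_\phi<0$. For such $\phi$ I would run the axiom in reverse to obtain the identity $\Dirac[\tilde\psi_\phi]\ipru(1/\alpha_\phi)\Dirac[\phi]+(1-1/\alpha_\phi)\Dirac[(-t_\phi/(\alpha_\phi-1))\vone_\Omega]$; within an ambient lottery this lets one trade mass on $\Dirac[\phi]$ together with auxiliary mass on the specific constant Dirac indicated for mass on $\Dirac[\tilde\psi_\phi]\in\Psi$, the auxiliary mass being drawn from the pool generated by splitting the acts with $\alpha<1$ and redistributed via the EU reduction on constants. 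The trick is feasible when the total directional mass $\sum_\phi l(\phi)\alpha_\phi$ does not exceed one; the remaining regime is dispatched by a direct statewise-dominance argument exploiting that in that regime $l^\vee\ge m^\vee$ already forces componentwise FSD comparisons between appropriate modifications of $l$ and $m$.
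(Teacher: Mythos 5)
Your plan has the right skeleton — normalize both lotteries to a comparable canonical form via mixture-timing indifferences, then reduce to an FSD comparison — and your final step (matching the two canonical forms via the Jordan decomposition and then statewise-dominating one by the other) is close in spirit to what the paper does. But the canonical form you target has a genuine gap, and you yourself flag it: you try to write each $\Dirac[\phi]$, with $\phi=\alpha_\phi\tilde\psi_\phi+t_\phi\vone_\Omega$, as the literal probability mixture $\alpha_\phi\Dirac[\tilde\psi_\phi]+(1-\alpha_\phi)\Dirac[\,\cdot\,\vone_\Omega]$, which forces $\alpha_\phi\le1$. Since the directional norm $\alpha_\phi$ is unbounded, this fails in general, and the ``reverse run of the axiom / trade mass from a pool'' workaround and the ``direct statewise-dominance argument in the remaining regime'' are sketches, not arguments; it is not clear that the pool of auxiliary constant mass is ever large enough, nor why the large-mass regime forces componentwise FSD.

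The paper sidesteps this entirely by \emph{not} insisting the directional piece live in $\Psi$. It only uses mixture weights that are automatically legitimate probabilities: after reducing $m$ to a finite combination $\sum_{\psi\in\Theta}\lambda(\psi)\Dirac[\alpha(\psi)\psi+\tau(\psi)\vone_\Omega]$, it splits each term with weight $1/N$ (where $N=|\Theta|+1$) onto the \emph{rescaled} act $N\alpha(\psi)\psi$, plus weight $(N-1)/N$ on a rescaled constant, then pools the constants. The outcome is $m\ipru\frac1N[\sum_{\psi\in\Theta}\Dirac[\beta(\psi)\psi]+\Dirac[\gamma\vone_\Omega]]$ with \emph{arbitrary} $\beta(\psi)\ge0$; no constraint like $\alpha_\phi\le1$ ever appears. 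Crucially, one chooses a single $\Theta$ and a single $N$ that serve both $m_1$ and $m_2$, and reads the inequality $\sum_\psi|\beta_1(\psi)-\beta_2(\psi)|\le\gamma_1-\gamma_2$ off $m_1^\vee\ge m_2^\vee$, after which a shift $\tau(\psi)$ is added to the $m_1$-side directional acts to statewise-dominate the $m_2$-side. If you replace your unit-normalized decomposition with this rescaled one — i.e., let the scalar on $\psi$ absorb the blow-up and keep the weights fixed at $1/N$ — your argument goes through and essentially becomes the paper's.
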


\begin{proof}
  Suppose that $\wpru$ is transitive, monotone, and
  indifferent to mixture timing of constants.
  Choose any $(m_1, m_2) \in \lott(\Phi)^2$ with $m_1^\vee \ge m_2^\vee$.
  Let $\tuple{\Theta, (l_1, l_2), ((\beta_i, \gamma_i))_{i \in \{1, 2\}}}$ be
  a triple of a nonempty finite subset of $\Psi$,
  a pair of members of $\lott(\Phi)$, and
  a family of members of $\SR_+^\Theta \times \SR$ such that
  for each $i \in \{1, 2\}$,
  \begin{equation}\label{eq:aux_imt_rep}
    l_i = \frac{1}{\abs{\Theta} + 1}\Bigl(
      \sum_{\psi \in \Theta} \Dirac[\beta_i(\psi)\psi]
      + \Dirac[\gamma_i \vone_\Omega]\Bigr), \qquad
    l_i^\vee = m_i^\vee.
  \end{equation}
  Define $\tau \colon \Theta \to \SR$ by
  $\tau(\psi) = \abs{\beta_1(\psi) - \beta_2(\psi)}$.
  Let
  \begin{equation}
    k_1
    = \frac{1}{\abs{\Theta} + 1}\biggl[
      \sum_{\psi \in \Theta} \Dirac[\beta_1(\psi)\psi + \tau(\psi)\vone_\Omega]
      + \Dirac\Bigl[\Bigl(\gamma_1 - \sum_{\psi \in \Theta} \tau(\psi)\Bigr)
      \vone_\Omega\Bigr]\biggr].
  \end{equation}
  By \cref{lem:cs_vee} (\ref{item:cs_vee_ml}) and (\ref{item:cs_vee_Dirac}),
  $k_1^\vee = l_1^\vee$.
  Since $\beta_1(\psi) \psi + \tau(\psi)\vone_\Omega
  \ge \beta_2(\psi)\psi$ for each $\psi \in \Theta$ and
  since
  \begin{equation}
    \gamma_1 - \sum_{\psi \in \Theta} \tau(\psi)
    = \gamma_1 - \sup_{v \in \bU}
      \sum_{\psi \in \Theta} (\beta_2(\psi) - \beta_1(\psi))v(\psi)
    = (\abs{\Theta} + 1)\inf_{v \in \bU} (l_1^\vee(v) - l_2^\vee(v))
      + \gamma_2
    \ge \gamma_2,
  \end{equation}
  applying the monotonicity of $\wpru$ repeatedly gives $k_1 \wpru l_2$.
  Thus, since $m_1 \ipru l_1$, $l_1 \ipru k_1$, and $l_2 \ipru m_2$
  by \cref{lem:aux_imtc}, the transitivity of $\wpru$ implies $m_1 \wpru m_2$.
\end{proof}

The relation $\wpru$ is
\begin{itemize}
  \item
    \emph{regular}
    if it is nondegenerate, complete, transitive, and mixture continuous, and
    $s > t$ implies $\Dirac[s \vone_\Omega] \spru \Dirac[t \vone_\Omega]$
    for each $(s, t) \in \SR^2$;
  \item
    \emph{ex ante averse to randomization}
    if $l \wpru m$ implies $l \wpru \lambda l + (1 - \lambda)m$
    for each $(\lambda, (l, m)) \in [0, 1] \times \lott(\Phi)^2$;
  \item
    \emph{independent of constants}
    if $\lambda l + (1 - \lambda)\Dirac[s \vone_\Omega]
    \wpru \lambda m + (1 - \lambda)\Dirac[s \vone_\Omega]$ implies
    $\lambda l + (1 - \lambda)\Dirac[t \vone_\Omega]
    \wpru \lambda m + (1 - \lambda)\Dirac[t \vone_\Omega]$
    for each $(\lambda, (l, m), (s, t))
    \in [0, 1] \times \lott(\Phi)^2 \times \SR^2$.
\end{itemize}

\begin{lemma}\label{lem:aux_rep_u}
  If $\wpru$ is
  regular, monotone, indifferent to mixture timing of constants,
  ex ante averse to randomization, and independent of constants,
  then there exists a normalized convex niveloid
  $W$ on $\Xi$ such that $m \mapsto W(m^\vee)$ represents $\wpru$.
\end{lemma}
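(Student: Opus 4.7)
The plan is to construct $W$ via certainty equivalents. By \cref{lem:aux_mon}, the hypotheses give $\Xi$-monotonicity of $\wpru$. For each $m \in \lott(\Phi)$, I first show that a unique $t \in \SR$ satisfies $m \ipru \Dirac[t \vone_\Omega]$: since $m$ is finitely supported, there exist $a \le b$ with $a \vone_\Omega \le \phi \le b \vone_\Omega$ for every $\phi$ in the support of $m$, and $\Xi$-monotonicity gives $\Dirac[b \vone_\Omega] \wpru m \wpru \Dirac[a \vone_\Omega]$; combining indifference to mixture timing of constants (which yields $\Dirac[(\lambda b + (1-\lambda)a)\vone_\Omega] \ipru \lambda \Dirac[b \vone_\Omega] + (1-\lambda)\Dirac[a \vone_\Omega]$) with mixture continuity and completeness produces such a $t \in [a, b]$, and regularity's strict preference for higher constants gives uniqueness. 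Set $W(m^\vee) := t$. Because $\Xi$-monotonicity also implies $l^\vee = m^\vee \Rightarrow l \ipru m$, this definition descends to a real-valued function on $\Xi$ that, by construction, represents $\wpru$ and is normalized and monotone.

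The main obstacle is translation equivariance of $W$, since it is the only step that uses independence of constants. Fix $m = \sum_i \lambda_i \Dirac[\phi_i] \in \lott(\Phi)$ and $t \in \SR$, and let $s := W(m^\vee)$; write $m_t$ for the pushforward of $m$ under $\phi \mapsto \phi + t \vone_\Omega$, so that $m_t^\vee = m^\vee + t \vone_\bU$. Applying indifference to mixture timing of constants to each atom of $m_t$ in turn (with $\phi \leftarrow 2\phi_i$ and $t \vone_\Omega \leftarrow 2t \vone_\Omega$) yields $m_t \ipru \half \sum_i \lambda_i \Dirac[2 \phi_i] + \half \Dirac[2 t \vone_\Omega]$; the same argument with $t = 0$ gives $m \ipru \half \sum_i \lambda_i \Dirac[2 \phi_i] + \half \Dirac[0]$, and a single application gives $\Dirac[s \vone_\Omega] \ipru \half \Dirac[2 s \vone_\Omega] + \half \Dirac[0]$. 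Combining these with $m \ipru \Dirac[s \vone_\Omega]$ yields $\half \sum_i \lambda_i \Dirac[2 \phi_i] + \half \Dirac[0] \ipru \half \Dirac[2 s \vone_\Omega] + \half \Dirac[0]$. Independence of constants replaces each $\Dirac[0]$ by $\Dirac[2 t \vone_\Omega]$, and one further use of indifference to mixture timing of constants collapses the right-hand side to $\Dirac[(s + t) \vone_\Omega]$; chaining, $m_t \ipru \Dirac[(s + t) \vone_\Omega]$, so $W(m^\vee + t \vone_\bU) = s + t = W(m^\vee) + t$.

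Ex ante aversion to randomization directly gives quasi-convexity: $W(l^\vee) \ge W(m^\vee)$ means $l \wpru m$, hence $l \wpru \lambda l + (1 - \lambda)m$ and thus $W(l^\vee) \ge W(\lambda l^\vee + (1-\lambda) m^\vee)$. Since $\Xi$ is a tube by \cref{lem:cs_vee}~(\ref{item:cs_vee_rng}), one can translate each $\xi_i \in \Xi$ by $-W(\xi_i)\vone_\bU$, apply quasi-convexity at the common level $0$, and translate back using translation equivariance to upgrade quasi-convexity to convexity. Finally, monotonicity together with translation equivariance on the tube $\Xi$ make $W$ a niveloid via \cref{lem:nv_mon_ts}~(\ref{item:nv_mon_ts_tube}), completing the proof.
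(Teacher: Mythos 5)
Your construction mirrors the paper's: define $W$ by certainty equivalents, establish normalization and monotonicity from $\Xi$-monotonicity (via \cref{lem:aux_mon}), derive translation equivariance from independence of constants via the $\tfrac12$-mixture trick, appeal to \cref{lem:nv_mon_ts}~(\ref{item:nv_mon_ts_tube}) for the niveloid property, and upgrade quasiconvexity (from ex ante aversion to randomization) to convexity by a translate-to-a-common-level argument. The translation-equivariance and convexity steps are spelled out in more detail than the paper, which states the former tersely and outsources the latter to Lemma~9 of \citet{CMMR2014}; your explicit version of that upgrade (shift $\xi_i$ by $-W(\xi_i)\vone_\bU$, use quasiconvexity at level $0$, shift back) is exactly the cited argument and is correct on the tube $\Xi$.

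One small slip in the first step: you invoke $\Xi$-monotonicity to get $\Dirac[b\vone_\Omega] \wpru m \wpru \Dirac[a\vone_\Omega]$ from pointwise bounds $a\vone_\Omega \le \phi \le b\vone_\Omega$ on $\supp m$. That requires $b\vone_\bU \ge m^\vee$, i.e., $\hat v(\phi) \le b$ for all $v \in \bU$ and all atoms $\phi$ — but $\hat v$ need not be monotone for arbitrary $v$ in the unit ball $\bU$ (for instance, with $\phi$ decomposing as $\alpha\psi + t\vone_\Omega$ where $\max_\omega\psi(\omega) < 1$, one can have $\hat v(\phi) = \alpha v(\psi) + t > \max_\omega \phi(\omega)$), so $m^\vee$ may exceed $b$. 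The conclusion you want follows instead directly from the assumed $\ge_\fsd$-monotonicity with the same $a,b$, or from $\Xi$-monotonicity after replacing $a,b$ by $\inf_{v\in\bU} m^\vee(v)$ and $\sup_{v\in\bU} m^\vee(v)$. With that attribution corrected, the argument is complete.
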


\begin{proof}
  Suppose that $\wpru$ is
  regular, monotone, indifferent to mixture timing of constants,
  ex ante averse to randomization, and independent of constants.
  By \cref{lem:aux_mon}, it is $\Xi$-monotone.
  By the regularity and $\Xi$-monotonicity of $\wpru$,
  there exists a unique real-valued function $W$ on $\Xi$ such that
  $\Dirac[W(m^\vee)\vone_\Omega] \ipru m$ for each $m \in \lott(\Phi)$.
  Again by $\Xi$-monotonicity, $W$ is monotone.
  By regularity, $W$ is normalized, and
  $m \mapsto W(m^\vee)$ represents $\wpru$.
  For each $(\xi, t) \in \Xi \times \SR$,
  since letting $s = W(\xi + t \vone_\Omega) - t$ gives
  $\half(2 s) + \half(2 t) = W(\half(2\xi) + \half(2 t)\vone_\Omega)$,
  it follows from the independent of constants that
  $s = \half(2s) + \half 0 = W(\half(2\xi) + \half 0) = W(\xi)$.
  Hence, $W$ is translation equivariant,
  so it is a niveloid by \cref{lem:nv_mon_ts} (\ref{item:nv_mon_ts_tube}).
  By the ex ante aversion to randomization of $\wpru$,
  the function $W$ is quasiconvex.
  Therefore, applying the same argument as in the proof of Lemma 9
  of \citet{CMMR2014} shows that $W$ is convex.
\end{proof}

For each vNM function $u$ and each $P \in \lott(\cF)$,
let $P_u$ be the pushforward of $P$ under $f \mapsto u \cmpf f$.

\begin{lemma}\label{lem:aux_pf}
  If $u$ is a surjective vNM function,
  then $P \mapsto P_u$ from $\lott(\cF)$ to $\lott(\Phi)$
  is surjective and mixture linear.
\end{lemma}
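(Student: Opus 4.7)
The plan is to verify mixture linearity and surjectivity separately; both reduce to bookkeeping about pushforward measures, with surjectivity of $u$ applied pointwise on $\Omega$.

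First I would prove mixture linearity. Fix $(\lambda, (P, Q)) \in [0, 1] \times \lott(\cF)^2$, and for each subset $F$ of $\Phi$ let $T_F = \set{f \in \cF \mvert u \cmpf f \in F}$. Because convex combinations of finitely supported measures on $\cF$ are defined setwise, $(\lambda P + (1 - \lambda)Q)(T_F) = \lambda P(T_F) + (1 - \lambda)Q(T_F)$; applying the definition of the pushforward under $f \mapsto u \cmpf f$ to each side yields $(\lambda P + (1 - \lambda)Q)_u(F) = \lambda P_u(F) + (1 - \lambda)Q_u(F)$ for every $F$, which gives the claim.

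For surjectivity, fix $m \in \lott(\Phi)$ and let $\{\phi_1, \ldots, \phi_n\}$ be its (necessarily finite) support. Since $u$ is surjective, for each $(i, \omega) \in \{1, \ldots, n\} \times \Omega$ there exists $p^i_\omega \in \lott(X)$ with $u(p^i_\omega) = \phi_i(\omega)$. Defining $f_i \in \cF$ by $f_i(\omega) = p^i_\omega$ gives $u \cmpf f_i = \phi_i$, so $P = \sum_{i = 1}^n m(\phi_i) \Dirac[f_i] \in \lott(\cF)$ satisfies $P_u = \sum_{i = 1}^n m(\phi_i) \Dirac[\phi_i] = m$.

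There is no substantive obstacle here: the lemma reduces to the observations that $f \mapsto u \cmpf f$ is itself surjective from $\cF$ to $\Phi$ (coordinatewise, via surjectivity of $u$) and that pushforward of finitely supported measures respects convex combinations. The only mild care required is handling the finite support of $m$ correctly, which is immediate from the definition of $\lott(\Phi)$.
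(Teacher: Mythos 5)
Your proof is correct and takes essentially the same approach as the paper's: mixture linearity by checking preimages setwise, and surjectivity from the pointwise surjectivity of $u$ (which the paper dispatches in one sentence; your explicit construction of a preimage lottery spells out the same idea).
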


\begin{proof}
  Let $u$ be a surjective vNM function.
  The surjectivity follows from the surjectivity
  of $f \mapsto u \cmpf f$ from $\cF$ to $\Phi$.
  For each $(\lambda, (P, Q), \Theta)
  \in [0, 1] \times \lott(\cF)^2 \times 2^\Phi$,
  letting $F = \set{f \in \cF \mvert u \cmpf f \in \Theta}$ gives
  $[\lambda P + (1 - \lambda)Q]_u(\Theta)
  = [\lambda P + (1 - \lambda)Q](F)
  = \lambda P(F) + (1 - \lambda)Q(F)
  = \lambda P_u(\Theta) + (1 - \lambda)Q_u(\Theta)$.
  Thus, $P \mapsto P_u$ is mixture linear.
\end{proof}

For each binary relation $\rel'$ on $\lott(\cF)$,
a binary relation is \emph{more indecisive than $\rel'$}
if its graph is included in the graph of $\rel'$.

\begin{lemma}\label{lem:mmeu_cdec}
  Let $(u, \bM)$ and $(u', \bM')$ be
  multi-{\meu} representations of $\rel$ and $\rel'$, respectively.
  Then, $\rel$ is more indecisive than $\rel'$ if and only if
  $u \pat u'$ and $\bM \supseteq \bM'$.
\end{lemma}

\begin{proof}
  The necessity of the more indecisiveness of $\rel$ follows by definition.
  For the sufficiency, suppose that $\rel$ is more indecisive than $\rel'$.
  Since the restrictions of $\rel$ and $\rel'$ to $\Dirac[\lott(X)]$
  are complete,
  the uniqueness of mixture linear representations implies $u \pat u'$.
  Seeking a contradiction, suppose $\bM \not\supseteq \bM'$.
  Let $M' \in \bM' \setminus \bM$.
  By \cref{lem:cs_bU} and \cref{lem:cs_supp} (\ref{item:cs_supp}),
  the set $\set{H_M|_\Psi \mvert M \in \bM}$
  is a compact convex subset of $\bU$ that does not contain $H_{M'}|_\Psi$.
  Thus, by the separation theorem \citep[Corollary 5.80]{AliprantisBorder2006},
  there exists $(\Lambda, t) \in \ca(\Psi) \times \SR$ such that
  $\inf_{M \in \bM} \int H_M|_\Psi \diff \Lambda
  > t > \int H_{M'}|_\Psi \diff \Lambda$.
  Let $a = \max \{\Lambda^+(\Psi), \Lambda^-(\Psi)\}$, which is positive.
  Define the Borel probability measures $l$ and $m$ on $\Phi$ by
  \begin{align}
    l(\Theta)
    &= \frac{\Lambda^+(\Theta \cap \Psi)}{2a}
      + \Bigl(1 - \frac{\Lambda^+(\Psi)}{2a}\Bigr)\vone_\Theta(0), \\
    m(\Theta)
    &= \frac{\Lambda^-(\Theta \cap \Psi)}{2a}
      + \Bigl(1 - \frac{\Lambda^-(\Psi)}{2a}\Bigr)
      \vone_\Theta\Bigl(\frac{t}{2a - \Lambda^-(\Psi)}\vone_\Omega\Bigr),
  \end{align}
  where $\vone_\Theta$ is the indicator function of $\Theta$ on $\Phi$
  for each Borel subset $\Theta$ of $\Phi$.
  Then, for each $\tilde M \in \bM \cup \{M'\}$,
  by the normalizedness of $H_{\tilde M}$,
  \begin{gather}
    \int H_{\tilde M} \diff l
    = \int H_{\tilde M} \vone_\Psi \diff l
      + \int H_{\tilde M} \vone_{\Psi^\setc} \diff l
    = \frac{1}{2a}\int H_{\tilde M} \diff \Lambda^+, \\
    \begin{aligned}
      \int H_{\tilde M} \diff m
      &= \int H_{\tilde M} \vone_\Psi \diff m
        + \int H_{\tilde M} \vone_{\Psi^\setc} \diff m \\
      &= \frac{1}{2a}\int H_{\tilde M} \diff \Lambda^-
        + \Bigl(1 - \frac{\Lambda^+(\Psi)}{2a}\Bigr)
        m\Bigl(\Bigl\{\frac{t}{2a - \Lambda^-(\Psi)}\vone_\Omega\Bigr\}\Bigr)
      = \frac{1}{2a}\Bigl(\int H_{\tilde M} \diff \Lambda^- + t\Bigr).
    \end{aligned}
  \end{gather}
  Thus, $\int H_M \diff l > \int H_M \diff m$ for each $M \in \bM$, and
  $\int H_{M'} \diff l < \int H_{M'} \diff m$.
  Since $\lott(\Phi)$ is dense
  in the space of Borel probability measures on $\Phi$
  \citep[Theorem 15.10]{AliprantisBorder2006},
  it is without loss of generality to assume that
  $l$ and $m$ are finitely supported.
  By \cref{lem:aux_pf},
  there exists $(P, Q) \in \lott(\cF)^2$ such that $(P_u, Q_u) = (l, m)$.
  Then, $P \rel Q$ and $Q \not\rel' P$,
  which contradicts the more indecisiveness of $\rel$ than $\rel'$.
\end{proof}

\begin{lemma}\label{lem:mmeu_cben}
  Let $\bM$ and $\bM'$ be compact convex subsets of $\bK$.
  Then, $\bM \usd \bM'$ if and only if
  $\max_{M \in \bM} \int H_M \diff m
  \ge \max_{M' \in \bM'} \int H_{M'} \diff m$ for each $m \in \lott(\Phi)$.
\end{lemma}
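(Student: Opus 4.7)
The plan is to verify each direction separately. For the forward direction, if $\bM \usd \bM'$, then given any $m \in \lott(\Phi)$ and any $M' \in \bM'$, I pick $M \in \bM$ with $M \subseteq M'$; by \cref{lem:cs_supp} (\ref{item:cs_supp_mon}), $H_M \ge H_{M'}$ pointwise, so $\int H_M \diff m \ge \int H_{M'} \diff m$, and taking the max over $M' \in \bM'$ delivers the stated inequality.

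For the converse I argue the contrapositive. Assume $\bM \not\usd \bM'$ and fix $M^* \in \bM'$ with $M \not\subseteq M^*$ for every $M \in \bM$; the goal is to produce $m \in \lott(\Phi)$ such that $\max_{M \in \bM} \int H_M \diff m < \int H_{M^*} \diff m$. Since $\Omega$ is finite, $\bB$ is compact. Set $\mathcal{H} = \set{H_M|_\bB \mvert M \in \bM} \subseteq \Cb(\bB)$; by the mixture-linear isometry in \cref{lem:cs_supp} (\ref{item:cs_supp}) applied to the compact convex $\bM$, this set is compact and convex. Because each $H_M$ is positively homogeneous and $\bB$ meets every ray through the origin, $H_M|_\bB \ge H_{M^*}|_\bB$ extends to $H_M \ge H_{M^*}$ on $\Phi$, which by \cref{lem:cs_supp} (\ref{item:cs_supp_mon}) would force $M \subseteq M^*$ -- ruled out by assumption. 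Hence $H_{M^*}|_\bB$ lies outside the convex closed set $\mathcal{H} - \Cb(\bB)_+$ (closedness holds because $\mathcal{H}$ is compact and the cone is closed).

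I then invoke the separation theorem (Corollary 5.80 of \citet{AliprantisBorder2006}) to obtain $\Lambda \in \Cb(\bB)'$ and $t \in \SR$ with $\int H_{M^*} \diff \Lambda > t \ge \int \xi \diff \Lambda$ for every $\xi \in \mathcal{H} - \Cb(\bB)_+$. Taking $\xi = H_{M_0}|_\bB - n v$ for fixed $M_0 \in \bM$, arbitrary $v \in \Cb(\bB)_+$, and letting $n \to \infty$ forces $\int v \diff \Lambda \ge 0$, so $\Lambda$ is a positive measure; $\Lambda$ is nonzero because otherwise $0 > t \ge 0$, so rescaling yields a probability measure $\tilde \Lambda$ still satisfying $\int H_{M^*} \diff \tilde \Lambda > \int H_M \diff \tilde \Lambda$ for every $M \in \bM$. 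Compactness of $\bM$ together with continuity of $M \mapsto \int H_M \diff \tilde \Lambda$ produces a strict gap $\delta > 0$ between $\int H_{M^*} \diff \tilde \Lambda$ and $\max_{M \in \bM} \int H_M \diff \tilde \Lambda$. Finally, using density of finitely supported probability measures (Theorem 15.10 of \citet{AliprantisBorder2006}) together with compactness of $\set{H_M|_\bB \mvert M \in \bM \cup \{M^*\}}$ in $\Cb(\bB)$, I approximate $\tilde \Lambda$ in the weak$^*$ topology by some $m \in \lott(\bB) \subseteq \lott(\Phi)$ uniformly enough in $M$ to preserve $\max_{M \in \bM} \int H_M \diff m < \int H_{M^*} \diff m \le \max_{M' \in \bM'} \int H_{M'} \diff m$, as required.

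The main obstacle is promoting the bare Hahn--Banach functional (a priori only a signed measure) to a positive one via the ``cone-blowup'' test functions $H_{M_0}|_\bB - n v$; the remaining pieces -- closedness of $\mathcal{H} - \Cb(\bB)_+$, the positive-homogeneity extension from $\bB$ to $\Phi$, and weak$^*$ approximation made uniform in $M$ via compactness of the image family -- are routine.
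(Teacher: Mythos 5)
Your proof is correct and follows essentially the same route as the paper's: both directions rely on \cref{lem:cs_supp} (\ref{item:cs_supp_mon}) for sufficiency, and for necessity both separate (via Corollary/Theorem 5.79--5.80 of \citet{AliprantisBorder2006}) a compact convex set of restricted support functions from a translated positive cone, extract a positive nonzero measure, normalize it to a probability measure, and invoke density of $\lott(\Phi)$ plus a weak$^*$ continuity argument to replace it with a finitely supported one. The only cosmetic differences are that the paper restricts support functions to $\Psi$ and separates the difference set $\set{H_M|_\Psi - H_{M'}|_\Psi \mvert M \in \bM}$ from $\Cb(\Psi)_+$ (positivity of $\Lambda$ then falls out of the cone structure of $\Cb(\Psi)_+$ directly), whereas you restrict to the unit ball $\bB$, separate the single point $H_{M^*}|_\bB$ from the Minkowski difference $\mathcal{H} - \Cb(\bB)_+$, and recover positivity by the cone-blowup test; these are equivalent formulations of the same separation.
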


\begin{proof}
  If $\bM \usd \bM'$,
  then for each $m \in \lott(\Phi)$,
  letting $\bar M' \in \argmax_{M' \in \bM'} \int H_M \diff m$ gives
  $\bar M \subseteq \bar M'$ for some $\bar M \in \bM$, so
  $\max_{M' \in \bM'} \int H_{M'} \diff m
  = \int H_{\bar M'} \diff m
  \le \int H_{\bar M} \diff m
  \le \max_{M \in \bM} \int H_M \diff m$
  by \cref{lem:cs_supp} (\ref{item:cs_supp_mon}).
  For the converse, we show the contrapositive.
  Suppose that there exists $M' \in \bM'$ such that
  $M \not\subseteq M'$ for each $M \in \bM$.
  By \cref{lem:cs_bU,lem:cs_supp},
  the set $\set{H_M|_\Psi - H_{M'}|_\Psi \mvert M \in \bM}$
  is a compact convex subset of $\bU$ that is disjoint from $\Cb(\Psi)_+$.
  Thus, by the separation theorem \citep[Theorem 5.79]{AliprantisBorder2006},
  there exists $(\Lambda, t) \in \ca(\Psi) \times \SR$ such that
  $\inf_{h \in \Cb(\Psi)_+} \int h \diff \Lambda
  > t > \max_{M \in \bM} \int H_M|_\Psi \diff \Lambda
  - \int H_{M'}|_\Psi \diff \Lambda$,
  which implies $(\Lambda, t) \in \ca(\Psi)_+ \times (-\SR_{++})$.
  Since $\Lambda \ne 0$, we have $\Lambda(\Psi) > 0$.
  Define the Borel probability measure $m$ on $\Phi$ by
  $m(\Theta) = \Lambda(\Theta \cap \Psi)/\Lambda(\Psi)$.
  Then, $\int H_{M'} \diff m > \max_{M \in \bM} \int H_M \diff m$.
  Since $\lott(\Phi)$ is dense
  in the space of Borel probability measures on $\Phi$
  \citep[Theorem 15.10]{AliprantisBorder2006},
  it is without loss of generality to assume that $m$ is finitely supported.
\end{proof}

\subsection{Proof of \cref{thm:rep}}

For the necessity of the axioms, suppose that $\wpr$ has
a costly ambiguity perception representation $\tuple{u, (\bM, c)}$.
The necessity of \axmref{axm:reg} is routine.
By \cref{lem:aux_vnm_surj}, \axmref{axm:ubd} holds.
Define $W \colon \Xi \to \SR$ by
$W(P_u^\vee) = \max_{M \in \bM} (\int H_M \diff P_u - c(M))$,
which is a convex niveloid by \cref{lem:cs_nv} (\ref{item:cs_nv}).
Since $P \mapsto P_u^\vee$ is mixture linear
by \cref{lem:cs_vee} (\ref{item:cs_vee_ml}) and \cref{lem:aux_pf},
the function $P \mapsto W(P_u^\vee)$ is convex.
Thus, \axmref{axm:eaar} is satisfied.
Since $W([\lambda P + (1 - \lambda)\Dirac[p]]_u^\vee)
= W(\lambda P_u^\vee + (1 - \lambda)u(p)\vone_{\bU})
= W(\lambda P_u^\vee) + (1 - \lambda)u(p)$
for each $(\lambda, P, p) \in [0, 1] \times \lott(\cF) \times \lott(X)$,
\axmref{axm:ica} holds.
The remained axioms,
\axmref{axm:mon}, \axmref{axm:aepr}, and \axmref{axm:irtc},
follow from the monotonicity, concavity, and translation equivariance
of support functions, respectively.

For the sufficiency, assume all the axioms.
By \cref{lem:aux_vnm_surj},
the restriction of $\wpr$ to $\Dirac[\lott(X)]$ is represented
by a surjective vNM function $u$.
Define the relation $\wpru$ on $\lott(\Phi)$ by $P_u \wpru Q_u$
if $P \wpr Q$.
By \cref{lem:aux_pf}, $\wpru$ is
regular, monotone, indifferent to mixture timing of constants,
ex ante averse to randomization, and independent of constants.
Thus, by \cref{lem:aux_rep_u},
there exists a normalized convex niveloid $W$ on $\Xi$
such that $P \mapsto W(P_u^\vee)$ represents $\wpr$.
By \axmref{axm:mon} and \axmref{axm:aepr},
$\rel^*_W$ is $\Phi$-monotone and attracted to ex post randomization.
Hence, by \cref{lem:cs_nv_rep}, there exists a cost structure $(\bM, c)$
such that for each $P \in \lott(\cF)$,
\begin{equation}
  W(P_u^\vee)
  = \max_{M \in \bM} \Bigl(\int H_M \diff P_u - c(M)\Bigr)
  = \max_{M \in \bM} \Bigl[
    \int\Bigl(\min_{\mu \in M} \int u \cmpf f \diff \mu\Bigr) \diff P(f)
    - c(M)\Bigr],
\end{equation}
which shows that $\tuple{u, (\bM, c)}$
is a costly ambiguity perception representation of $\wpr$.
\qed

\subsection{Proof of \cref{prop:mmeu}}

(\ref{item:mmeu_rep})\enspace
Let $\tuple{u, (\bM, c)}$ be
a costly ambiguity perception representation of $\wpr$.
Define $W \colon \Xi \to \SR$ by
$W(m^\vee) = \max_{M \in \bM} (\int H_M \diff m - c(M))$.
Then, for each $(P, Q) \in \lott(\cF)^2$, it follows
from \cref{lem:cs_vee} (\ref{item:cs_vee_ml}) and \cref{lem:aux_pf} that
$P_u^\vee \rel^*_W Q_u^\vee$ if and only if $P \wpr^* Q$.
By \cref{lem:cs_nv} (\ref{item:cs_nv}), $W$ is a normalized convex niveloid.
By the monotonicity and concavity of support functions,
$\rel^*_W$ is $\Phi$-monotone and attracted to ex post randomization.
Thus, by \cref{lem:cs_nv_rep}, $\wpr^*$ has a multi-{\meu} representation.

(\ref{item:mmeu_uniq})\enspace
Let $(u, \bM)$ and $(v, \bL)$ be multi-{\meu} representations
of a binary relation $\rel$ on $\lott(\cF)$.
Since $\rel$ is more indecisive than itself,
it follows from \cref{lem:mmeu_cdec} that $u \pat v$ and $\bM = \bL$.
\qed

\subsection{Proof of \cref{thm:id}}

Suppose that $u \pat v$, $\bM^* \subseteq \bM \subseteq (\bM^*)^\uparrow$, and
$c = c^\star_{\wpr, u}|_\bM$.
Let $\tuple{\bar u, (\bar \bM, \bar c)}$ be
a costly ambiguity perception representation of $\wpr$.
Since the restrictions of $\wpr$ and $\wpr^*$ to $\Dirac[\lott(X)]$ coincide,
we have $\bar u \pat u \pat v$.
Thus, $\tuple{u, (\bar \bM, \bar c)}$
is also a costly ambiguity perception representation.
Define $W \colon \Xi \to \SR$ by
$W(P_u^\vee) = \max_{M \in \bar \bM} (\int H_M \diff P_u - \bar c(M))$,
which is a convex niveloid by \cref{lem:cs_nv} (\ref{item:cs_nv}).
Since $P \mapsto W(P_u^\vee)$ represents $\wpr$,
we have $W(P_u^\vee) = u(\bar P)$ for each $P \in \lott(\cF)$.
Hence, $c^\star_{\wpr, u}(M)
= \sup_{m \in \lott(\Phi)} (\int H_M \diff m - W(m^\vee))$
for each $M \in \bK$.
By \cref{lem:cs_nv_rep}, there exists a compact convex subset $\bar \bM^*$
of $\dom c^\star_{\wpr, u}$ such that
\begin{enumerate}
  \item
    for each $(P, Q) \in \lott(\Phi)^2$,
    it follows that $P_u^\vee \rel^*_W Q_u^\vee$ if and only if
    $\int H_M \diff P_u \ge \int H_M \diff Q_u$ for each $M \in \bar \bM^*$;
  \item
    for each compact convex subset $\bL$ of $\bK$
    with $\bar \bM^* \subseteq \bL \subseteq (\bar \bM^*)^\uparrow$,
    the pair $(\bL, c^\star_{\wpr, u}|_\bL)$ is a convex cost structure and
    $W(m^\vee) = \max_{L \in \bL} (\int H_L \diff m - c^\star_{\wpr, u}(L))$
    for each $m \in \lott(\Phi)$.
\end{enumerate}
Since for each $(P, Q) \in \lott(\cF)^2$,
it follows from \cref{lem:cs_vee} (\ref{item:cs_vee_ml}) and \cref{lem:aux_pf}
that $P_u^\vee \rel^*_W Q_u^\vee$ if and only if $P \wpr^* Q$,
we have $\bar \bM^* = \bM^*$ by \cref{prop:mmeu} (\ref{item:mmeu_uniq}).
Therefore, $W(P_u^\vee) = \max_{M \in \bM} (\int H_M \diff P_u - c(M))$,
which shows that $\tuple{u, (\bM, c)}$
is a costly ambiguity perception representation of $\wpr$.

For the converse, suppose that $\tuple{u, (\bM, c)}$ is
a convex costly ambiguity perception representation of $\wpr$.
Define $W \colon \Xi \to \SR$ by
$W(m^\vee) = \max_{M \in \bM} (\int H_M \diff m - c(M))$.
Then, $W(P_u^\vee) = u(\bar P)$ for each $P \in \lott(\cF)$.
By \cref{lem:cs_nv} (\ref{item:cs_nv_cn}), $c^\diamond = W^*|_{\bM^\diamond}$.
Thus, $c(M) = \sup_{\xi \in \Xi} (\dual{\xi, \pi} - W(\xi))
= \sup_{m \in \lott(\Phi)} (\int H_M \diff m - W(m^\vee))
= c^\star_{\wpr, u}(M)$
for each $(M, \pi) \in \bM \times \bM^\diamond$ with $H_M = \pi^\wedge$.

Since the relation on $\lott(\cF)$
that has a multi-{\meu} representation $(u, \bM)$
is a subrelation of $\wpr$ that satisfies \axmref{axm:ind},
it is more indecisive than $\wpr^*$.
Hence, $u \pat v$ and $\bM \supseteq \bM^*$ by \cref{lem:mmeu_cdec}.
Since $c$ is real-valued by definition,
we have $\bM \subseteq \dom c^\star_{\wpr, u}$.
Therefore, it remains to show $(\bM^*)^\uparrow = \dom c^\star_{\wpr, u}$.
Define $\hat c \colon \bK \to [0, \infty]$ by
$\hat c(M) = c^\star_{\wpr, u}(M)$ if $M \in (\bM^*)^\uparrow$ and
$\hat c(M) = \infty$ otherwise, and
define $\hat W \colon \Xi \to \SR$ by
$\hat W(m^\vee) = \max_{M \in \bK} (\int H_M \diff m - \hat c(M))$.
Since $\tuple{u, ((\bM^*)^\uparrow, c^\star_{\wpr, u}|_{(\bM^*)^\uparrow})}$
is a costly ambiguity perception representation of $\wpr$
by the previous paragraph,
we have $\hat W = W$.
By \cref{lem:cs_supp,lem:cs_wedge} and
the $\supseteq$-increasingness of $(\bM^*)^\uparrow$,
the pair $(\bK^\diamond, \hat c^\diamond)$
is a canonical variational representation of $W$.
Consequently, by \cref{prop:nv_var_uniq},
$\hat c(M) = \hat c^\diamond(\pi) = \hat W^*|_{\bK^\diamond}(\pi)
 = W^*|_{\bK^\diamond}(\pi) = c^\star_{\wpr, u}(M)$
for each $(M, \pi) \in \bK \times \bK^\diamond$ with $H_M = \pi^\wedge$.
Hence, $(\bM^*)^\uparrow = \dom \hat c = \dom c^\star_{\wpr, u}$.
\qed

\subsection{Proof of \cref{prop:cmp_amb}}

The relation $\wpr_1$ is more ambiguity-averse than $\wpr_2$ if and only if
$u_1 \pat u_2$ and $u_1(\bar P_1) \le u_1(\bar P_2)$.
The inequality is equivalent to
$c^\star_{\wpr_1, u_1} \ge c^\star_{\wpr_2, u_1}$.
\qed

\subsection{Proof of \cref{prop:cmp_inc}}

If $\wpr_1$ is more averse to additional ambiguity than $\wpr_2$,
then $\wpr_1$ is more ambiguity-averse than $\wpr_2$, so
$u_1 \pat u_2$ by \cref{prop:cmp_amb}.
Thus, assume without loss of generality $u_1 = u_2$.
Let $u = u_1$.
For each $i \in \{1, 2\}$, define $W_i \colon \Xi \to \SR$ by
$W_i(m^\vee) = \max_{M \in \bM_i} (\int H_M \diff m - c_i(M))$, and
define $U_i \colon \lott(\cF) \to \SR$ by $U_i(P) = W_i(P_u^\vee)$.

Suppose that $\cC_2(P) \usd \cC_1(P)$ for each $P \in \lott(\cF)$.
By \cref{lem:mmeu_cben}, for each $(P, Q) \in \lott(\cF)^2$,
\begin{equation}\label{eq:mmeu_cben}
  \max_{M \in \cC_2(P)} \int
    \Bigl(\min_{\mu \in M} \int u \cmpf f \diff \mu\Bigr) \diff Q(f)
  \ge \max_{M \in \cC_1(P)} \int
    \Bigl(\min_{\mu \in M} \int u \cmpf f \diff \mu\Bigr) \diff Q(f).
\end{equation}
Choose any $(\lambda, (P, Q), p)
\in [0, 1] \times \lott(\cF)^2 \times \lott(X)$.
Define $R \colon [0, 1] \to \lott(\cF)$ by
$R(t) = \lambda [t P + (1 - t)\Dirac[p]] + (1 - \lambda)Q$.
Let $\tilde P \in \lott(\cF)$ be such that
$\int H_M \diff \tilde P_u = \int H_M \diff P_u - u(p)$ for each $M \in \bM$.
For each $t \in [0, 1]$,
let $(\bar M_1(t), \bar M_2(t)) \in \cC_1(R(t)) \times \cC_2(R(t))$
be such that $\int H_{\bar M_2(t)} \diff \tilde P_u
\ge \int H_{\bar M_1(t)} \diff \tilde P_u$.
Then, since for each $i \in \{1, 2\}$,
the envelope theorem \citep[Theorem 2]{MilgromSegal2002} implies
\begin{equation}
  U_i(\lambda P + (1 - \lambda)Q) - U_i(\lambda \Dirac[p] + (1 - \lambda)Q)
  = U_i(R(1)) - U_i(R(0))
  = \lambda \int_0^1 \Bigl(\int H_{\bar M_i(t)} \diff \tilde P_u\Bigr) \diff t,
\end{equation}
we have
$U_2(\lambda P + (1 - \lambda)Q) - U_2(\lambda \Dirac[p] + (1 - \lambda)Q)
\ge U_1(\lambda P + (1 - \lambda)Q) - U_1(\lambda \Dirac[p] + (1 - \lambda)Q)$.
Thus, $\wpr_1$ is more averse to additional ambiguity than $\wpr_2$.

For the converse,
suppose that $\wpr_1$ is more averse to additional ambiguity than $\wpr_2$.
Choose any $(P, m) \in \lott(\cF) \times \lott(\Phi)$.
Since for each $i \in \{1, 2\}$,
by \cref{lem:cs_nv} (\ref{item:cs_nv}) and (\ref{item:cs_nv_sd}) and
\cref{lem:nv_sd} (\ref{item:nv_sd_dd}),
\begin{equation}
  \max_{M \in \cC_i(P)} \int H_M \diff m
  = \max_{\pi \in \partial W_i(P_u^\vee) \cap \bM_i^\diamond}
    \dual{m^\vee, \pi}
  = \Diff^+_{m^\vee} W_i(P_u^\vee),
\end{equation}
it suffices to show $\Diff^+_{m^\vee} W_2(P_u^\vee)
\ge \Diff^+_{m^\vee} W_1(P_u^\vee)$ by \cref{lem:mmeu_cben}.
Let $\tilde P \in \lott(\cF)$ be such that $\tilde P_u^\vee = 2 P_u^\vee$.
For each $\lambda \in (0, 1]$,
let $\tilde Q(\lambda) \in \lott(\cF)$ be such that
$\tilde Q(\lambda)_u^\vee = 2 \lambda m^\vee$, and
let $p(\lambda) \in \lott(X)$ be such that
$u(p(\lambda)) = 2(W_1(P_u^\vee + \lambda m^\vee) - W_1(P_u^\vee))$;
then, since
\begin{equation}
  U_1\Bigl(\half \tilde P + \half \tilde Q(\lambda)\Bigr)
    - U_1\Bigl(\half \tilde P + \half \Dirac[p(\lambda)]\Bigr)
  = W_1(P_u^\vee + \lambda m^\vee) - W_1(P_u^\vee) - \half u(p(\lambda))
  = 0,
\end{equation}
we have
\begin{equation}
  W_2(P_u^\vee + \lambda m^\vee) - W_2(P_u^\vee) - \half u(p(\lambda))
  = U_2\Bigl(\half \tilde P + \half \tilde Q(\lambda)\Bigr)
    - U_2\Bigl(\half \tilde P + \half \Dirac[p(\lambda)]\Bigr)
  \ge 0.
\end{equation}
Thus, $\Diff^+_{m^\vee} W_2(P_u^\vee) \ge \Diff^+_{m^\vee} W_1(P_u^\vee)$.
\qed

\subsection{Proof of \cref{prop:cmp_eaar}}

Let $i \in \{1, 2\}$.
Define $W_i \colon \Xi \to \SR$ by $W_i(m^\vee)
= \max_{M \in \bM_i} (\int H_M \diff m - c_i(M))$.
For each $(P, Q) \in \lott(\cF)^2$, it follows
from \cref{lem:cs_nv} (\ref{item:cs_nv}) and (\ref{item:cs_nv_sd}) that
$\cC_i(P) \cap \cC_i(Q) \ne \emptyset$ if and only if
$\partial W_i(P_{u_i}^\vee) \cap \partial W_i(Q_{u_i}^\vee) \cap \bM_i^\diamond
\ne \emptyset$, which is equivalent to
$W_i([\lambda P + (1 - \lambda)Q]_{u_i}^\vee)
= W_i(\lambda P_{u_i}^\vee + (1 - \lambda)Q_{u_i}^\vee)
= \lambda W_i(P_{u_i}^\vee) + (1 - \lambda)W_i(Q_{u_i}^\vee)
= \lambda u_i(\bar P_i) + (1 - \lambda)u_i(\bar Q_i)$
for each $\lambda \in [0, 1]$ by \cref{lem:nv_sd} (\ref{item:nv_sd_lin}).
Thus, by \cref{lem:aux_eaar}, the desired equivalence is obtained.
\qed

\subsection{Proof of \cref{cor:rep_meu}}

The necessity of \axmref{axm:ind} is routine.
For the sufficiency, assume \axmref{axm:ind}.
Since $\wpr$ coincides with $\wpr^*$,
it has a multi-{\meu} representation $(u, \bM)$
by \cref{prop:mmeu} (\ref{item:mmeu_rep}).
If $\bM$ were not a singleton,
then $\wpr$ would be incomplete by the uniqueness of the {\meu} representation
\citep[Theorem 1 (b)]{GilboaSchmeidler1989}.
Thus, there exists $M \in \bK$ such that $\{M\} = \bM$.
Then, $(u, M)$ is an {\meu} representation of $\wpr$.
\qed

\subsection{Proof of \cref{prop:chara_int}}

We have $\mu \in \core(\wpr)$ if and only if
$\int (\int \phi \diff \mu) \diff P_u(\phi)
\ge \int H_M \diff P_u - c(M)$ for each $(M, P) \in \bM \times \lott(\cF)$,
which is equivalent to $\int H_{\{\mu\}} \diff P_u \ge \int H_M \diff P_u$
for each $(M, P) \in \bM \times \lott(\cF)$;
that is, $\mu \in \bigcap \bM$.
\qed

\subsection{Proof of \cref{prop:cmp_meu}}

Define $U \colon \lott(\cF) \to \SR$ by
$U(P) = \max_{M \in \bM} (\int H_M \diff P_u - c(M))$.

(\ref{item:cmp_meu_meet}) $\implies$ (\ref{item:cmp_meu_maaa}).
Assume (\ref{item:cmp_meu_meet}).
Denote by $\sbp$ the asymmetric part of $\wbp$.
Choose any $(\lambda, (P, Q), p)
\in [0, 1] \times \lott(\cF)^2 \times \lott(X)$ with
$\lambda \Dirac[p] + (1 - \lambda)Q \sbp \lambda P + (1 - \lambda)Q$.
By \cref{cor:rep_meu}, $\Dirac[p] \sbp P$, so
$v(p) > \int H_L \diff P_v
\ge \max_{M \in \bM} \int H_M \diff P_v$ by (\ref{item:cmp_meu_meet}).
Thus, $u(p) > \max_{M \in \bM} \int H_M \diff P_u$, which implies
\begin{align}
  U(\lambda \Dirac[p] + (1 - \lambda)Q)
  &= \lambda u(p)
    + \max_{M \in \bM} \Bigl[(1 - \lambda)\int H_M \diff Q_u - c(M)\Bigr] \\
  &> \max_{M \in \bM} \lambda \int H_M \diff P_u
    + \max_{M \in \bM} \Bigl[
    (1 - \lambda)\int H_M \diff Q_u - c(M)\Bigr] \\
  &\ge \max_{M \in \bM} \Bigl[
    \lambda \int H_M \diff P_u + (1 - \lambda)\int H_M \diff Q_u - c(M)
    \Bigr] \\
  &= U(\lambda P + (1 - \lambda)Q).
\end{align}

(\ref{item:cmp_meu_maaa}) $\implies$ (\ref{item:cmp_meu_aa}).
By definition.

(\ref{item:cmp_meu_aa}) $\implies$ (\ref{item:cmp_meu_meet}).
Assume (\ref{item:cmp_meu_aa}).
By \cref{prop:cmp_amb}, $u \pat v$.
By \cref{prop:chara_int},
it suffices to show $L \subseteq \core(\wpr)$.
Choose any $(\ell, P) \in L \times \lott(\cF)$.
Let $p \in \lott(X)$ be such that $\Dirac[p] \ipr P$.
Then, by (\ref{item:cmp_meu_aa}), $P \wbp \Dirac[p]$,
which is equivalent to $\int H_L \diff P_v \ge v(p)$.
Since $v(P^\ell) = \int H_{\{\ell\}} \diff P_v \ge \int H_L \diff P_v$ and
since $u \pat v$ and $\Dirac[p] \ipr P$,
we have $u(P^\ell) \ge u(p) = U(P)$.
Hence, $\Dirac[P^\ell] \wpr P$.
\qed

\subsection{Proof of \cref{cor:rep_spec}}

In each statement, the necessity of the axiom is routine.
We show only the sufficiency.
Let $\tuple{u, (\bM, c)}$ be
a costly ambiguity perception representation of $\wpr$.
By \cref{prop:mmeu} and \cref{thm:id},
we may assume that $(u, \bM)$ is a multi-{\meu} representation of $\wpr^*$.
Define $U \colon \lott(\cF) \to \SR$ by
$U(P) = \max_{M \in \bM} (\int H_M \diff P_u - c(M))$.
Let $\ipr^*$ be the symmetric part of $\wpr^*$.
Observe that for each $(\lambda, (f, g)) \in [0, 1] \times \cF^2$,
if $\Dirac[\lambda f + (1 - \lambda)g]
\ipr^* \lambda \Dirac[f] + (1 - \lambda)\Dirac[g]$,
then $\lambda H_M(u \cmpf f) + (1 - \lambda)H_M(u \cmpf g)
= H_M(\lambda(u \cmpf f) + (1 - \lambda)(u \cmpf g))$ for each $M \in \bM$.

(\ref{item:rep_oap})\enspace
Assume \axmref{axm:irtcm}.
By the above observation, for each $(\phi, \psi) \in \Phi^2$
if $(\phi(\omega) - \phi(\omega'))(\psi(\omega) - \psi(\omega')) \ge 0$
for each $(\omega, \omega') \in \Omega^2$,
then $H_M(\phi) + H_M(\psi) = H_M(\phi + \psi)$ for each $M \in \bM$.
Thus, by the Theorem and Proposition 3 of \citet{Schmeidler1986},
every member of $\bM$ is the core of a convex capacity.

(\ref{item:rep_mh})\enspace
Assume \axmref{axm:irt}.
For each $M \in \bM$,
the function $H_M$ is mixture linear by the above observation,
so it is linear.
Thus, by Theorem 5.54 of \citet{AliprantisBorder2006},
every member of $\bM$ is a singleton.

(\ref{item:rep_dseu})\enspace
Assume \axmref{axm:sica}.
Define $\cC \colon \lott(\cF) \coto \bM$ by
$\cC(P) = \argmax_{M \in \bM} (\int H_M \diff P_u - c(M))$.
We show that $\cC(P) \cap c^{-1}(\{0\}) \ne \emptyset$
for each $P \in \lott(\cF)$,
in which case $(u, \bM \cap c^{-1}(\{0\}))$ is
a dual-self expected utility representation of $\wpr$.
Choose any $P \in \lott(\cF)$.
Let $p \in \lott(X)$ be such that $\Dirac[p] \ipr P$,
let $Q = \half P + \half \Dirac[p]$, and
let $M \in \cC(Q)$.
Then, $c(M) \ge \int H_M \diff P_u - U(P)$.
Since $P \ipr Q$ by \axmref{axm:sica}, we have
\begin{equation}
  U(P)
  = U(Q)
  = \half \int H_M \diff P_u + \half u(p) - c(M)
  = \half \int H_M \diff P_u + \half U(P) - c(M),
\end{equation}
so $c(M) = (\int H_M \diff P_u - U(P))/2$.
Thus, $\int H_M \diff P_u - U(P) = 0$,
which implies $M \in \cC(P) \cap c^{-1}(\{0\})$.
\qed

\subsection{Proof of \cref{prop:kz_axm}}

Assume \axmref{axm:reg}, \axmref{axm:aepr}, \axmref{axm:irtc},
\axmref{axm:eapr}, and \axmref{axm:sica}.
Choose any $(\lambda, (f, g)) \in (0, 1) \times \cF^2$.
If $\Dirac[f] \wpr \Dirac[g]$,
then $\Dirac[\lambda f + (1 - \lambda)g]
\wpr \lambda \Dirac[f] + (1 - \lambda)\Dirac[g]
\wpr \Dirac[g]$ by \axmref{axm:aepr} and \axmref{axm:eapr}.
Also, by \axmref{axm:irtc} and \axmref{axm:sica},
$\Dirac[f] \wpr \Dirac[g]$ if and only if
$\Dirac[\lambda f + (1 - \lambda)p]
\ipr \lambda \Dirac[f] + (1 - \lambda)\Dirac[p]
\wpr \lambda \Dirac[g] + (1 - \lambda)\Dirac[p]
\ipr \Dirac[\lambda g + (1 - \lambda)p]$ for each $p \in \lott(X)$.
\qed

\section{Machina's examples}\label{sec:machina}

\subsection{The 50--51 example}\label{subsec:5051}

Consider \citets{Machina2009} thought experiment
with a box containing $101$ balls, called the $50$--$51$ example.
Out of the balls, $50$ are either red or blue, and
$51$ are either green or purple.
A ball is drawn at random from the box.
The {\dm} is offered four acts $f_5$, \dots, $f_8$.
Each act pays off according to the color of the drawn ball,
as described in \cref{tab:5051}.

\begin{table}[h]
  \centering
  \renewcommand{\arraystretch}{1.3}
  \begin{tabularx}{80mm}{CCCCC}\toprule
          & \multicolumn{2}{c}{$50$ balls} & \multicolumn{2}{c}{$51$ balls} \\
          \cmidrule(lr){2-3}\cmidrule(lr){4-5}
          & Red   & Blue  & Green  & Purple \\ \hline
    $f_5$ & $200$ & $200$ & $100$  & $100$  \\
    $f_6$ & $200$ & $100$ & $200$  & $100$  \\
    $f_7$ & $300$ & $200$ & $100$  & $0$    \\
    $f_8$ & $300$ & $100$ & $200$  & $0$    \\ \bottomrule
  \end{tabularx}
  \caption{The $50$--$51$ example.}
  \label[table]{tab:5051}
\end{table}

The only difference between $f_5$ and $f_6$, as well as $f_7$ and $f_8$, is
which draw of a blue or green ball leads to a higher payoff of $200$.
Due to the $51^\text{st}$ ball,
when the likelihood of blue and green is assessed,
$f_6$ and $f_8$ have a slight ``objective advantage''
compared to $f_5$ and $f_7$, respectively.
\citet{Machina2009} conjectures that plausible choices are
$f_5 \spr f_6$ and $f_8 \spr f_7$.
The first preference is motivated by the fact that
$f_6$ is more ambiguous than $f_5$---%
ambiguity aversion offsets the objective advantage of $f_6$.
The second might arise
because both $f_7$ and $f_8$ are ambiguous---%
$f_7$ does not have an informational advantage as $f_5$ does.
However, in most ambiguity-sensitive models
(the {\meu} model, the $\alpha$-{\meu} model \citep{GMM2004},
the Choquet expected utility model \citep{Schmeidler1989},
the variational model \citep{MMR2006Ecta},
the uncertainty averse model \citep{CMMM2011}, and
the smooth ambiguity model \citep{KMM2005}),
$f_5 \spr f_6$ implies $f_7 \spr f_8$ \citep{Machina2009,BLP2011}.
In \cref{subsec:dseu}, we further extend the incomparability
to preferences satisfying \textit{certainty independence}.

In a coeval work by \citet{PTX2026},
they provide an example of an optimal ambiguity attitude preference
that rationalizes \citets{Machina2009} conjecture.
Since their model is a special case of ours,
the costly ambiguity perception model can also rationalize the conjecture.
Here, to obtain a clearer intuition,
we further examine the relationship between our model and the choices
using a more general parametrization, as in \cref{ex:ref}.
Let $p = 50/101$, let $q = 1 - p$,
let $B = [-p/2, p/2]$, and let $G = [-q/2, q/2]$.
Identify each $(\mu_\rmB, \mu_\rmG) \in [0, p] \times [0, q]$
with the prior such that
the probability of drawing blue is $\mu_\rmB$ and drawing green is $\mu_\rmG$.
For each $(\beta, \gamma) \in [0, 1]^2$,
let $M(\beta, \gamma) = \set{(p/2 + \beta b, q/2 + \gamma g)
\mvert (b, g) \in B \times G}$.
Let $\bM = \set{M(\beta, \gamma) \mvert (\beta, \gamma) \in [0, 1]^2}$,
let $c \colon [0, 1]^2 \to \SR_+$ be
a strictly decreasing lower semicontinuous grounded function, and
let $\theta \in (0, \infty)$.
Define $\phi_\theta \colon \SR \to \SR$ by
$\phi_\theta(t) = \min_{(\beta, \gamma) \in [0, 1]^2}
[t(p\beta + q\gamma) + \theta c(\beta, \gamma)]$.
Let $U$ be the utility function over acts
corresponding to the cost structure $(\bM, \theta c)$.
Then,
\begin{align}
  U(f_5)
  &= 200p + 100q = 100(1 + p), \\
  U(f_6)
  &= \max_{(\beta, \gamma) \in [0, 1]^2} \Bigl[100
    \min_{(b, g) \in B \times G} \Bigl(\frac{3}{2} - \beta b + \gamma g\Bigr)
    - \theta c(\beta, \gamma)\Bigr]
  = 150 - \phi_\theta(50), \\
  U(f_7)
  &= \max_{(\beta, \gamma) \in [0, 1]^2} \Bigl[100
    \min_{(b, g) \in B \times G}
    \Bigl(\frac{5}{2}p + \half q - \beta b + \gamma g\Bigr)
    - \theta c(\beta, \gamma)\Bigr]
  = 50 + 200p - \phi_\theta(50), \\
  U(f_8)
  &= \max_{(\beta, \gamma) \in [0, 1]^2} \Bigl[100
    \min_{(b, g) \in B \times G} (2p + q - 2\beta b + 2\gamma g)
    - \theta c(\beta, \gamma)\Bigr]
  = 100 + 100p - \phi_\theta(100).
\end{align}
Thus, $f_5 \wpr f_6$ if and only if $\phi_\theta(50) \ge 50 - 100p$, and
$f_7 \wpr f_8$ if and only if
$\phi_\theta(100) - \phi_\theta(50) \ge 50 - 100p$.
Since $\phi_\theta$ is concave,
we have $\phi_\theta(100) \le 2\phi_\theta(50)$.
Since $\theta \mapsto \phi_\theta(t)$ is concave and
$\phi_0(t) = 0 = \lim_{\theta \to 0} \phi_\theta(t)$ for each $t \in \SR_+$,
there exists $\bar \theta > 0$ such that $\theta < \bar \theta$ implies
$\phi_\theta(100) - \phi_\theta(50) < 50 - 100p$ and $\phi_\theta(50) < 50$.
Moreover, since $\phi_\theta(50) < 50$ implies
$2\phi_\theta(50) > \phi_\theta(100)$,
there exists $\ubar \theta > 0$ such that
$\theta \in (\ubar \theta, \bar \theta)$ implies
$\phi_\theta(100) - \phi_\theta(50) < 50 - 100p < \phi_\theta(50) < 50$,
in which case $f_5 \spr f_6$ and $f_8 \spr f_7$.

If $\theta$ is too small ($\theta \le \ubar \theta$),
then the {\dm} chooses small $(\beta, \gamma)$ at $f_6$,
so its objective advantage over $f_5$ due to the $51^\text{st}$ ball
dominates ambiguity aversion.
If $\theta$ is too high ($\theta \ge \bar \theta$),
then the {\dm} is very pessimistic at each ``cell'' of unambiguous events,
so she prefers $f_7$ to $f_8$.
Except for those two cases,
the objective advantage does not offset ambiguity aversion
when comparing $f_5$ and $f_6$,
while it dominates the pessimism when comparing $f_7$ and $f_8$.

\subsection{Dual-self expected utility model}\label{subsec:dseu}

We explore the implications of the dual-self expected utility model
on the behavior in Machina's examples.
As in \cref{ex:ref} and the example in \cref{subsec:5051},
for each example, identify each $(\mu_\rmB, \mu_\rmG)$
with the prior over colors such that
the probability of drawing blue is $\mu_\rmB$ and drawing green is $\mu_\rmG$.

The dual-self expected utility model can explain the pattern
$f_2 \spr f_1$ and $f_3 \spr f_4$ in the reflection example (\cref{tab:ref}).
For example, let $M_\rmB = \{1/4\} \times [0, 1/2]$,
let $M_\rmG = [0, 1/2] \times \{1/4\}$, and
let $\bM = \{M_\rmB, M_\rmG\}$.
Let $U$ be the utility function over acts corresponding to the set $\bM$
of feasible ambiguity perceptions.
Then,
\begin{alignat}{2}
  U(f_1)
  &= \max_{k \in \{\rmB, \rmG\}}
    \min_{(b, g) \in M_k} 100\Bigl(\half + b + g\Bigr)
  = 75, &\qquad
  U(f_2)
  &= \max_{k \in \{\rmB, \rmG\}}
    \min_{(b, g) \in M_k} 100\Bigl(\half + 2g\Bigr)
  = 100, \\
  U(f_3)
  &= \max_{k \in \{\rmB, \rmG\}}
    \min_{(b, g) \in M_k} 100\Bigl(2b + \half\Bigr)
  = 100, &\qquad
  U(f_4)
  &= \max_{k \in \{\rmB, \rmG\}}
    \min_{(b, g) \in M_k} 100\Bigl(b + g + \half\Bigr)
  = 75,
\end{alignat}
so $f_2 \spr f_1$ and $f_3 \spr f_4$.
In this example, each of $M_\rmB$ and $M_\rmG$ is an ambiguity perception
at which the {\dm} perceives ambiguity
only about one of red--blue or green--purple.
Since the {\dm} cannot reduce ambiguity perception
about both of those two dimensions at the same time,
the ``more ambiguous'' acts $f_1$ and $f_4$ are evaluated lower.
This parameterization is compatible with the typical Ellsberg-pattern
as it is absolutely ambiguity-averse
from \cref{cor:absaa} or Proposition 2 of \citet{CFIL2022}.

In contrast, it fails to accommodate $f_5 \spr f_6$ and $f_8 \spr f_7$
in the $50$--$51$ example (\cref{tab:5051}).
Suppose that $\wpr$ has a dual-self expected utility representation $(u, \bM)$.
Define auxiliary acts $g$, $h$, and $p$ as described in \cref{tab:aux}.
Then,
\begin{equation}
  f_5 = \frac{1}{3}g + \frac{2}{3}p, \qquad
  f_6 = \frac{1}{3}h + \frac{2}{3}p, \qquad
  f_7 = \frac{2}{3}g + \frac{1}{3}h, \qquad
  f_8 = \frac{1}{3}g + \frac{2}{3}h.
\end{equation}
Suppose $f_5 \spr f_6$.
Since $\wpr$ satisfies the \textit{certainty independence} axiom
\citep[Theorem 1]{CFIL2022}, we have $g \spr h$.
Thus, since for each $\lambda \in [0, 1]$,
the corresponding utility function $U$ over acts satisfies
\begin{align}
  U(\lambda g + (1 - \lambda)h)
  &= \max_{M \in \bM} \min_{\mu \in M} \Bigl[
    \lambda \int u \cmpf g \diff \mu
    + (1 - \lambda)\int u \cmpf h \diff \mu\Bigr] \\
  &= \max_{M \in \bM} \min_{\mu \in M} \Bigl[
    \lambda \Bigl(300 \times \frac{50}{101}\Bigr)
    + (1 - \lambda)\int u \cmpf h \diff \mu\Bigr] \\
  &= \lambda \Bigl(300 \times \frac{50}{101}\Bigr)
    + (1 - \lambda)\max_{M \in \bM} \min_{\mu \in M} \int u \cmpf h \diff \mu
  = \lambda U(g) + (1 - \lambda)U(h),
\end{align}
we have $f_7 \spr f_8$.

\begin{table}[h]
  \centering
  \renewcommand{\arraystretch}{1.3}
  \begin{tabularx}{80mm}{CCCCC}\toprule
        & \multicolumn{2}{c}{$50$ balls} & \multicolumn{2}{c}{$51$ balls} \\
        \cmidrule(lr){2-3}\cmidrule(lr){4-5}
        & Red   & Blue  & Green  & Purple \\ \hline
    $g$ & $300$ & $300$ & $0$    & $0$    \\
    $h$ & $300$ & $0$   & $300$  & $0$    \\
    $p$ & $150$ & $150$ & $150$  & $150$  \\ \bottomrule
  \end{tabularx}
  \caption{Auxiliary acts.}
  \label[table]{tab:aux}
\end{table}

\bibliographystyle{refs}
\bibliography{refs}

\end{document}